\let\oldnl\nl
\newcommand{\nonl}{\renewcommand{\nl}{\let\nl\oldnl}}
\declaretheoremstyle[bodyfont=\it,qed=\qedsymbol]{noproofstyle}
\declaretheorem[numberlike=equation]{observation}
\declaretheorem[name=Observation,numbered=no]{observation*}
\declaretheorem[numberlike=equation]{theorem}
\declaretheorem[name=Theorem,numbered=no]{theorem*}
\declaretheorem[numberlike=equation]{lemma}
\declaretheorem[name=Lemma,numbered=no]{lemma*}
\declaretheorem[numberlike=equation]{corollary}
\declaretheorem[name=Corollary,numbered=no]{corollary*}
\declaretheorem[name=Proposition,numbered=no]{proposition*}
\declaretheorem[numberlike=equation,name=Claim]{claim}
\declaretheorem[name=Claim,numbered=no]{claim*}
\declaretheorem[name=Conjecture,numbered=no]{conjecture*}
\declaretheorem[name=Question,numbered=no]{question*}
\declaretheoremstyle[bodyfont=\it,qed=$\lozenge$]{defstyle} 
\declaretheorem[numberlike=equation,style=defstyle]{definition}
\declaretheorem[unnumbered,name=Definition,style=defstyle]{definition*}
\declaretheorem[unnumbered,name=Example,style=defstyle]{example*}
\declaretheorem[unnumbered,name=Notation=defstyle]{notation*}
\declaretheorem[unnumbered,name=Construction,style=defstyle]{construction*}
\declaretheorem[unnumbered,name=Remark,style=defstyle]{remark*}
\newcommand{\Res}{\operatorname{Res}}
\newcommand{\thref}[1]{\hyperref[#1]{Theorem \ref{#1}}}
\newcommand{\lemref}[1]{\hyperref[#1]{Lemma \ref{#1}}} 
\newcommand{\shortECCC}[2]{\texttt{\href{http://eccc.hpi-web.de/report/\ifnumcomp{#1}{>}{93}{19}{20}#1/#2/}{eccc:TR#1-#2}}}
\newcommand{\parseECCC}[1]{
\StrSubstitute{#1}{TR}{}[\tmpstring]%
\IfSubStr{\tmpstring}{/}{ 
\StrBefore{\tmpstring}{/}[\ecccyear]%
\StrBehind{\tmpstring}{/}[\ecccreport]%
}{
\StrBefore{\tmpstring}{-}[\ecccyear]%
\StrBehind{\tmpstring}{-}[\ecccreport]%
}%
\shortECCC{\ecccyear}{\ecccreport}}
\newcommand*\samethanks[1][\value{footnote}]{\footnotemark[#1]}
\newif\ifanon
\newif\ifdraft
\newcommand{\phnote}[1]{\todo[color=red!100!green!33,size=\footnotesize]{ph: #1}}
\newcommand{\sriprahladhuvacha}[1]{\todo[color=red!100!green!33,inline,size=\small]{ph: #1}}
\newcommand{\MKnote}[1]{\todo[color=orange,size=\footnotesize]{mk: #1}}
\newcommand{\mrinalsays}[1]{\todo[color=orange,inline,size=\small]{mk: #1}}
\newcommand{\SCnote}[1]{\todo[color=green!20!blue!40!white,size=\footnotesize]{sc: #1}}
\newcommand{\sohamsays}[1]{\todo[color=green!20!blue!40!white,inline,size=\small]{sc: #1}}
\newcommand{\phnote}[1]{}
\newcommand{\sriprahladhuvacha}[1]{}
\newcommand{\SCnote}[1]{}
\newcommand{\MKnote}[1]{}
\newcommand{\mrinalsays}[1]{}
\newcommand{\sohamsays}[1]{}
\newcommand{\ignore}[1]{}
\renewcommand{\epsilon}{\varepsilon}
\renewcommand{\hat}{\widehat}
\newcommand{\CombSplits}{\textsc{Combine-Splits}}
\newcommand{\Split}{\textsc{Split}}
\crefname{claim}{Claim}{Claims}
\title{Deterministic list decoding of Reed-Solomon codes}
\author{Soham Chatterjee\thanks{Tata Institute of Fundamental Research, Mumbai, India. \href{mailto:soham.chatterjee@tifr.res.in}{\{soham.chatterjee}, \href{mailto:prahladh@tifr.res.in}{prahladh}, \href{mailto:mrinal@tifr.res.in}{mrinal\}@tifr.res.in}. Research supported by the Department of Atomic Energy, Government of India, under project number RTI4014, and partially by reserach grants from SERB, Google research and Premji Invest.}
\and
 {Prahladh Harsha\samethanks}
\and
 {Mrinal Kumar\samethanks}
 }
\date{}
\begin{document}

\maketitle
\begin{abstract}
  We show that Reed-Solomon codes of dimension $k$ and block length $n$  over any finite field $\F$ can be \emph{deterministically} list decoded from agreement $\sqrt{(k-1)n}$ in time $\poly(n, \log |\F|)$. 
  
  Prior to this work, the list decoding algorithms for Reed-Solomon codes, from the celebrated results of Sudan and Guruswami-Sudan, were either randomized with time complexity $\poly(n, \log |\F|)$ or were deterministic with time complexity depending polynomially on the characteristic of the underlying field. In particular, over a prime field $\F$, no deterministic algorithms running in time $\poly(n, \log |\F|)$ were known for this problem.

  Our main technical ingredient is a deterministic algorithm for solving the bivariate polynomial factorization instances that appear in the algorithm of Sudan and  Guruswami-Sudan with only a $\poly(\log |\F|)$ dependence on the field size in its time complexity for every finite field $\F$. While the question of obtaining efficient deterministic algorithms for polynomial factorization over finite fields is a fundamental open problem even for \emph{univariate} polynomials of degree $2$, we show that additional information from the received word can be used to obtain such an algorithm for instances that appear in the course of list decoding Reed-Solomon codes.
\end{abstract}

\tableofcontents

\section{Introduction}

Reed--Solomon (RS) codes are among the most important and well-studied families of error-correcting codes, notable for their algebraic simplicity and optimal trade-off between rate and distance. Their simple algebraic structure lends to extremely efficient encoding and decoding algorithms, and RS codes are widely used both in theory and practice. An \([n,k]\) RS code over a finite field \(\mathbb{F}\) encodes a message polynomial \(f(X)\) of degree strictly less than \(k\) by evaluating it at \(n\) distinct field elements \(\alpha_1, \dots, \alpha_n\), yielding the codeword
\[
(f(\alpha_1), f(\alpha_2), \dots, f(\alpha_n)) \in \mathbb{F}^n.
\]
These codes attain the Singleton bound and have distance exactly \(n-k+1\). 

\paragraph{Unique and list decoding.}
In the \emph{unique decoding problem}, given a received word, the goal is to find the unique codeword, if it exists, within a Hamming ball of radius half the minimum distance, which for RS codes equals \((n-k+1)/2\). Classical algorithms such as those of Berlekamp-Welch, Berlekamp-Massey and Peterson efficiently achieve this bound.
However, when the corruption exceeds this radius, multiple codewords may have significant agreement with the received word.  
The \emph{list decoding} problem, introduced by Elias \cite{Elias1957} and Wozencraft \cite{Wozencraft1958}, relaxes this requirement of uniqueness: instead of demanding uniqueness, the decoder outputs all codewords that are sufficiently close to the received word. 

A sequence of breakthroughs, beginning with Sudan's algorithm~\cite{Sudan1997} and refined by Guruswami and Sudan~\cite{GuruswamiS1999}, established that RS codes can be efficiently list decoded from agreement roughly \(\sqrt{(k-1)n}\), corresponding to the Johnson radius, which is significantly larger than the half-the-minimum distance. These algorithms hinge on two key algebraic steps: (1) constructing a low-degree bivariate polynomial that vanishes at all received points with multiplicity, and (2) \emph{factorizing} this polynomial to recover all potential message polynomials.
 
\paragraph{Derandomization and the factorization barrier.}
Both the Sudan and Guruswami--Sudan algorithms rely on the factorization of bivariate polynomials over finite fields.
The known algorithms for polynomial factorization include algorithms for univariate polynomials due to Berlekamp~\cite{Berlekamp1967-factoring,Berlekamp1970} and Cantor--Zassenhaus~\cite{CantorZ1981}, and algorithms for multivariates developed in the 1980s (e.g. Kaltofen~\cite{Kaltofen1985}, Lenstra~\cite{Lenstra1985}). These algorithms run in polynomial time\footnote{For the multivariate setting, the input and the output polynomials are generally succinctly represented by algebraic circuits. } and are highly efficient in practice, but are unfortunately \emph{randomized}. Despite decades of effort, obtaining efficient deterministic polynomial-time algorithms for polynomial factorization over finite fields\footnote{Over the field of rational numbers, deterministic polynomial time algorithms for univariate factorization were designed in the work of Lenstra, Lenstra \& Lovasz \cite{LenstraLL1982}. } remains a fundamental open problem in computational algebra.

Even in the univariate case, where one seeks to factor a degree-\(d\) polynomial \(f \in \mathbb{F}_q[X]\), all known deterministic algorithms either have running time polynomial in the \emph{characteristic} of the field~\cite{Berlekamp1967-factoring,Berlekamp1970} or apply only to restricted classes of polynomials or work under unproven assumptions.
For instance, while it is known that Shoup's improved deterministic factorization algorithm \cite{Shoup1990} is efficient (i.e., takes time \(\poly(d,|\log|\F|)\)) for \emph{most} polynomials, the best provable worst-case guarantee is \(\sqrt{p}\cdot \poly(d,\log |\F|)\). Several deterministic algorithms based on the Generalized Riemann Hypothesis (GRH) have also been developed; however, these approaches not only rely on an unproven assumption but, like Shoup's algorithm, fail to handle \emph{all} polynomials. In short, unconditional deterministic polynomial-time factorization over arbitrary finite fields remains elusive, even for quadratic polynomials.  

This limitation directly impacts derandomization efforts in list decoding; since both Sudan and Guruswami--Sudan algorithms use the factorization step. As in the general polynomial factorization setting, all known deterministic algorithms for the list-decoding RS codes run in time polynomial in \(n\) and in the characteristic of the underlying field. Consequently, when the characteristic is polynomial in \(n\) (as is the case in most coding theoretic settings), these algorithms do yield a fully deterministic list-decoding procedure. However, no truly deterministic algorithm with running time polynomial in \((n,\log |\F|)\) was known for all settings of \(n\) and the finite field \(\F\). In particular, even for prime fields \(\mathbb{F}_p\) with \(p \gg n\) (superpolynomial), the existence of such a deterministic algorithm remained open.

\paragraph{Our results.}
We give the first \emph{deterministic list-decoding algorithms for RS codes} up to the Johnson radius that run in time polynomial in both the block length and logarithm of the field size.  

More precisely, we prove the following theorem. 
\begin{theorem}\label{thm:intro-main}
  There is a deterministic algorithm that for every finite field $\F$ and parameters $n > k \in \N$ runs in  time $\poly(n, \log |\F|)$ and list decodes Reed-Solomon codes of block length $n$ and dimension $k$ over $\F$ from agreement greater than $\sqrt{(k-1)n}$.
\end{theorem}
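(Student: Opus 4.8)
The plan is to run the Guruswami--Sudan algorithm and to derandomize only its factorization step, using the received word itself as a source of ``hints'' that substitute for randomness. I would begin with a reduction to the large-characteristic case: if $\operatorname{char}(\F)$ is at most a fixed polynomial in $n$, the classical deterministic list decoders (running in time polynomial in $n$ and in $\operatorname{char}(\F)$) already run in time $\poly(n,\log|\F|)$, so I may assume $\operatorname{char}(\F)$ exceeds every degree that will appear below. This removes all characteristic-related pathologies (inseparability, vanishing derivatives, Frobenius issues).

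Next is the interpolation step, which is unchanged from Guruswami--Sudan and is already deterministic. Fix a multiplicity $r$ and a $(1,k-1)$-weighted degree bound $D$ with $D < r\cdot t$ (where $t > \sqrt{(k-1)n}$ is the target agreement) and with the number of monomials of weighted degree $< D$ exceeding $n\binom{r+1}{2}$; then a nonzero $Q(X,Y)$ of weighted degree $< D$ vanishing to order $\ge r$ at every $(\alpha_i,y_i)$ exists, and every $f$ with $\deg f < k$ and agreement $> t$ satisfies $(Y-f(X))\mid Q(X,Y)$. Computing such a $Q$ amounts to solving a homogeneous linear system over $\F$, which takes deterministic time $\poly(n,\log|\F|)$. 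It also follows (Johnson bound) that there are at most $\poly(n)$ such $f$, so the output list is small.

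The heart of the argument is to recover, deterministically, all $f$ with $\deg f < k$ and $(Y-f(X))\mid Q(X,Y)$. The point is that these are not arbitrary bivariate factorization instances: $Q$ was constructed to vanish at the $n$ points $(\alpha_i,y_i)$, and each target $f$ passes through at least $t$ of them. I would first replace $Q$ by its square-free-in-$Y$ part (legitimate since $\operatorname{char}\F > \deg_Y Q$), reducing to extracting the degree-$<k$ $Y$-roots of a polynomial $P(X,Y)$ that is square-free in $Y$. Now use the received word as a seed: for a target $f$ and any index $i$ in its agreement set we know $f(\alpha_i) = y_i$, and $(\alpha_i,y_i)$ lies on $\{P=0\}$; whenever $y_i$ is a \emph{simple} root of $P(\alpha_i,Y)$, Hensel lifting (Newton iteration) continues $(\alpha_i,y_i)$ uniquely to a power-series root of $P$, and truncating it at precision $k$ recovers $f$ exactly because $\deg f < k$. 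Since the agreement set of $f$ is unknown, I would simply try all $i\in[n]$ and all square-free parts, reconstruct a candidate polynomial in each case, and output exactly those candidates $g$ for which $(Y-g)\mid Q$ and $g$ has agreement $> t$ with the received word; every such step is deterministic and runs in time $\poly(n,\log|\F|)$.

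The main obstacle is to guarantee that for \emph{every} target $f$ some index $i$ in its agreement set is actually a simple root of the relevant $P(\alpha_i,Y)$ --- and this is genuinely delicate precisely in the large-multiplicity regime needed to bring the agreement all the way down to the Johnson bound $\sqrt{(k-1)n}$ (rather than Sudan's weaker $\sqrt{2(k-1)n}$). A priori the curve $\{P=0\}$ may be singular at $(\alpha_i,y_i)$ for every $i$ in $f$'s agreement set, since the set of $X$-coordinates where $f$'s branch of $P$ meets another branch can have size up to the weighted degree of $P$, which is of order $r$ times the agreement. Pushing past this is exactly where one needs the full received word and not merely one point at a time. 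I would try to resolve it by a local analysis at each candidate seed $(\alpha_i,y_i)$ --- a Newton-polygon / branch-separation computation that peels the problem down to strictly smaller $Y$-degree, in which the residual step of ``choosing which branch continues $f$'' is forced by the remaining known values $y_j$ over $j$ in $f$'s agreement set, so that no univariate root-finding over $\F$ (the genuinely hard step, open even in degree $2$) is ever invoked. Once this local step is in place, the three stages compose to give a fully deterministic algorithm running in time $\poly(n,\log|\F|)$, proving the theorem.
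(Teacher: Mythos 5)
Your high-level plan --- derandomize Guruswami--Sudan by using the received word as a seed for lifting, and handle the singular-point obstruction by a local factorization procedure --- matches the paper's strategy, and you have correctly identified where the difficulty lives: in the multiplicity-$r$ regime, \emph{every} agreement point $(\alpha_i,y_i)$ of a target $f$ is a singular point of $\{Q=0\}$, so the ``try all $n$ seeds and Newton-iterate from a simple root'' step has no reason to succeed on any of them. (Passing to the square-free-in-$Y$ part does not help: it removes repeated factors of $Q$ in $\F[X,Y]$, but $Q(\alpha_i,Y)$ can still have a repeated root at $y_i$ whenever two distinct branches of $Q$ happen to collide there, which is exactly what the interpolation constraint forces.)

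The gap is that your resolution of that obstruction is asserted rather than proved. You say you would do a ``Newton-polygon / branch-separation computation'' that ``peels the problem down to strictly smaller $Y$-degree'' in a way that ``no univariate root-finding over $\F$ is ever invoked,'' but you do not say what that computation is, why it needs no root-finding, why it peels the $Y$-degree down, or why the recursion terminates in polynomially many steps with a polynomially-sized output set. That claim is precisely the theorem's entire technical content, and without it the proof does not close. The paper's answer, concretely, is: at each point $(\alpha_j,\beta_j)$ write $Q(\alpha_j,Y)=(Y-\beta_j)^{m_j}\widehat{P}_j(Y)$ with $\widehat{P}_j(\beta_j)\neq 0$; this is a \emph{coprime} factorization of the specialization obtained by pure division, requiring no root-finding, and therefore it can seed Hensel lifting to high $X$-adic precision. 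From the lifted pair one recovers a genuine bivariate factor of $Q$ by solving a linear system and taking a GCD (resultant argument), and this split strictly decreases $Y$-degree by a degree-accounting argument (the paper's \cref{cor:split-on-unstable-inputs}). Iterating over all $n$ points, one refines a set of factors until every factor is ``stable'' at every point; a potential-function argument bounds the number of refinement steps by $\deg_Y Q$. Crucially, the final extraction of $f$ is \emph{not} done by power-series truncation: after stabilization, for each surviving factor $g$ the set $U=\{j: g(\alpha_j,\beta_j)=0\}$ is shown to coincide exactly with the agreement set of the corresponding $f$, so $f$ is read off by Lagrange interpolation. Each of these steps replaces a ``would try'' in your write-up with a lemma, and none of them is routine, so the proposal as written has a genuine missing core.
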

It would be interesting if we can obtain near-linear time versions of these deterministic algorithms (a la Alekhnovich \cite{Alekhnovich2005}).

Our work demonstrates that even though general deterministic factorization remains elusive, the algebraic structure inherent in coding-theoretic instances can circumvent this technical issue. This result is an addition to the broader \emph{derandomization program} in computational complexity, offering yet another concrete setting where randomness can be provably removed without assuming any unproven hypotheses.

\paragraph{Techniques.}
Our main technical contribution is a new \emph{deterministic algorithm for bivariate polynomial factorization} in the structured setting that arises within the algebraic list-decoding framework.  
While the general problem of deterministic factorization remains open, we exploit additional structure available from the received word, specifically, knowledge of its evaluation pattern and multiplicities, to remove randomness in this restricted but highly relevant setting.  

This idea of exploiting additional information from the codeword also appears in the work of Kalev and Ta-Shma~\cite{KalevT2022} in their analysis of a multiplicity-code-based lossless condenser. Inspired by their approach, we can obtain the following derandomization of Sudan's algorithm.
\begin{theorem}\label{thm:intro-sudan}
There is a deterministic algorithm that, for every finite field~$\mathbb{F}$ and parameters $n,k \in \mathbb{N}$, runs in time $\mathrm{poly}(n, \log |\mathbb{F}|)$ and list decodes Reed--Solomon codes of block length~$n$ and dimension~$k$ over~$\mathbb{F}$ from agreement greater than $\sqrt{2(k-1)n}$.
\end{theorem}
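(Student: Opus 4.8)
The plan is to take Sudan's original list-decoding algorithm and replace only its randomized step---the bivariate factorization---by a deterministic procedure that exploits the evaluation points and the received word. Recall the structure of Sudan's algorithm: given the received word $(y_1,\dots,y_n)$ at the evaluation points $\alpha_1,\dots,\alpha_n$, one first interpolates a nonzero bivariate polynomial $Q(X,Y)$ of $(1,k-1)$-weighted degree less than some bound $D$ satisfying $Q(\alpha_i,y_i)=0$ for all $i\in[n]$; a counting argument shows such a $Q$ exists whenever $D$ is chosen appropriately, and it can be found deterministically by solving a homogeneous linear system. Then, for every message polynomial $f(X)$ of degree $<k$ whose codeword agrees with the received word in more than $D$ places, one argues that $(Y-f(X))$ divides $Q(X,Y)$; balancing the agreement threshold against $D$ gives the bound $\sqrt{2(k-1)n}$. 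The only place randomness enters is in extracting these linear factors $Y-f(X)$ of $Q$.

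The key step is therefore a deterministic algorithm that, given $Q(X,Y)$ together with the list of pairs $(\alpha_i,y_i)$, outputs all polynomials $f$ of degree $<k$ such that $(Y-f(X))\mid Q(X,Y)$ and $f$ has agreement above the threshold with the received word. Here is where I would use the additional structure, in the spirit of Kalev--Ta-Shma~\cite{KalevT2022}: rather than factoring $Q$ over $\F[X]$, restrict attention to a coordinate $i$ where $f(\alpha_i)=y_i$. The univariate polynomial $Q(\alpha_i,Y)\in\F[Y]$---assuming it is nonzero, which one can ensure by a change of variables or by discarding bad $i$'s---has $f(\alpha_i)=y_i$ among its roots, and it has degree at most $(D-1)/(k-1)$, i.e. a constant (independent of $|\F|$) number of roots. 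One does not need to factor $Q(\alpha_i,Y)$ to find these roots: the candidate values are already handed to us, namely the $y_j$'s, so we can simply test which field elements $\beta$ in our list satisfy $Q(\alpha_i,\beta)=0$. For each such candidate root $\beta$ at a chosen anchor point $\alpha_i$, we then recover the unique degree-$<k$ polynomial $f$ with $f(\alpha_i)=\beta$ that is ``consistent'' with $Q$---for instance by Hensel-lifting the factor $Y-\beta$ of $Q(\alpha_i,Y)$ to a factor of $Q(X,Y)$ modulo a sufficiently high power of $(X-\alpha_i)$ and then reading off the lift, which is deterministic---and finally verify that the resulting $f$ indeed has the required agreement. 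Iterating over all anchor points $\alpha_i$ (at most $n$ of them) and all candidate roots at each anchor (at most $n$, or really only the $O(1)$ that actually vanish) yields at most $\mathrm{poly}(n)$ candidates, each of which is checked in $\mathrm{poly}(n,\log|\F|)$ time.

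There are two points requiring care. First, Hensel lifting requires the factor being lifted to be coprime to the complementary factor of $Q(\alpha_i,Y)$, i.e. $\beta$ should be a simple root of $Q(\alpha_i,Y)$; if $Q(\alpha_i,Y)$ has repeated roots one can pass to a squarefree part, or perturb by differentiation, or argue that after removing the known factors the multiplicities can be handled separately---and crucially, since every genuine message polynomial $f$ in the output list gives a distinct linear factor $Y-f(X)$ of $Q$, these factors are pairwise coprime as polynomials in $\F[X,Y]$, so for a generic anchor $\alpha_i$ their specializations $Y-f(\alpha_i)$ stay distinct; a pigeonhole/counting argument over the $n$ anchors shows a good anchor exists for each target $f$. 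Second, one must be careful that $Q(\alpha_i,Y)\not\equiv 0$, which can fail if $(X-\alpha_i)$ divides $Q$; such factors can be stripped off at the start (there are at most $D/(k-1)$ of them and they contribute no message polynomials), or avoided by working at a non-root anchor. I expect the main obstacle to be this bookkeeping around multiplicities and degenerate anchors---making sure that for \emph{every} codeword in the output list there is at least one anchor point at which the Hensel lift is well-defined and correctly reconstructs that codeword---rather than any deep new idea; the conceptual content is simply that the received word itself supplies the roots that a factorization algorithm would otherwise have to search for. Finally, the $\sqrt{2(k-1)n}$ bound is inherited verbatim from Sudan's parameter analysis, since we have changed only the factorization subroutine and not the interpolation step.
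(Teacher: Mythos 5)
Your overall plan matches the paper's: keep Sudan's interpolation step (deterministic by Gaussian elimination), and replace the randomized bivariate factorization by deterministically seeding Newton/Hensel lifting from the received-word points $(\alpha_j, y_j)$, since for each target $f$ in the list one of these is guaranteed to lie on $Y = f(X)$. Up to that point you have the right idea.

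The gap is in the multiplicity case, and it is not ``just bookkeeping.'' Newton/Hensel lifting from the anchor $(\alpha_j, y_j)$ requires $Q^{(0,1)}(\alpha_j, y_j) \neq 0$, i.e.\ $y_j$ must be a \emph{simple} root of $Q(\alpha_j, Y)$. You try to dispose of the repeated-root case via a pigeonhole/coprimality argument: since the linear factors $Y - f(X)$ are pairwise coprime, their specializations stay distinct at a generic anchor. But this only protects against collisions between two distinct linear factors; it says nothing about $Q$ having $(Y-f(X))^m$ with $m \ge 2$ as a factor. That case is perfectly possible inside Sudan's degree budget (for instance $Q = (Y-f(X))^2 \cdot R$ with $2(k-1) \le \sqrt{2(k-1)n}$), and when it happens $Q^{(0,1)}(\alpha_j, f(\alpha_j)) = 0$ at \emph{every} $\alpha_j$, agreement point or not. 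No choice of anchor helps, and ``passing to a squarefree part'' of the univariate $Q(\alpha_j,Y)$ does not give you a bivariate polynomial to lift against. In fact one can check that $Q^{(0,1)}$ vanishing at all agreement points of $f$ is \emph{equivalent} to $(Y-f(X))^2 \mid Q$, so this is exactly the obstruction.

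The paper's missing ingredient here is the derivative recursion: if $Q^{(0,1)}(\alpha_j, y_j) = 0$ at every agreement point of $f$, then $Q^{(0,1)}(X, f(X))$ is a univariate of degree at most the $(1,k-1)$-degree of $Q$, hence at most $\sqrt{2(k-1)n}$, which vanishes at more than $\sqrt{2(k-1)n}$ points, so it is identically zero, i.e.\ $(Y-f(X)) \mid Q^{(0,1)}(X,Y)$. One then recurses on $Q^{(0,1)}$, whose $Y$-degree is strictly smaller while its $(1,k-1)$-degree does not grow, so the same vanishing argument keeps applying. This is precisely the step your sketch is missing. You do mention ``perturb by differentiation'' in a list of possible fixes, which is pointing in the right direction, but you neither identify it as the essential case nor supply the degree-versus-vanishing argument showing $(Y-f) \mid Q^{(0,1)}$ and justifying the recursion. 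Without it, the proposal does not handle all $Q$'s produced by the interpolation step.
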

Such a derandomization of Sudan's algorithm for Reed-Solomon codes (and their algebraic geometric variants due to Shokrollahi and Wasserman \cite{ShokrollahiW1999}) was discovered by Augot and Pecquet \cite{AugotP2000}\footnote{In an earlier version of this paper, we stated that such a derandomization was not previously known. We were unaware of the work of Augot and Pecquet \cite{AugotP2000} at that time. However, their approach is limited to the Sudan algorithm and does not extend to yield a derandomization of the Guruswami-Sudan algorithm,  which is the main focus of our work.}. Our derandomized variant of Sudan's algorithm employs Newton's iteration\footnote{Kalev-TaShma used a similar approach to solve linear differential equations in \cite{KalevT2022}.} to find a root $(Y-f(X))$ of a bivariate polynomial $Q(X,Y)$. Typical applications of Newton's iteration use a point $(\alpha,\beta)$ satisfying $Q(\alpha,\beta)=0$ and $\frac{\partial Q}{\partial Y}(\alpha,\beta)\neq 0$. This point is typically obtained by choosing an appropriate $\alpha$, and then using randomized univariate factorization algorithms to obtain a root $\beta$ of $Q(\alpha, Y)$. To do this factorization deterministically in the list-decoding setting, we note that at all the points $(\alpha_j,\beta_j)$ in the received word, $Q$ satisfies $Q(\alpha_j,\beta_j)=0$. If furthermore, one of these points satisfy $\frac{\partial Q}{\partial Y}(\alpha,\beta)\neq 0$, we can avoid the \emph{random} search for a suitable $(\alpha, \beta)$. Otherwise, if all points $(\alpha_j,\beta_j)$ of agreement of $f$ with the codeword satisfy $\frac{\partial Q}{\partial Y}(\alpha,\beta)= 0$, we can use this to show that $(Y-f(X))$ is a root of $\frac{\partial Q}{\partial Y}(X,Y)$, a lower-degree polynomial. Applying this idea recursively, yields the above deterministic variant of Sudan's algorithm.

However, this idea alone is insufficient to derandomize the Guruswami--Sudan algorithm: in both Sudan's list decoding and Kalev--Ta-Shma's condenser, the multiplicity used in the interpolation step is \(m=1\), whereas in the Guruswami--Sudan algorithm the multiplicity is considerably higher.

We follow an alternate approach using Hensel lifting instead of Newton's iteration. Standard applications of Hensel lifting, like Newton's iteration, are seeded using an initial random step which we cannot afford. To circumvent this, we do the following: for each codeword point $(\alpha_j,\beta_j)$, we first deterministically factorize the interpolation polynomial $Q(\alpha_j,Y)$ into coprime factors as follows: $Q(\alpha_j,Y) = (Y-\beta_j)^{m_j}\cdot \hat{P_j}(Y)$ for some $\hat{P_j}(Y)$ such that $\hat{P_j}(\beta_j)\neq 0$. We use this initial condition as a base case for the Hensel lifting and obtain a non-trivial local factorization of $Q$ corresponding to each  point $(\alpha_j,\beta_j)$ in the received word.  We then apply this local factorization recursively to the set of factors obtained, till the set stabilizes. We then show how to obtain the desired factors $(Y-f(X))$ corresponding to the polynomials $f$ in the list from the stabilized factorization. 

A detailed overview of these deterministic algorithms is presented in \cref{sec:overview}.

\section{Proof overview}\label{sec:overview}

In this section, we outline the main ideas behind the proof of \cref{thm:intro-main}. To set the stage, we first recall the structure of the list-decoding algorithms of Sudan~\cite{Sudan1997} and Guruswami--Sudan~\cite{GuruswamiS1999}. Let $w = \{(\alpha_j, \beta_j)\}_{j \in [n]}$ denote the received word, where $\alpha_1, \dots, \alpha_n \in \F$ are distinct evaluation points. The algorithms depend on parameters $m, D, t$, to be fixed later, and output a list $L$ of degree-$(k-1)$ univariate polynomials $f(X)$ that agree with $w$ on at least $t$ points. Both algorithms share the same two-step structure:

\begin{enumerate}
  \item \textbf{Interpolation.} 
  Find a nonzero polynomial $Q(X,Y)$ of $(1,k-1)$-degree\footnote{The $(a,b)$-degree of a monomial $x^iy^j$ is defined to be equal to $(ai + bj)$. The $(a,b)$-degree of a polynomial $Q$ is the maximum of the $(a,b)$-degrees of the non-zero monomials in the polynomial.} at most $D$ that vanishes at each $(\alpha_j,\beta_j)$ with multiplicity at least $m$. 
  (Formally, all Hasse derivatives of order less than $m$ vanish at each point.)
  \item \textbf{Factorization.} 
  Factor $Q(X,Y)$ over $\F$; for each factor of the form $(Y-f(X))$, output $f$ if $\deg f < k$ and $f$ agrees with $w$ on at least $t$ positions.
\end{enumerate}

The distinction between the two algorithms lies in the choice of parameters. 
Sudan's algorithm sets $m = 1$ and $D, t \approx \sqrt{2(k-1)n}$, 
while Guruswami--Sudan sets $m = \poly(n)$ 
to decode from agreement $t = \sqrt{(k-1)n}$, with $D \approx m\sqrt{(k-1)n}$. 
The use of higher multiplicities, one of the key innovations of~\cite{GuruswamiS1999}, has since become a standard tool in algebraic coding theory~\cite{Saraf2011}.

The interpolation step reduces to solving a homogeneous linear system in the coefficients of $Q$. The parameters are chosen so that the number of constraints is smaller than the number of variables and hence a nonzero solution always exists and can be found deterministically in $\poly(n, \log q)$ time by Gaussian elimination.

Randomness enters only in the \emph{factorization} step, which employs standard algorithms for bivariate polynomial factorization over $\F$. 
A natural approach to derandomizing the decoding algorithm would be to design a deterministic $\poly(d, \log |\F|)$-time algorithm for factoring arbitrary bivariate polynomials of degree $d$ over a finite field $\F$. 
However, such an algorithm is unknown even for univariate quadratics. 
The factorization instances that arise in the decoding algorithms of~\cite{Sudan1997,GuruswamiS1999} exhibit additional structure: the desired factors are always linear in $Y$ (of the form $(Y-f(X))$ for a univariate $f$ of degree less than $k$), and we have access to noisy evaluations of $f$. 
We show that this additional structure can be exploited to eliminate randomness efficiently.

The techniques differ slightly between the two parameter settings, and we first describe the simpler case corresponding to Sudan's algorithm before turning to the Guruswami--Sudan algorithm.

\subsection{Derandomizing Sudan's decoder}

As mentioned in the introduction, a derandomization of Sudan's algorithm was discovered earlier by Augot and Pecquet \cite{AugotP2000} using a Hensel lifting approach. We find it constructive to recall their result in our framework using Newton's iteration, as it provides a natural stepping stone toward the more general derandomization of the Guruswami-Sudan decoder.
In Sudan's algorithm, the interpolation step outputs a polynomial $Q(X,Y)$ that vanishes on each $(\alpha_j,\beta_j)$ and has $(1,k-1)$-degree less than $\sqrt{2(k-1)n}$. 
The analysis observes that for any polynomial $f(X)$, the univariate $Q(X,f(X))$ vanishes at every $\alpha_j$ where $f(\alpha_j)=\beta_j$. 
Hence, if the agreement between $f$ and the received word exceeds the degree of $Q(X,f(X))$ (bounded by the $(1,k-1)$-degree of $Q$), then $Q(X,f(X))$ must be identically zero or equivalently, $f(X)$ is a $Y$-root of $Q(X,Y)$. 
To recover such an $f$ from $Q$, we employ the classical Newton iteration technique, stated below.

\begin{restatable}[Newton Iteration~{\cite[Chapter 9]{GathenG-MCA}}]{lemma}{NI}\label{lem:NI}
  Let $\mathcal{R}=\F[X]$ be a polynomial ring and let $Q(X,Y)\in\mathcal{R}[Y]$.
  Suppose $\varphi\in\F\llbracket X\rrbracket$ satisfies $Q(X,\varphi)\equiv 0\bmod{\langle X\rangle^m}$ and $Q^{(0,1)}(0,\varphi(0))\neq 0$. 
  Then 
  \[
    \varphi':= \varphi- \frac{Q(X,\varphi)}{ Q^{(0,1)}(0,\varphi(0))}
  \]
  satisfies $Q(X,\varphi')\equiv 0\bmod{\langle X\rangle^{m+1}}$ and $\varphi'\equiv \varphi\bmod{\langle X\rangle^m}$. 
  Moreover, this extension is unique: any $\varphi''$ satisfying 
  $Q(X,\varphi'')\equiv 0\bmod{\langle X\rangle^{m+1}}$ and $\varphi''\equiv \varphi\bmod{\langle X\rangle^m}$ must also satisfy $\varphi''\equiv \varphi'\bmod{\langle X\rangle^m}$.
\end{restatable}

In our setting, the lemma implies that if we can find $(\alpha,\beta)\in\F^2$ such that $f(\alpha)=\beta$ and $Q^{(0,1)}(\alpha,\beta)\neq 0$, then starting from 
\[
  Q(X,\beta)\equiv 0\bmod{\langle X-\alpha\rangle},
\]
we can iteratively lift modulo powers of $\langle X-\alpha\rangle$ using \cref{lem:NI} to recover $f$. 
This process is fully deterministic and can be implemented in $\poly(\deg Q,\log|\F|)$ time over any finite field. 
Thus, recovering $f$ reduces to finding a point $(\alpha,\beta)$ such that $f(\alpha)=\beta$ and $Q^{(0,1)}(\alpha,\beta)\neq 0$.

Given the received word $w$, there are many candidate points $(\alpha_j,\beta_j)$ where $f(\alpha_j)=\beta_j$. 
Although we do not know which coordinates correspond to actual agreements, we can simply try all $n$ possibilities. 
If some $j$ satisfies $Q^{(0,1)}(\alpha_j,\beta_j)\neq 0$, then we can deterministically recover $f$ via \cref{lem:NI}.

The remaining case is for those close enough polynomials $f$ such that for every point of agreement $(\alpha_j,\beta_j)$, the derivative $Q^{(0,1)}(\alpha_j,\beta_j)=0$.  In this situation, $f$ is in fact a $Y$-root of $Q^{(0,1)}(X,Y)$.  The $Y$-degree of $Q^{(0,1)}$ is strictly smaller than that of $Q$, while its $(1,k-1)$-degree does not increase in the process.  Define $R_1(X)=Q^{(0,1)}(X,f(X))$. By assumption, $R_1(\alpha_j)=0$ at all points of agreement of $f$, implying $R_1(X)\equiv 0$.  Hence, $f(X)$ is also a $Y$-root of $Q^{(0,1)}(X,Y)$.

We now repeat the same reasoning with $Q^{(0,1)}$ in place of $Q$. 
If there exists $(\alpha_j,\beta_j)$ satisfying the conditions of \cref{lem:NI}, we recover $f$; 
otherwise, $f$ must also be a $Y$-root of $Q^{(0,2)}(X,Y)$, and so on. 
Consequently, the decoder simply tries all partial derivatives $Q^{(0,i)}(X,Y)$ for successive $i$ and all candidate points $(\alpha_j,\beta_j)$. 
For each pair, it invokes the Newton iteration subroutine. 
Correctness follows from the observation that for every codeword $f$ close enough to $w$, at least one such invocation succeeds. 
The technical details are deferred to \cref{sec:sudan}.

\subsection{Derandomizing the Guruswami--Sudan decoder}\label{sec:po-gs}

Given the previous section, it is natural to ask whether iterating over the $Y$-derivatives of $Q$ also works in the Guruswami--Sudan parameter regime. This direct approach fails. To decode from agreement $\sqrt{(k-1)n}$, the interpolation multiplicity is set to $m = \poly(n)$ 
, so $Q$ vanishes with multiplicity at least $m$ at each $(\alpha_j,\beta_j)$. Consequently, for any $f(X)$ agreeing with $w$ at $(\alpha_j,\beta_j)$, the univariate $Q(X,f(X))$ vanishes with multiplicity at least $m$ at $X=\alpha_j$, and hence we can conclude that $Q(X,f(X))$ is identically zero if
\begin{equation}\label{eqn:intro-1}
m\cdot \#\mathrm{agreements}(f,w) > (1,k-1)\text{-degree of } Q.
\end{equation}
One might try to recover $f$ by Newton iteration from an agreement point. However, since $m>1$, we have $Q^{(0,1)}(\alpha_j,\beta_j)=0$ for every $(\alpha_j,\beta_j)$; thus we can attempt to recurse on $Q^{(0,1)}$. For this to work, we would need
\begin{equation}\label{eqn:intro-2}
(m-1)\cdot \#\mathrm{agreements}(f,w) > (1,k-1)\text{-degree of }Q^{(0,1)}.
\end{equation}
But $(1,k-1)$-degree of $Q^{(0,1)}$ can be as large as $D-k$ where $D=(1,k-1)$-degree of $Q$, so relative to \eqref{eqn:intro-1} the left-hand side drops by at least $\sqrt{(k-1)n}$ while the right-hand side drops by only $k$. Thus \eqref{eqn:intro-2} need not hold, and the simple derivative recursion from the Sudan case does not extend.

To overcome this, we refine $Q$ via a local factor-type decomposition at each received point. Let $P=\prod_{i=1}^s P_i$ be the factorization of a \emph{monic} bivariate $P(X,Y)\in\F[X,Y]$ into irreducibles (with multiplicity). For simplicity, we will assume in this proof overview section that all the bivariates are monic in the variable $Y$. While the monic setting conveys most of the high level ideas of the proof, for our final proofs, we have to work with  non-monic bivariates and this leads to some additional technical difficulties that we handle in \autoref{sec:GS} and \autoref{sec:split}.  For $(\alpha,\beta)\in\F^2$, we partition the set $[s]$ corresponding to the factors of $P$, into 3 sets as follows:
\begin{align*}
  A(\alpha,\beta) &:= \{i \in [s] \mid P_i(\alpha,Y)=\gamma\,(Y-\beta)^{m_i} \text{ for some } m_i\ge 1,\ \gamma\in\F \setminus \{0\}\},\\
  B(\alpha,\beta) &:= \{i \in [s]\mid  P_i(\alpha,\beta)\neq 0, \deg_Y P_i\ge 1\},\\
  C(\alpha,\beta) &:= \{i \in [s] \mid P_i(\alpha,Y)=(Y-\beta)^{m_i}\cdot \hat P_i(Y) \text{ for some } m_i\ge 1, \deg_Y \widehat P_i\ge 1, \widehat P_i(\beta)\neq 0\}.
\end{align*}

Let $P_{A(\alpha,\beta)}:=\prod_{i\in A(\alpha,\beta)}P_i$ and $P_{B(\alpha,\beta)}$, $P_{C(\alpha,\beta)}$ be defined similarly. For the received word $w=(\alpha_j,\beta_j)_{j\in[n]}$, write $A_j,B_j,C_j$ and $P_{A,j},P_{B,j},P_{C,j}$ instead of $A(\alpha_j,\beta_j), B(\alpha_j,\beta_j), C(\alpha_j,\beta_j)$ and $P_{A(\alpha_j,\beta_j)}, P_{B(\alpha_j,\beta_j)}, P_{C(\alpha_j,\beta_j)}$ for $(\alpha_j,\beta_j)$.  

Since $P$ is monic in $Y$, it follows from the Gauss' lemma that each of its irreducible factors can be assumed to be monic in $Y$. And, thus, $P$ must equal the product of $P_{A,j}, P_{B,j}$ and $P_{C,j}$.

A simple but key observation is the following when applied to the Guruswami--Sudan interpolation polynomial $Q$. 
\begin{restatable}{observation}{keyobservation}\label{obs:intro-list-membership}
Let $Q$ be the Guruswami--Sudan interpolation polynomial and $f\in L$. Then:
\begin{enumerate}
  \item If $f(\alpha_j)=\beta_j$ for some $j$, then $(Y-f(X))$ divides $Q_{A,j}$.
  \item If $Q=Q_{A,j}$ for some $j$, then $f(\alpha_j)=\beta_j$.
  \item If $f(\alpha_j)\neq\beta_j$ for some $j$, then $(Y-f(X))$ divides $Q_{B,j}$
\end{enumerate}
\end{restatable}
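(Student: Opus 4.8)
The plan is to reduce all three parts to a single structural fact: for every $f\in L$, the linear polynomial $Y-f(X)$ appears among the monic irreducible factors of $Q$. First I would record why this holds. In the Guruswami--Sudan regime the parameters are chosen so that $m\cdot t>D$, where $D$ is the $(1,k-1)$-degree of $Q$ and $t$ is the agreement threshold defining $L$. For $f\in L$ (so $\deg f<k$), the univariate $Q(X,f(X))$ has degree at most $D$, since a monomial $X^aY^b$ of $Q$ contributes a term of degree $a+b\deg f\le a+b(k-1)\le D$; yet $Q(X,f(X))$ vanishes with multiplicity at least $m$ at each of the $\ge t$ points $\alpha_j$ with $f(\alpha_j)=\beta_j$, so counting zeros with multiplicity forces $Q(X,f(X))\equiv 0$, i.e.\ $(Y-f(X))\mid Q$ in $\F[X][Y]$. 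Since $Q$ is monic in $Y$, Gauss's lemma lets me write $Q=\prod_{i=1}^{s}Q_i$ with each $Q_i$ monic in $Y$ and irreducible; as $Y-f(X)$ is itself monic and irreducible, it equals $Q_{i_0}$ for some fixed index $i_0\in[s]$.

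Next I would settle parts (1) and (3) by specializing this factor at $X=\alpha_j$, where $Q_{i_0}(\alpha_j,Y)=Y-f(\alpha_j)$. If $f(\alpha_j)=\beta_j$, this equals $(Y-\beta_j)^{1}$, which matches the definition of $A_j$ with $\gamma=1$ and $m_{i_0}=1$; hence $i_0\in A_j$ and $(Y-f(X))=Q_{i_0}\mid \prod_{i\in A_j}Q_i=Q_{A,j}$. If instead $f(\alpha_j)\ne\beta_j$, then $Q_{i_0}(\alpha_j,\beta_j)=\beta_j-f(\alpha_j)\ne 0$ while $\deg_Y Q_{i_0}=1\ge 1$, so $i_0\in B_j$ and $(Y-f(X))=Q_{i_0}\mid Q_{B,j}$.

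For part (2) I would argue directly: if $Q=Q_{A,j}$, then every $Q_i$, and in particular $Q_{i_0}$, lies in $A_j$, so $Q_{i_0}(\alpha_j,Y)=\gamma\,(Y-\beta_j)^{m_{i_0}}$ for some $\gamma\in\F\setminus\{0\}$ and $m_{i_0}\ge 1$. But the left-hand side is $Y-f(\alpha_j)$, a monic polynomial of $Y$-degree exactly $1$; comparing degrees gives $m_{i_0}=1$, and then comparing leading and constant coefficients gives $\gamma=1$ and $f(\alpha_j)=\beta_j$.

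I do not anticipate a real obstacle in the monic setting described here; all of the content lies in the reduction of the first paragraph, which uses only the parameter inequality $mt>D$ and the monicity of $Q$. The genuinely delicate point---explicitly postponed to \cref{sec:GS} and \cref{sec:split}---is the non-monic case: there, specializing a factor at $X=\alpha_j$ can drop its $Y$-degree (its content in $X$ may vanish at $\alpha_j$), so $(Y-f(X))$ need not specialize to $(Y-\beta_j)$ up to a nonzero scalar, and one must track the leading coefficients (contents) of the factors alongside the argument above.
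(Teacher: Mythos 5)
Your proof is correct and follows essentially the same route as the paper: establish that $(Y-f(X))$ divides $Q$ via the $mt > D$ zero-counting argument, identify it (via Gauss's lemma and monicity) with one of the irreducible factors, and then check membership in $A_j$ or $B_j$ by specializing that factor at $X=\alpha_j$; part (2) follows by specializing $Q$ itself at $X=\alpha_j$ and using that $Q_{A,j}(\alpha_j,Y)$ is a pure power of $(Y-\beta_j)$. You also correctly flag that the genuinely delicate point — loss of $Y$-degree upon specialization for non-monic factors — is the one the paper defers to the later sections.
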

The first item above follows from the definition of $Q_{A,j}$ and similarly the third item from the definition of $Q_{B,j}$. For the second item, we recall that $(Y-f(X))$ divides $Q(X,Y)$ for every $f$ in the list $L$. As a consequence, $(Y- f(\alpha_j))$ must divide $Q(\alpha_j, Y)$, which by the definition of $Q_{A_j}$ and the fact that $Q = Q_{A,j}$, only has irreducible factors of the form $(Y - \beta_j)$. So, $f(\alpha_j)$ must equal $\beta_j$.

Thus, if we can produce a bivariate $P$ divisible by $(Y-f(X))$ and satisfying $P=P_{A,j}$ for at least $k$ indices $j$, then we can recover $f$ by interpolating on those $(\alpha_j,\beta_j)$.

Our algorithm (\cref{thm:intro-main}) is an efficient deterministic realization of this plan, built from three components.

\paragraph*{(1) Deterministic local splitting $P\mapsto (P_{A,j},P_{B,j},P_{C,j})$.}
\begin{theorem}[monic version of \cref{thm:split}]\label{thm:intro-split}
Let $\F$ be any finite field. There is a deterministic algorithm \textsc{Split} which, given a polynomial $P\in\F[X,Y]$ that is monic in the variable $Y$ and a point $(\alpha,\beta)\in\F^2$, outputs polynomials $P_{A(\alpha,\beta)}$, $P_{B(\alpha,\beta)}$ in time $\poly(\deg P,\log|\F|)$.
\end{theorem}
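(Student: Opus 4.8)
The plan is to run a deterministic Hensel lift at $(\alpha,\beta)$ and then reconstruct the true polynomial factors $P_{A(\alpha,\beta)}$, $P_{B(\alpha,\beta)}$ by linear algebra over $\F[X]/\langle X-\alpha\rangle^\ell$ for a precision $\ell=\poly(\deg P)$.

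After the coordinate change $X\mapsto X+\alpha$, $Y\mapsto Y+\beta$ (which preserves monicity in $Y$, the factorization into irreducibles, and the partition $A/B/C$) we may assume $\alpha=\beta=0$; write $d_X=\deg_X P$, $d_Y=\deg_Y P$. Compute the univariate polynomial $p(Y):=P(0,Y)$ and the largest $e$ with $Y^e\mid p$, so $p(Y)=Y^e u(Y)$ with $u(0)\neq0$; this is a few evaluations and divisions. Since $Y^e$ and $u(Y)$ are coprime in $\F[Y]$ and $P$ is monic in $Y$, deterministic Hensel lifting (the factorization form of Newton iteration, cf.\ \cref{lem:NI} and~\cite{GathenG-MCA}) produces, for any target precision $\ell$, polynomials $G,H$ monic in $Y$ over $\F[X]/\langle X\rangle^\ell$, with $\deg_Y G=e$, $G\equiv Y^e$ and $H\equiv u\pmod{X}$, and $P\equiv G\cdot H\pmod{\langle X\rangle^\ell}$ (the extremes $e=0$, i.e.\ $G=1$, and $e=d_Y$, i.e.\ $H=1$, are permitted). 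This costs $\poly(d_X,d_Y,\log|\F|)$ time.

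The structural fact driving correctness concerns the factorization of $P$ over the completion $\F\llbracket X\rrbracket[Y]$. Writing $P=\prod_i P_i$ with the $P_i$ monic in $Y$, irreducible over $\F[X,Y]$ (Gauss's lemma), and $P_A,P_B,P_C$ the products over $A(0,0),B(0,0),C(0,0)$, one refines each $P_i=\prod_t P_{i,t}$ into irreducibles over $\F\llbracket X\rrbracket[Y]$. Each such branch $P_{i,t}$ has a sharp behaviour at $X=0$: either $P_{i,t}(0,Y)=(Y-\gamma)^{\deg_Y P_{i,t}}$ for some $\gamma\in\F$, or $P_{i,t}(0,Y)$ has no root in $\F$ at all (otherwise a coprime split of $P_{i,t}(0,Y)$ would Hensel-lift and contradict irreducibility of $P_{i,t}$). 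By uniqueness of Hensel lifting, $G$ is precisely the product (with multiplicity) of those $P_{i,t}$ with $P_{i,t}(0,Y)=Y^{\deg_Y P_{i,t}}$, and $H$ the product of the rest. A direct check of the definitions of $A,B,C$ then gives: every branch of $P_i$ lies in $G$ when $i\in A$; every branch of $P_i$ lies in $H$ when $i\in B$; and $P_i$ has branches in both when $i\in C$. Combined with a branch--multiplicity bookkeeping lemma --- if a branch $P_{i,t}$ occurs with multiplicity $m$ in some $M\in\F[X][Y]$ then $P_i^{\,m}\mid M$ --- this yields the key identity: the \emph{polynomial closure of $G$}, defined as the monic-in-$Y$ polynomial $M_G$ of least $Y$-degree divisible by $G$ over $\F\llbracket X\rrbracket[Y]$ (well defined since such polynomials are closed under $\gcd$, lies in $\F[X][Y]$, and divides $P$), equals $P_AP_C$; symmetrically the polynomial closure $M_H$ of $H$ equals $P_BP_C$. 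Hence the outputs are recovered by the exact divisions $P_{A(0,0)}=P/M_H$ and $P_{B(0,0)}=P/M_G$.

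It remains to compute $M_G$ (and likewise $M_H$) from $G\bmod\langle X\rangle^\ell$. Since $M_G\mid P$, its $Y$-degree is $\le d_Y$ and its coefficients have $X$-degree $\le d_X$, so consider the $\F$-vector space
\[
 V_\ell:=\bigl\{R\in\F[X][Y]:\deg_X R\le d_X,\ \deg_Y R\le d_Y,\ G\text{ divides }R\text{ in }(\F[X]/\langle X\rangle^\ell)[Y]\bigr\}.
\]
As $G$ is monic in $Y$, ``$G$ divides $R$ modulo $\langle X\rangle^\ell$'' is a system of $\F$-linear constraints on the coefficients of $R$, so a basis of $V_\ell$ is computable by Gaussian elimination in $\poly(d_X,d_Y,\log|\F|)$ time. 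Now $P\in V_\ell$ (since $G\mid P$) and $M_G=P_AP_C\in V_\ell$; moreover for $\ell$ large enough $V_\ell$ equals the exact space $V_\infty:=\{R:\deg_X R\le d_X,\ \deg_Y R\le d_Y,\ G\mid R\text{ over }\F\llbracket X\rrbracket[Y]\}$, and by the structural fact $P_AP_C$ divides every member of $V_\infty$ while itself belonging to it, so $M_G$ is the monic-in-$Y$ representative of the $\F(X)[Y]$-gcd of a basis of $V_\ell$. We then output $P_{A(0,0)}=P/M_H$, $P_{B(0,0)}=P/M_G$, and undo the translation. Every step --- coordinate change, the univariate split of $P(0,Y)$, Hensel lifting, the linear system, the $\F(X)[Y]$-gcd, the two exact polynomial divisions --- is deterministic and runs in $\poly(\deg P,\log|\F|)$ time, which proves the theorem.

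The main obstacle is pinning down ``$\ell$ large enough'': I need a bound $\ell=\poly(d_X,d_Y)$ ensuring $V_\ell=V_\infty$, i.e.\ that no spurious bounded-degree $R$ becomes divisible by $G$ only modulo $\langle X\rangle^\ell$. I expect to obtain this either from bounds on the $X$-adic valuations of the subresultants of $G$ against a generic bounded-degree polynomial, or by arguing directly that the descending chain $V_\ell\supseteq V_{\ell+1}\supseteq\cdots$ of subspaces of a $\poly(d_X,d_Y)$-dimensional space must stabilize within $\poly(d_X,d_Y)$ steps. The remaining technical points are the branch--multiplicity bookkeeping inside the structural fact and --- for the full, non-monic \cref{thm:split} treated in \cref{sec:split} --- dropping the monicity hypothesis, which forces one to track leading coefficients and to work with a suitable reversal of $P$ or localization at $X=\alpha$.
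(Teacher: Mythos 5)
Your proposal takes a genuinely different route from the paper. The paper's \textsc{Split} (cf.~\cref{algo:split}) also seeds Hensel lifting with the coprime split of $P(\alpha,Y)$, but then \emph{recurses}: it searches for a nontrivial factor of $P$ by solving a linear system $U\cdot h_t \equiv V \bmod \langle X-\alpha\rangle^{2^t}$, takes $\gcd(P,V)$, and calls \CombSplits\ on the resulting split of $P$ into two lower-degree pieces. You instead compute \emph{in one shot} the ``polynomial closures'' $M_G = P_AP_C$ and $M_H=P_BP_C$ (minimal polynomial multiples of $G$ and $H$ over $\F\llbracket X\rrbracket[Y]$) via linear algebra on the space $V_\ell$, and recover $P_A = P/M_H$, $P_B = P/M_G$ by exact division. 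The structural claim that the closure of $G$ equals $P_AP_C$ is correct: your branch--multiplicity bookkeeping lemma follows from a Bezout/coprimality argument (if $P_{i,t}\mid M$ over $\F\llbracket X\rrbracket[Y]$ and $\gcd_{\F[X,Y]}(M,P_i)=1$, then $a'M+b'P_i=c(X)$ for some nonzero $c\in\F[X]$, which $P_{i,t}$ --- monic in $Y$ of positive degree --- cannot divide), and then $M_G=\prod_{i\in A\cup C}P_i^{e_i}$ by iterating. Your approach is pleasantly non-recursive and reduces correctness to one clean structural fact, whereas the paper's method is operationally simpler per step but requires a recursion-depth bound and the \CombSplits\ bookkeeping.

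However, there is a genuine gap, and you flag it yourself: you need a concrete precision $\ell=\poly(\deg P)$ for which $V_\ell=V_\infty$, and your two candidate fixes are not both sound. The descending-chain argument does \emph{not} work: the chain $V_1\supseteq V_2\supseteq\cdots$ does stabilize at $V_\infty=\bigcap_\ell V_\ell$ in a finite-dimensional space, and it can drop at most $\dim V_1=\poly(\deg P)$ times, but it can plateau for arbitrarily many steps between drops. Observing $V_\ell=V_{\ell+1}=\cdots=V_{\ell+k}$ for any fixed $k$ does not certify $V_\ell=V_\infty$, so this route gives no effective bound. The subresultant route is the right one, and in fact is exactly how the paper handles the same issue: in \cref{clm:nontrivial-gcd} one writes $\Res_Y(P,V)=aP+bV$, reduces modulo $\langle X-\alpha\rangle^{2^t}$, and argues that since $h_t\equiv h_0\pmod{\langle X-\alpha\rangle}$ with $\deg_Y h_0\ge1$, the resultant must be divisible by $(X-\alpha)^{2^t}$ --- forcing it to vanish once $2^t$ exceeds the degree bound $2\deg_X(P)^2$ on the resultant. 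You would apply the same reasoning to any putative spurious $R\in V_\ell\setminus V_\infty$ against the genuine minimal closure $M_G$ (or directly against $P$ after passing to $\gcd(R,P)$), giving the bound $\ell=O(\deg(P)^2)$. Until you carry out that resultant estimate, the proposal is incomplete; with it, your approach is a valid and somewhat more streamlined alternative to the paper's, at least in the monic case.
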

The method follows the Hensel-lifting paradigm but avoids randomized univariate factorization by using $(\alpha_j,\beta_j)$ directly. First write
\[
P(\alpha_j,Y)=(Y-\beta_j)^{m_j}\cdot \hat P_j(Y),\qquad \text{where }\hat P_j(\beta_j)\neq 0,
\]
computed deterministically by repeated division by $(Y-\beta_j)$. If $m_j=0$ then $P=P_{B,j}$; if $m_j=\deg_Y P(\alpha_j,Y)$ then $P=P_{A,j}$; otherwise the coprime factorization $\big((Y-\beta_j)^{m_j},\widehat P_j(Y)\big)$ serves as the base case for Hensel lifting to recover $P_{A,j},P_{B,j},P_{C,j}$ deterministically. While the standard application of Hensel lifting (c.f., \cite{KoppartySS2015,SinhababuT2021}) uses randomized univariate factorization to seed it, the above deterministic factoring of $P(\alpha_j,Y)$ into coprime factors $(Y-\beta_j)^{m_j}$ and $\hat{P_j}(Y)$ circumvents the use of randomness in our application of Hensel lifting. See \autoref{sec:split} for details.

\paragraph*{(2) Refinement to a stable set.}
Call $P$ \emph{stable} (with respect to the received word $w$) if for every $j\in[n]$, either $P=P_{A,j}$ or $P=P_{B,j}$. Initialize $S\gets\{Q\}$. While some $P\in S$ is not stable (i.e., there is a $j \in [n]$ such that $P \notin \{P_{A_j},P_{B_j}\}$), update
\[
S \;\leftarrow\; \Big( S \cup \{P_{A,j},P_{B,j}\} \Big)\setminus\{P\}.
\]
This refinement uses \cref{thm:intro-split} at each step.
 
\paragraph*{(3) Decoding from stability and efficiency.}
If $S$ is stable and $(Y-f)\mid P\in S$, then by \cref{obs:intro-list-membership}, every $j$ with $P=P_{A,j}$ corresponds to a true agreement $f(\alpha_j)=\beta_j$. We prove that after stabilization, for every $f\in L$ there exists $P\in S$ such that $(Y-f)\mid P$ and $P=P_{A,j}$ holds on at least $k$ indices $j$. Interpolating on these points recovers $f$, and iterating over $P\in S$ outputs all $f\in L$.

To conclude the argument, we need to show that the refinement to a stable set can be done in polynomial time. We do this via a potential function argument. See \autoref{sec:GS} for the full details and analysis.
\section{Preliminaries}

For a natural number $n$, we use $[n]$ to denote the set $\{1, 2, \ldots, n\}$. We say that a function $T(n)$ is $\poly(n)$ if there is a constant $c$ (independent of $n$) such that for all sufficiently large $n$, $T(n) \leq n^c$. 

Throughout the paper, $\F$ denotes a finite field of size $q=p^r$, for some prime $p$ and positive integer $r$. The prime $p$ is referred to as \emph{characteristic} of the field $\F$. We will be working with the ring $\F[X,Y]$ of bivariate polynomials over the field $\F$. 

\subsection{Resultants and GCD}

\paragraph*{Greatest common divisor:} The \emph{greatest common divisor (GCD)} of two polynomials $P, Q \in \F[X,Y]$, denoted by $\gcd(P,Q)$ is a polynomial $D\in \F[X,Y]$  that divides $P$ and $Q$, and such that every common divisor of $P$ and $Q$ also divides $D$. If $D = \gcd(P,Q)$, there exist two polynomials $A, B \in \F[X,Y]$, referred to as the \emph{Bezout coefficients} such that 
\[ A\cdot P + B\cdot Q = D .\]

\paragraph*{Resultant:} Let $g=\sum_{i=0}^{d_1}g_i(X)Y^i$ and $h=\sum_{i=0}^{d_2}h_i(X)Y^i$ be polynomials in $\F[X,Y]$ of $Y$-degree exactly $d_1$ and $d_2$. Denote the vector space $W_{d}=\{f\in \F[X,Y]\mid \deg_Y(f)<d\}$ over $\F(X)$. Then consider linear map $\psi:W_{d_2}\times W_{d_1}\to W_{d_1+d_2}$ where $\psi(a(X,Y),b(X,Y))=a(X,Y)\cdot g(X,Y)+b(X,Y)\cdot h(X,Y)$ where $a,b\in\F[X,Y]$ and $\deg_Y(a)<d_2$, $\deg_Y(b)<d_1$. Then the matrix of the $\psi$, over $\F[X]$ is called the \emph{Sylvester Matrix} with respect to $Y$. And the determinant the Sylvester matrix is called the \emph{Resultant} of $g,h$ with respect to $Y$, denoted by $\Res_Y(g,h)$

\begin{lemma}[Resultant Properties]\label{lem:res}
Let $d_1,d_2\geq 1$. Let $g(X,Y),h(X,Y)\in\F[X,Y]$ have $Y$-degree exactly $d_1,d_2$ respectively.  Then the following are true:
\begin{enumerate}
    \item If each $g_i$ is a polynomial of degree at most $e_1$ and each $h_j$ is a polynomial of degree at most $e_2$ then $\deg(\Res_Y(g,h))\leq 2e_1e_2$ 
    \item $\Res_Y(g,h)=0$ if and only if $g,h$ have a nontrivial GCD in the ring $\F(X)[Y]$.
    \item There exist polynomials $a(X,Y)$ and $b(X,Y)$ in $\F[X,Y]$ such that: \[ \Res_Y(g,h)= a\cdot g+b\cdot h\]
\end{enumerate}
\end{lemma}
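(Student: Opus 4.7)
The plan is to prove all three items by unpacking the Sylvester-matrix representation of $\Res_Y(g,h)$ introduced just above. Items~1 and~2 follow standard lines once one thinks of the Sylvester matrix as the representation of the map $\psi$, while item~3 requires a classical row-operation trick.

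For item~1, I would apply the Leibniz expansion to $\det(M)$, where $M$ is the $(d_1+d_2)\times(d_1+d_2)$ Sylvester matrix. By construction, $d_2$ of its rows are shifts of the coefficient vector of $g$ (whose entries are polynomials in $X$ of degree at most $e_1$), and the remaining $d_1$ rows are shifts of the coefficient vector of $h$ (with entries of $X$-degree at most $e_2$). Every term of the Leibniz expansion picks exactly one entry from each row, and hence has $X$-degree at most $d_2 e_1 + d_1 e_2$. In the regime of interest in this paper, where $d_1 \le e_2$ and $d_2 \le e_1$, this bound simplifies to $2 e_1 e_2$, giving item~1.

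For item~2, I would use the linear-algebraic interpretation. Over the field $\F(X)$, the map $\psi$ is represented by $M$, so $\Res_Y(g,h) = \det M = 0$ iff $\psi$ has a non-trivial kernel, iff there exist $(a,b)\in W_{d_2}\times W_{d_1}$, not both zero, with $ag + bh = 0$ in $\F(X)[Y]$. If $g$ and $h$ share a common factor $D\in\F(X)[Y]$ of positive $Y$-degree, writing $g = Dg'$ and $h = Dh'$ gives a valid kernel element $(h',-g')$ with the required degree bounds. Conversely, if such a nonzero $(a,b)$ exists, then $a,b$ must both be nonzero (else $\F(X)[Y]$ being an integral domain would force $g$ or $h$ to be zero, contradicting $d_1,d_2\ge 1$); from $ag = -bh$ we get $g \mid bh$, and if $g,h$ were coprime in the PID $\F(X)[Y]$, this would force $g \mid b$, contradicting $\deg_Y b < d_1 = \deg_Y g$. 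Hence $g,h$ share a non-trivial GCD in $\F(X)[Y]$.

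For item~3, my plan is the classical row-operation trick. Working over the ring $\F[X,Y]$, let $M^*$ be the matrix obtained from $M$ by replacing row~$0$ with $\sum_{i=0}^{d_1+d_2-1} Y^i \cdot (\text{row } i)$, leaving the other rows unchanged. Since each additional term $Y^i\cdot(\text{row } i)$ for $i\ge 1$ adds a polynomial multiple of an existing row, this is a composition of elementary row operations in $\F[X,Y]$ and so $\det M^* = \det M = \Res_Y(g,h)$. A direct calculation using the Sylvester indexing shows that the new row~$0$ is precisely $(g,\, Yg,\, \dots,\, Y^{d_2-1}g,\; h,\, Yh,\, \dots,\, Y^{d_1-1}h)$. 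Expanding $\det M^*$ along row~$0$ and grouping the $g$-containing terms separately from the $h$-containing terms produces the desired identity $\Res_Y(g,h) = a(X,Y)\cdot g + b(X,Y)\cdot h$, with $a,b\in\F[X,Y]$ given explicitly as $\F[X,Y]$-combinations of cofactors. The only subtle point in the whole lemma is the verification in item~3 that the ``condensed'' row~$0$ has the claimed clean form, which is the key book-keeping step but is routine once the indexing convention is fixed.
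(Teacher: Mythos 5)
The paper states this lemma without proof, treating it as a standard fact about resultants, so there is no in-paper argument to compare against; I am evaluating your proof on its own merits.

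Your arguments for items~2 and~3 are correct and are the standard ones: item~2 is the linear-algebraic characterization of a singular Sylvester map together with the observation that, in the PID $\F(X)[Y]$, coprimality plus $ag=-bh$ with $\deg_Y b < \deg_Y g$ is impossible; item~3 is the row-condensation trick, replacing row~0 with $\sum_i Y^i\cdot(\text{row }i)$ and then expanding. Your reading of the new row~0 as $\bigl(g, Yg, \dots, Y^{d_2-1}g, h, Yh, \dots, Y^{d_1-1}h\bigr)$ is consistent with the map $\psi$ as defined in the paper (columns encode the coefficient vectors of $Y^j g$ and $Y^j h$, rows index powers of $Y$), and the determinant is unchanged since you only added $\F[X,Y]$-multiples of other rows to row~0.

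Item~1 has a gap. Your Leibniz argument correctly gives the tight bound $\deg\bigl(\Res_Y(g,h)\bigr)\leq d_2 e_1 + d_1 e_2$. However, the condition you invoke to pass from this to $2e_1e_2$, namely $d_1\leq e_2$ and $d_2\leq e_1$, does \emph{not} imply $d_2 e_1 + d_1 e_2 \leq 2e_1 e_2$: it only gives $d_2 e_1 + d_1 e_2 \leq e_1^2 + e_2^2$, which by AM--GM is $\geq 2e_1 e_2$, with equality only when $e_1=e_2$. The condition that actually yields $2e_1e_2$ is the transposed one, $d_1 \le e_1$ and $d_2 \le e_2$ (equivalently $d_1/e_1 + d_2/e_2\le 2$), since then $d_2 e_1 \le e_2 e_1$ and $d_1 e_2 \le e_1 e_2$. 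For what it is worth, the paper's statement of item~1 is itself imprecise as written, since no bound on $d_1,d_2$ relative to $e_1,e_2$ is assumed; in the paper's application both resultant arguments have the \emph{same} $X$-degree bound ($e_1=e_2=\deg_X P$) and comparable $Y$-degrees, which is where $2e_1e_2$ comes from. You should state the correct general bound $d_2 e_1 + d_1 e_2$ and then note the side condition under which it collapses to $2e_1 e_2$, rather than the condition you wrote.
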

The following lemma is an easy consequence of Gauss' lemma and will be useful for us. 
\begin{lemma}[Gauss' lemma]\label{lem:gauss}
Let $A(X,Y) = \sum_{i = 0}^d A_i(X)Y^i$ and $B(X,Y) = \sum_{j = 0}^e B_j(X)Y^j$ be polynomials in $\F[X,Y]$ of $Y$-degrees at least one. 

Then, $A$ and $B$ have a GCD with $Y$-degree at least one in the ring $\F(X)[Y]$ if and only if they have GCD with $Y$-degree at least one in $\F[X, Y]$. 
\end{lemma}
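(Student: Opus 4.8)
The plan is to reduce the statement to the classical Gauss lemma over the UFD $\F[X]$ with fraction field $\F(X)$. The non-trivial direction is: if $A$ and $B$ have a common factor of $Y$-degree $\ge 1$ in $\F(X)[Y]$, then they have one in $\F[X,Y]$; the converse is immediate since $\F[X,Y]\subseteq \F(X)[Y]$. So assume $G(X,Y)\in\F(X)[Y]$ is a common divisor with $\deg_Y G\ge 1$. First I would clear denominators: there is a nonzero $c(X)\in\F[X]$ with $c\cdot G\in\F[X,Y]$, and after dividing out the $\F[X]$-content of $c\cdot G$ we obtain a primitive polynomial $\tilde G\in\F[X,Y]$ (primitive in the sense that the gcd of its $Y$-coefficients in $\F[X]$ is $1$) which is an $\F(X)^\times$-scalar multiple of $G$, hence still of $Y$-degree $\ge 1$ and still divides $A$ and $B$ in $\F(X)[Y]$.

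The key step is then to upgrade ``$\tilde G \mid A$ in $\F(X)[Y]$'' to ``$\tilde G \mid A$ in $\F[X,Y]$'' using primitivity. Write $A = \tilde G\cdot H$ with $H\in\F(X)[Y]$; choose $d(X)\in\F[X]$ minimal (or just nonzero) so that $d\cdot H\in\F[X,Y]$, giving $d\cdot A = \tilde G\cdot(dH)$ in $\F[X,Y]$. Now take $\F[X]$-contents of both sides: $\mathrm{cont}(d\cdot A) = d\cdot\mathrm{cont}(A)$, while by the multiplicativity of content (the heart of classical Gauss' lemma, valid over the UFD $\F[X]$) $\mathrm{cont}(\tilde G\cdot dH) = \mathrm{cont}(\tilde G)\cdot\mathrm{cont}(dH) = \mathrm{cont}(dH)$ since $\tilde G$ is primitive. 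Hence $d$ divides $\mathrm{cont}(dH)$ in $\F[X]$ up to the factor $\mathrm{cont}(A)$, which forces $d \mid dH$ coefficientwise after accounting for $\mathrm{cont}(A)$; more carefully, one concludes $H' := (dH)/d\cdot(\text{unit adjustment})\in\F[X,Y]$, so that $A = \tilde G\cdot H'$ in $\F[X,Y]$. The same argument applied to $B$ shows $\tilde G\mid B$ in $\F[X,Y]$. Thus $\tilde G$ is a common divisor of $A$ and $B$ in $\F[X,Y]$ of $Y$-degree $\ge 1$, and since the gcd in $\F[X,Y]$ is divisible by every common divisor, $\gcd_{\F[X,Y]}(A,B)$ has $Y$-degree $\ge 1$.

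The main obstacle — really the only place requiring care — is the content/primitivity bookkeeping: making sure that ``primitive'' is taken with respect to the right base ring ($\F[X]$, a UFD, so classical Gauss applies), that scalars from $\F(X)^\times$ are tracked correctly when passing between $G$ and $\tilde G$, and that the divisibility conclusion in $\F[X,Y]$ genuinely follows rather than merely divisibility up to an $\F[X]$-denominator. Once the classical multiplicativity of content over $\F[X]$ is invoked, everything else is routine. I would state the classical Gauss lemma (content is multiplicative over a UFD) as a black box and spend the proof entirely on the clearing-denominators-and-contents reduction sketched above.
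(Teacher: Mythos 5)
Your proof is correct and is the standard argument; the paper gives no proof, simply asserting the lemma as ``an easy consequence of Gauss' lemma,'' which is precisely the clear-denominators-then-compare-contents reduction you carry out. The only slightly garbled spot is the final deduction that $H\in\F[X,Y]$: from $d\cdot\mathrm{cont}(A)=\mathrm{cont}(dH)$ one writes $dH = d\cdot\mathrm{cont}(A)\cdot\mathrm{pp}(dH)$ and cancels $d$ to get $H=\mathrm{cont}(A)\cdot\mathrm{pp}(dH)\in\F[X,Y]$ directly, with no ``unit adjustment'' needed, but this is exactly the routine bookkeeping you flagged and the invocation of content multiplicativity over the UFD $\F[X]$ is the correct key step.
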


\subsection{Hasse derivatives and multiplicties}
In many parts of the proof, especially in the context of the list decoding algorithm of Guruswami--Sudan, we have to work with the notion of multiplicities. We start by recalling the notion of Hasse derivatives and multiplicity. 

\begin{definition}[Hasse derivatives]
Let $Q(X,Y) \in \F[X,Y]$ be a bivariate polynomial. Then for any $(e_1, e_2) \in \Z_{\geq 0}^2$, the $(e_1, e_2)$-th Hasse derivative of $Q$, denoted by $Q^{(e_1, e_2)}(X,Y)$ is defined to be the coefficient of the monomial $Z_1^{e_1}Z_2^{e_2}$ in the polynomial $Q(X+Z_1, Y+Z_2)$, when it is viewed as a bivariate in $Z_1, Z_2$. 

$(e_1+e_2)$ is said to be the \emph{order} of the Hasse derivative $Q^{(e_1, e_2)}$. 

\end{definition}

\begin{definition}[multiplicity]
Let $Q(X,Y) \in \F[X,Y]$ be a bivariate polynomial, $m$ be a non-negative integer and $(\alpha, \beta) \in \F^2$ be a point. We say that $Q$ vanishes with multiplicity at least $m$ on on  $(\alpha, \beta)$ if every Hasse derivative of $Q$ of order at most $(m-1)$ evaluates to zero at $(\alpha, \beta)$.
\end{definition}

The definitions of both Hasse derivatives and multiplicity extend naturally to polynomials with an arbitrary number of variables. Throughout the paper, these are invoked only for bivariate or univariate polynomials.

\subsection{Computing with polynomials:} In this section, we mention several standard \emph{deterministic} building blocks for working with polynomials.

\paragraph*{Exact division:} There exists a deterministic algorithm that on input two polynomials $P,Q\in \F[X,Y]$ with the guarantee that $Q$ divides $P$, finds the quotient $P/Q$ in time $\poly(\deg P, \deg Q, \log |\F|)$.

\paragraph{GCD Computation:} Euclid's algorithm computes the GCD of $n$ univariate polynomials $P_1,\dots,P_n \in \F[X]$ in time $\poly(\max \deg P_i, \log |\F|)$. The extended-Euclid algorithm, on input two polynomials $P,Q \in \F[X]$ computes the $\gcd(P,Q)$ as well as the Bezout coeffients in $\poly(\deg P, \deg Q, \log |\F|)$ time. 

\paragraph{Content of polynomials.} Given a polynomial $P \in \F[X,Y]$, we can view it as a polynomial in $Y$ with coefficients from the ring $\F[X]$, more precisely write $P(X,Y) = \sum_\ell P^{(\ell)}(X)\cdot Y^\ell$. The $X$-content of the polynomial $P$ is $\gcd(P^{(1)},P^{(2)},\dots)$. 

\paragraph{Polynomial factorization for small characteristic fields:} As mentioned in the introduction, Shoup's algorithm \cite{Shoup1990} computes the factorization of degree $d$ bivariate polynomials in deterministic $\sqrt{p}\cdot \poly(d, \log |\F|)$ time where $p$ is the characteristic of the field.

For fields of larger characteristic, we will use the following standard tools from computational algebra: Newton's iteration and Hensel lifting. 

\paragraph*{Newton's iteration.} As mentioned in the proof overview (\cref{sec:overview}), we will be using Newton's iteration repeatedly to find the $Y$-roots of bivariate polynomials. 

\NI*

\paragraph*{Hensel lifting.} Hensel lifting allows us to lift a local factorization of a polynomial modulo a power of $(X-\alpha)$ to a higher power.  
This may be viewed as the analog of Newton iteration for factoring, as opposed to root lifting.  We give more details of the Hensel lifting in \cref{sec:hensel}.

\paragraph*{A note on the characteristic of the underlying field.}
Typical applications of Newton's iteration and Hensel's lifting assume that the characteristic of the underlying field is at least the degree of the polynomials involved. While it is possible to use these ideas for polynomial factorization over fields of small characteristic with some care, it will be convenient for us to assume that the characteristic is at least $\poly(n)$ for our applications, since anyway when the characteristic is smaller one can use Shoup's deterministic factorization to obtain deterministic list-decoding algorithms.
    
\section{Derandomizing Sudan's decoder}\label{sec:sudan}
In this section, we prove \cref{thm:intro-sudan}. Recall that Sudan's list-decoding algorithm consists of two main steps:

\begin{enumerate}
  \item \textbf{Interpolation Step:}  
  Given the received word $w=\{(\alpha_j,\beta_j)\}_{j\in[n]}$, find a nonzero bivariate polynomial $Q(X,Y)\in\F[X,Y]$ of $(1,k-1)$-degree at most $D$ such that $Q(\alpha_j,\beta_j)=0$ for all $j\in[n]$.  
  Here $D$ is a parameter depending on $n$ and $k$, set to $D=\sqrt{2(k-1)n}$.

  \item \textbf{Factorization Step:}  
  Find all factors of $Q(X,Y)$ of the form $Y-f(X)$ where $\deg(f)\le k-1$, such that $f$ agrees with $w$ on more than $\sqrt{2(k-1)n}$ coordinates.
\end{enumerate}

The first step can be carried out deterministically in polynomial time by solving a homogeneous linear system: there are more variables than equations, guaranteeing a nontrivial solution.  
Thus, the only source of randomness lies in the \emph{factorization step}.  
The main technical result of this section is the following theorem, which, combined with the above observation, immediately implies \cref{thm:intro-sudan}.

\begin{theorem}\label{thm:sudan-technical}
There is a deterministic algorithm that, for every finite field $\mathbb{F}$ and parameters $n,k\in\mathbb{N}$, runs in time $\poly(n,\log|\mathbb{F}|)$ and, given as input a received word $w=\{(\alpha_j,\beta_j)\}_{j\in[n]}$ and a polynomial $Q(X,Y)\in\mathbb{F}[X,Y]$ of $(1,k-1)$-degree at most $\sqrt{2(k-1)n}$ satisfying $Q(\alpha_j,\beta_j)=0$ for all $j\in[n]$, returns all polynomials $f\in\mathbb{F}[X]$ of degree at most $(k-1)$ that agree with $w$ on more than $\sqrt{2(k-1)n}$ coordinates.
\end{theorem}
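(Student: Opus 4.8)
The plan is to follow exactly the strategy outlined in the proof overview: reduce the recovery of each $f$ to finding a point $(\alpha_j,\beta_j)$ at which $Q$ (or one of its $Y$-derivatives) has nonvanishing $(0,1)$-Hasse derivative, and then invoke \cref{lem:NI} to lift $f$ coordinate by coordinate in the $(X-\alpha_j)$-adic topology. Concretely, for each order $i=0,1,2,\dots$ and each $j\in[n]$, I form the polynomial $Q^{(0,i)}(X,Y)$, and I run the Newton-iteration subroutine seeded at $(\alpha_j,\beta_j)$: starting from $\varphi_0 = \beta_j$, which satisfies $Q^{(0,i)}(X,\varphi_0)\equiv 0 \bmod \langle X-\alpha_j\rangle$ whenever $Q^{(0,i)}(\alpha_j,\beta_j)=0$, I lift to $\varphi_1,\varphi_2,\dots$ modulo successive powers of $\langle X-\alpha_j\rangle$, provided $Q^{(0,i)\,(0,1)}(\alpha_j,\beta_j)\neq 0$. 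After $k$ steps I have a candidate polynomial $g\in\F[X]$ of degree less than $k$ (truncating the power series, since any $Y$-root of $Q$ of $Y$-degree $1$ that agrees with $w$ on more than $D \geq k-1$ points must be a genuine polynomial of degree $<k$); I then test whether $\deg g < k$ and whether $g$ agrees with $w$ on more than $\sqrt{2(k-1)n}$ coordinates, and if so add $g$ to the output list. The uniqueness clause of \cref{lem:NI} guarantees the lift is well-defined at each stage, so the subroutine is deterministic.

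The heart of the argument is the correctness claim: every $f$ of degree $<k$ agreeing with $w$ on more than $\sqrt{2(k-1)n}$ coordinates is produced by some invocation. Here I argue by descent on $i$. First, since $Q(\alpha_j,\beta_j)=0$ at every coordinate and $f(\alpha_j)=\beta_j$ at more than $D\geq \deg_X\big(Q(X,f(X))\big)$ coordinates (using that the $X$-degree of $Q(X,f(X))$ is at most the $(1,k-1)$-degree of $Q$, which is at most $\sqrt{2(k-1)n}$), we get $Q(X,f(X))\equiv 0$, i.e.\ $(Y-f(X)) \mid Q(X,Y)$ in $\F[X,Y]$ — equivalently $f$ is a $Y$-root of $Q$. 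Now fix the largest $i$ such that $f$ is a $Y$-root of $Q^{(0,i)}$; such a maximal $i$ exists and is at most the $Y$-degree of $Q$, since $Q^{(0,i)}=0$ once $i$ exceeds $\deg_Y Q$ and the zero polynomial does not have $(Y-f)$ as a factor (assuming $Q\neq 0$, which the interpolation step guarantees). I claim that for this $i$ there is a $j$ with $f(\alpha_j)=\beta_j$ and $Q^{(0,i)\,(0,1)}(\alpha_j,\beta_j)\neq 0$: indeed $Q^{(0,i)\,(0,1)}$ is a scalar multiple of $Q^{(0,i+1)}$ (by the standard composition rule for Hasse derivatives, $Q^{(0,i)\,(0,1)}=(i+1)Q^{(0,i+1)}$ up to the characteristic being large enough, which we have assumed), and if $Q^{(0,i+1)}(\alpha_j,\beta_j)=0$ at \emph{every} agreement coordinate of $f$, then $R(X):=Q^{(0,i+1)}(X,f(X))$ vanishes at more than $\sqrt{2(k-1)n}$ points while having $X$-degree at most the $(1,k-1)$-degree of $Q^{(0,i+1)}$, which is at most that of $Q$, hence $R\equiv 0$ and $f$ is a $Y$-root of $Q^{(0,i+1)}$, contradicting maximality of $i$. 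Given such a $j$, the Newton-iteration invocation at order $i$ and coordinate $j$ starts from $\varphi_0=\beta_j=f(\alpha_j)$, and by the uniqueness part of \cref{lem:NI} the successive lifts agree with $f \bmod \langle X-\alpha_j\rangle^\ell$ for all $\ell$ (since $f$ itself is such a lift of $Q^{(0,i)}$), so after enough steps the subroutine recovers $f$ exactly.

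For the running time: there are at most $\deg_Y Q \le \poly(n)$ choices of derivative order $i$, at most $n$ choices of coordinate $j$, each Newton-iteration invocation performs $O(k)$ lifting steps each costing $\poly(n,\log|\F|)$ arithmetic operations (exact division, evaluation of Hasse derivatives, arithmetic in $\F$), and we may assume the characteristic is at least $\poly(n)$ so that all the derivative scalars $(i+1)$ are nonzero in $\F$ — when the characteristic is smaller, Shoup's deterministic factorization already gives a $\poly(n,\log|\F|)$-time list decoder. Multiplying these bounds gives the claimed $\poly(n,\log|\F|)$ total time. I expect the main obstacle to be the bookkeeping around non-monic $Q$ and truncation: one must check that the power-series root returned by Newton iteration is actually a polynomial of degree $<k$ (not merely a truncation of something larger), which follows because any $Y$-root $f$ with $\deg f < k$ is uniquely determined by its first $k$ coefficients together with the agreement condition, and that the $(1,k-1)$-degree bound genuinely controls $\deg_X Q^{(0,i)}(X,f(X))$ for all $i$ simultaneously — this last point is where the choice $D=\sqrt{2(k-1)n}$ enters, via $D \ge k-1$ so that more than $D$ agreements forces the relevant univariate polynomial to vanish identically.
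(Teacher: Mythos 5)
Your proposal is correct and follows essentially the same approach as the paper: derive $Q(X,f(X))\equiv 0$ from the degree bound, then either find an agreement point $(\alpha_j,\beta_j)$ where the $Y$-derivative is nonzero and recover $f$ by Newton iteration (\cref{lem:NI}), or conclude $f$ is also a $Y$-root of the next $Y$-Hasse derivative and descend. The paper phrases the descent as a recursion on $Q^{(0,1)}$ while you iterate directly over $Q^{(0,i)}$ using the composition identity $Q^{(0,i)\,(0,1)}=(i+1)Q^{(0,i+1)}$; these coincide up to the nonzero scalar $i!$ under the same large-characteristic assumption both arguments invoke, so the two presentations are interchangeable.
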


We now describe and analyze the deterministic algorithm.

\subsection{The algorithm}

\begin{algorithm}[H]
\caption{\textsc{Deterministic-Sudan}}
\DontPrintSemicolon
\SetAlgoNoEnd
\KwIn{$w=\{(\alpha_j,\beta_j)\}_{j\in[n]}$ and $Q(X,Y)\in \F[X,Y]$ of $(1,k-1)$-degree at most $\sqrt{2(k-1)n}$ such that $Q(\alpha_j,\beta_j)=0$ for all $j\in[n]$}
\KwOut{All $f\in\F[X]$ of degree at most $(k-1)$ that agree with $w$ on more than $\sqrt{2(k-1)n}$ coordinates}
\nonl\hrulefill\\

\lIf{there exists $f(X)$ of degree $\le k-1$ such that $f(\alpha_j)=\beta_j$ for all $j\in[n]$}{
  \Return{$\{f\}$}
}

\If{$\deg_Y(Q)=1$}{
  Write $Q(X,Y)=Q_0(X)+Y\cdot Q_1(X)$\;
  Set $f(X)\gets -\frac{Q_0(X)}{Q_1(X)}$\;
  \lIf{$\deg(f)\le k-1$ and $f$ has agreement with $w$ exceeding $\sqrt{2(k-1)n}$}{
    \Return{$\{f\}$}
  }
}

Set $L\gets\emptyset$\;
Set $P\gets Q^{(0,1)}(X,Y)$\;
\While{$\deg_Y(P)>0$ and there exists $j\in[n]$ such that $P(\alpha_j,\beta_j)\neq0$}{
  
  $\varphi_0\gets \beta_j$\;
  \For{$i=1$ to $(k-1)$}{
    Use Newton Iteration (\cref{lem:NI}) to compute the lift $\varphi_i$ from $\varphi_{i-1}$ satisfying  
    $Q(X,\varphi_i(X))\equiv 0\pmod{\langle X-\alpha_j\rangle^{i+1}}$ and  
    $\varphi_i\equiv \varphi_{i-1}\pmod{\langle X-\alpha_j\rangle^{i}}$
  }
  Set $\varphi\gets \varphi_{k-1} \bmod \langle X-\alpha_j\rangle^k$\;
  \lIf{$\varphi$ agrees with $w$ on more than $\sqrt{2(k-1)n}$ coordinates}{
    $L\gets L\cup\{\varphi\}$
  }
  Set $w \gets w \setminus \{(\alpha_j,\beta_j)\}$\;
}
$L'\gets \textsc{Deterministic-Sudan}(P,w)$\;
\Return{$L\cup L'$}
\end{algorithm}

\subsection{Correctness of the algorithm}
The theorem is interesting only when the desired agreement threshold, $\sqrt{2(k-1)n}+1$, is smaller than $n$.  
If it equals $n$, the problem reduces to simple polynomial interpolation: we can directly check whether there exists $f$ of degree at most $k-1$ such that $f(\alpha_j)=\beta_j$ for all $j\in[n]$.

Henceforth, assume $\deg_Y(Q)\ge1$.  
If $\deg_Y(Q)=0$, then $Q$ is a univariate polynomial in $X$ of degree at most $\sqrt{2(k-1)n}$ vanishing at $n$ distinct points, which would imply $Q\equiv0$, a trivial case.

We begin with a simple observation.

\begin{claim}\label{clm:sudan-correctness-1}
Let $f(X)$ be a polynomial of degree $<k$ that agrees with $w$ on more than $\sqrt{2(k-1)n}$ coordinates.  
Let $R(X,Y)$ be a nonzero polynomial of $(1,k-1)$-degree at most $\sqrt{2(k-1)n}$ such that $R(\alpha_j,\beta_j)=0$ for every $j$ where $f(\alpha_j)=\beta_j$.  
Then $R(X,f(X))$ is identically zero.
\end{claim}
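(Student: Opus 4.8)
The plan is the standard ``degree versus number of roots'' argument underlying all of the Sudan/Guruswami--Sudan analyses. First I would substitute $Y=f(X)$ into $R$ and bound the degree of the resulting univariate polynomial $R(X,f(X))\in\F[X]$. Writing $R(X,Y)=\sum_{i,\ell}r_{i,\ell}X^iY^\ell$, each monomial contributes $r_{i,\ell}X^i f(X)^\ell$, whose $X$-degree is at most $i+\ell\deg f\le i+\ell(k-1)$, which is precisely the $(1,k-1)$-degree of the monomial $X^iY^\ell$. Taking the maximum over the monomials actually appearing in $R$ gives $\deg R(X,f(X))\le (1,k-1)\text{-degree}(R)\le \sqrt{2(k-1)n}$.

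Next I would count roots. For every coordinate $j\in[n]$ at which $f$ agrees with the received word, i.e.\ $f(\alpha_j)=\beta_j$, the hypothesis on $R$ gives $R(\alpha_j,f(\alpha_j))=R(\alpha_j,\beta_j)=0$. Since the evaluation points $\alpha_1,\dots,\alpha_n$ are distinct and $f$ agrees with $w$ on strictly more than $\sqrt{2(k-1)n}$ of them, the univariate polynomial $R(X,f(X))$ has strictly more than $\sqrt{2(k-1)n}$ distinct roots in $\F$.

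Finally, a nonzero univariate polynomial over a field has at most as many roots as its degree. Since $R(X,f(X))$ has degree at most $\sqrt{2(k-1)n}$ but more than $\sqrt{2(k-1)n}$ distinct roots, it must be identically zero, as claimed. There is no genuine obstacle here; the only point requiring (minor) care is the degree bound under the substitution $Y\mapsto f(X)$, and this is exactly the reason the interpolation step works with the $(1,k-1)$-weighted degree of $Q$ rather than its total degree.
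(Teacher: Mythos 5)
Your proof is correct and is essentially identical to the paper's: bound $\deg R(X,f(X))$ by the $(1,k-1)$-degree of $R$, observe it vanishes at all (more than $\sqrt{2(k-1)n}$) agreement points, and conclude it is the zero polynomial. The only difference is that you spell out the degree-bound computation monomial by monomial, which the paper takes as given.
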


\begin{proof}
The univariate $R(X,f(X))$ has degree at most the $(1,k-1)$-degree of $R$, namely $\sqrt{2(k-1)n}$.  
It vanishes at each $\alpha_j$ where $f(\alpha_j)=\beta_j$, i.e., on more than $\sqrt{2(k-1)n}$ distinct points.  
Hence $R(X,f(X))$ has more zeros than its degree and must be identically zero.
\end{proof}

By \cref{clm:sudan-correctness-1}, for any such $f$, we have $Q(X,f(X))\equiv0$.  
We now prove by induction on $\deg_Y(Q)$ that every such $f$ is included in the list $L$.

\paragraph{Base case:}
If $\deg_Y(Q)=1$, there is at most one polynomial $f$ such that $Q(X,f(X))\equiv0$, and the algorithm explicitly checks for this case.

\paragraph{Induction step:}
Assume correctness for all polynomials of $Y$-degree at most $t$.  
Let $Q$ have $Y$-degree $t+1>1$.  
We consider two cases:

\textbf{Case I:} There exists $j$ such that $f(\alpha_j)=\beta_j$ and $Q^{(0,1)}(\alpha_j,\beta_j)\ne0$.  
Then the conditions of the Newton Iteration lemma (\cref{lem:NI}) are met, and the algorithm successfully lifts $\beta_j$ modulo successive powers of $\langle X-\alpha_j\rangle$ to recover $f$.  
Since $f$ has sufficiently large agreement, it is added to $L$.

\textbf{Case II:} For every $j$ with $f(\alpha_j)=\beta_j$, we also have $Q^{(0,1)}(\alpha_j,\beta_j)=0$.  
In this case, $Q^{(0,1)}(X,Y)$ is a nonzero polynomial (since $\deg_Y(Q)>1$) of $(1,k-1)$-degree at most $\sqrt{2(k-1)n}$ that vanishes on all agreement points of $f$.  
By \cref{clm:sudan-correctness-1}, $Q^{(0,1)}(X,f(X))\equiv0$, so $f$ is a $Y$-root of $Q^{(0,1)}(X,Y)$.  
By the induction hypothesis, the recursive call with $Q^{(0,1)}$ includes $f$ in $L$.

This completes the proof of correctness.
\qedhere

\subsection{Running time analysis}
In each recursive call, the $Y$-degree of $Q$ decreases by at least one.  
Hence, the recursion depth is at most $\deg_Y(Q)=O(\sqrt{2(k-1)n})$, and only one recursive call is made per level.  
Each call involves polynomial-time computations; evaluating Hasse derivatives, performing polynomial divisions, verifying agreement, and executing $\poly(n)$ many calls to the Newton Iteration procedure (\cref{lem:NI}), each of which requires $\poly(n)$ field operations.

Therefore, the total running time of the algorithm is $\poly(n,\log|\F|)$, and the algorithm is deterministic, completing the proof of \cref{thm:sudan-technical}. 

\section{Derandomizing the Guruswami-Sudan decoder}\label{sec:GS}
In this section, we prove \cref{thm:intro-main}.  
We begin by recalling the high-level structure of the Guruswami--Sudan algorithm~\cite{GuruswamiS1999}, which proceeds in two stages.
\begin{enumerate}
  \item \textbf{Interpolation.}  
  Given the received word $w=\{(\alpha_j,\beta_j)\}_{j\in[n]}$, find a nonzero bivariate polynomial $Q(X,Y)$ of $(1,k-1)$-degree at most $D$ such that each point $(\alpha_j,\beta_j)$ is a root of $Q$ with multiplicity at least $m$.  
  To decode from agreement $\sqrt{n(k-1)}$, set $m=2\sqrt{n(k-1)}$ and $D=\Theta(m\sqrt{(k-1)n})$.
  
  \item \textbf{Factorization.}  
  Find all factors of $Q(X,Y)$ of the form $(Y-f(X))$, where $\deg f \le k-1$ and $f$ agrees with $w$ on at least $\sqrt{(k-1)n}$ coordinates.
\end{enumerate}
The following two lemmas from~\cite{GuruswamiS1999} ensure that the interpolation step succeeds and that all sufficiently close codewords appear as $Y$-roots of $Q$.

\begin{lemma}[interpolation guarantees \cite{GuruswamiS1999}]\label{constant-y-degree}
For every $n > k \in \mathbb{N}$, there exist parameters $m = 2 \sqrt{n(k-1)}$ and $D = \Theta(m\sqrt{(k-1)n})$ such that the following hold.

For every received word $w = \{(\alpha_j, \beta_j)\}_{j \in [n]}$, there exists a nonzero polynomial $Q(X,Y)$ satisfying:
\begin{enumerate}
  \item $(1,k-1)$-degree$(Q) \le D$,
  \item $\deg_Y(Q) = O(m\sqrt{n/(k-1)})$,
  \item For every $j \in [n]$, $Q$ vanishes at each $(\alpha_j, \beta_j)$ with multiplicity at least $m$.
\end{enumerate}
Moreover, such a $Q$ can be found deterministically in $\mathrm{poly}(n,\log |\F|)$ time.
\end{lemma}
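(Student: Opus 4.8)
The plan is the classical Guruswami--Sudan dimension-counting argument, made algorithmic via Gaussian elimination. As already noted in the proof overview, the interpolation conditions are linear in the coefficients of $Q$, so I would first make this explicit. Write $Q(X,Y) = \sum_{(i,j)} c_{i,j}\, X^i Y^j$, where the sum ranges over all pairs $(i,j) \in \Z_{\ge 0}^2$ with $i + (k-1)j \le D$. Restricting $Q$ to these monomials automatically enforces property~1, and since every such monomial has $j \le D/(k-1)$ it also forces $\deg_Y(Q) \le D/(k-1)$; this will give property~2 once we check that $D/(k-1) = O(m\sqrt{n/(k-1)})$. Property~3 unfolds into the equations $Q^{(e_1,e_2)}(\alpha_j,\beta_j) = 0$ for all $j \in [n]$ and all $(e_1,e_2) \in \Z_{\ge 0}^2$ with $e_1 + e_2 \le m-1$; because Hasse differentiation and evaluation at a point are both $\F$-linear, each of these is a linear form in the unknowns $c_{i,j}$, so a valid $Q$ exists if and only if this homogeneous linear system has a nonzero solution.

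Next I would count unknowns versus constraints. The number of constraints is
\[
M \;=\; n \cdot \#\{(e_1,e_2) \in \Z_{\ge 0}^2 : e_1 + e_2 \le m-1\} \;=\; n\binom{m+1}{2} \;=\; \frac{n\,m(m+1)}{2},
\]
while the number of unknowns is
\[
N \;=\; \#\{(i,j) : i + (k-1)j \le D\} \;=\; \sum_{j=0}^{\lfloor D/(k-1)\rfloor} \bigl(D - (k-1)j + 1\bigr) \;\ge\; \frac{(D+1)^2}{2(k-1)},
\]
where in the last step I keep only the leading term of the arithmetic sum. A homogeneous system with $N > M$ has a nonzero solution, so it suffices that $(D+1)^2 / (2(k-1)) > n\,m(m+1)/2$. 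Substituting $D = c\,m\sqrt{(k-1)n}$ and using $(D+1)^2 \ge c^2 m^2 (k-1) n$, this reduces to $c^2 m > m+1$, which holds for every $m \ge 1$ as soon as $c^2 \ge 2$. Hence $D = 2m\sqrt{(k-1)n} = \Theta(m\sqrt{(k-1)n})$ works, and for this $D$ we have $D/(k-1) = \Theta\bigl(m\sqrt{(k-1)n}/(k-1)\bigr) = \Theta\bigl(m\sqrt{n/(k-1)}\bigr)$, which is property~2.

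For the algorithmic claim, I would note that $m = \Theta(\sqrt{n(k-1)})$ and $D = \Theta(m\sqrt{(k-1)n})$ are both $\poly(n)$, so $N, M = \poly(n)$; the coefficients of the system are products of powers of the given elements $\alpha_j, \beta_j$ with binomial coefficients reduced modulo the characteristic, all computable in $\poly(n, \log|\F|)$ time. Gaussian elimination over $\F$ then returns a nonzero kernel vector, hence a valid $Q$, using $\poly(n)$ field operations, i.e.\ $\poly(n, \log|\F|)$ bit operations.

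I do not expect a genuine obstacle here: the only points needing care are (i) using the clean bound $N \ge (D+1)^2/(2(k-1))$ in place of the exact monomial count, and (ii) choosing the constant $c$ in $D$ so that the strict inequality $N > M$ holds uniformly over all $n > k$ and all admissible integer values of $m$ (replacing $m$ by its integer ceiling affects only constants). The remainder is routine bookkeeping together with standard linear algebra.
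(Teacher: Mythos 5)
The paper states this lemma without giving a proof, simply citing~\cite{GuruswamiS1999}; your argument is precisely the standard dimension-counting proof from that reference (count monomials of $(1,k-1)$-degree at most $D$ against Hasse-derivative constraints of order less than $m$, then solve the resulting homogeneous system by Gaussian elimination), and it is correct. The only tiny slip is that $c^2 \ge 2$ gives equality, not strict inequality, in $c^2 m > m+1$ when $m = 1$, but your concrete choice $c = 2$ is strictly sufficient for every $m \ge 1$, and you already flag that replacing $m$ by its ceiling changes only constants, so nothing is actually wrong.
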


\begin{lemma}[root containment \cite{GuruswamiS1999}]\label{lem:close-enough-polynomials-are-roots-GS}
Let $Q$ satisfy the properties in \cref{constant-y-degree}.  
Then for every polynomial $f$ of degree at most $k-1$ that agrees with $w$ on at least $\sqrt{(k-1)n}$ coordinates, we have $Q(X,f(X)) \equiv 0$, i.e., $(Y-f(X))$ divides $Q(X,Y)$.
\end{lemma}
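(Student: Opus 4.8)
This is the classical Guruswami--Sudan root-finding argument, so the plan is to reproduce it rather than to invent anything new. Fix a polynomial $f \in \F[X]$ of degree at most $k-1$ that agrees with $w$ on a set $S \subseteq [n]$ of coordinates with $|S| \ge \sqrt{(k-1)n}$, and set $g(X) := Q(X, f(X)) \in \F[X]$. The goal is to show $g \equiv 0$, after which the divisibility is immediate. First I would record the degree bound: every monomial $X^a Y^b$ appearing in $Q$ satisfies $a + (k-1)b \le D$ by hypothesis, and substituting $Y = f(X)$ turns it into a polynomial of degree at most $a + b\deg f \le a + b(k-1) \le D$; hence $\deg g \le D$.

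The crux is the multiplicity-composition step: for every $j \in S$, the univariate $g$ vanishes at $X = \alpha_j$ with multiplicity at least $m$. I would prove this directly from the Hasse-derivative expansion of $Q$ around $(\alpha_j, \beta_j)$,
\[
Q(X,Y) \;=\; \sum_{a,b \ge 0} Q^{(a,b)}(\alpha_j,\beta_j)\,(X-\alpha_j)^a\,(Y-\beta_j)^b .
\]
Substituting $Y = f(X)$ and using $f(\alpha_j) = \beta_j$ (so that $(X - \alpha_j) \mid (f(X) - \beta_j)$), the $(a,b)$-term becomes $Q^{(a,b)}(\alpha_j,\beta_j)\,(X-\alpha_j)^a(f(X)-\beta_j)^b$, which is divisible by $(X - \alpha_j)^{a+b}$. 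The multiplicity hypothesis of \cref{constant-y-degree} says $Q^{(a,b)}(\alpha_j,\beta_j) = 0$ whenever $a + b \le m-1$, so every surviving term is divisible by $(X-\alpha_j)^m$; hence $(X - \alpha_j)^m \mid g$.

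Combining the two steps, $g$ has at least $m\,|S| \ge m\sqrt{(k-1)n}$ roots counted with multiplicity, while $\deg g \le D$. By the choice of parameters in \cref{constant-y-degree} (the point of taking $m = 2\sqrt{n(k-1)}$ large and $D = \Theta(m\sqrt{(k-1)n})$ with hidden constant below $1$, exactly as in \cite{GuruswamiS1999}), we have $D < m\sqrt{(k-1)n}$, so $g$ has strictly more zeros with multiplicity than its degree and therefore $g \equiv 0$, i.e., $Q(X,f(X)) \equiv 0$. To obtain the stated divisibility I would then divide $Q$ by $Y - f(X)$ in $\F[X][Y]$ — legitimate since $Y - f(X)$ is monic of degree $1$ in $Y$ — writing $Q(X,Y) = (Y - f(X))\,S(X,Y) + r(X)$ with $r(X) \in \F[X]$; setting $Y = f(X)$ gives $r(X) = Q(X, f(X)) = 0$, so $(Y - f(X)) \mid Q(X,Y)$.

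The only genuinely delicate part is the inequality $D < m\sqrt{(k-1)n}$: interpolation feasibility (\cref{constant-y-degree}) forces $D$ up to roughly $m\sqrt{(k-1)n}$, while root containment needs it just below that threshold, the two being reconciled only because the integer agreement count strictly exceeds $\sqrt{(k-1)n}$ and $m$ is taken as large as $2\sqrt{n(k-1)}$. Since this parameter balancing is precisely what is imported from \cite{GuruswamiS1999} and is already baked into \cref{constant-y-degree}, I would cite it rather than re-derive the constants.
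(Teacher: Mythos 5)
Your proof is correct and follows exactly the same route as the paper, which gives only a one-sentence justification (noting the degree bound and the multiplicity-$m$ vanishing at agreement points and deferring to the parameter choice of \cite{GuruswamiS1999}); you have simply spelled out the Hasse-derivative computation, the degree accounting, and the final division by $Y - f(X)$ in full. No gaps.
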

\begin{proof}
The proof follows immediately from the bounds on $D$ and $m$ in \cref{constant-y-degree} and the observation that at each agreement point of $f$ and $w$, the univariate $Q(X,f(X))$ vanishes with multiplicity at least $m$.
\end{proof}

Thus, the list-decoding task reduces to finding all $Y$-roots of $Q(X,Y)$ of degree at most $(k-1)$.  
The main challenge is to do this deterministically in time $\mathrm{poly}(n,\log|\mathbb{F}|)$.  
The next theorem, which is the main contribution of this paper, provides such an algorithm.
\begin{theorem}\label{thm:det-gs}
There is a deterministic algorithm such that for every finite field $\mathbb{F}$ and parameters $n>k\in\mathbb{N}$:
\begin{itemize}
  \item It takes as input a received word $w=\{(\alpha_j,\beta_j)\}_{j\in[n]}$ and a polynomial $Q(X,Y)\in\mathbb{F}[X,Y]$ with $\deg_Y(Q)=\poly(n)$, under the promise that every degree-$(k-1)$ polynomial with agreement at least $\sqrt{(k-1)n}$ with $w$ is a $Y$-root of $Q$.
  \item It outputs all such polynomials $f$ in time $\mathrm{poly}(n,\log|\mathbb{F}|)$.
\end{itemize}
\end{theorem}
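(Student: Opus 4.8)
The plan is to realize the three-component strategy from \cref{sec:po-gs}: deterministic local splitting, refinement to a stable set, and extraction of the list from a stable set, while carefully handling the non-monic case. I would start by reducing to the case where $Q$ is primitive as a polynomial in $Y$ over $\F[X]$, i.e., dividing out the $X$-content, since the content contributes no $Y$-roots of the relevant form and can be stripped deterministically. After this normalization, the leading $Y$-coefficient of $Q$ is some $\ell(X) \in \F[X]$, and every factor $(Y - f(X))$ with $\deg f \le k-1$ still divides $Q$ in $\F[X,Y]$ by \cref{lem:gauss} and \cref{lem:close-enough-polynomials-are-roots-GS}. The technical annoyance is that at a point $\alpha_j$ with $\ell(\alpha_j) = 0$ the specialization $Q(\alpha_j, Y)$ drops degree, so the local analysis must be performed at points where $\ell$ does not vanish; since $\ell$ has at most $\poly(n)$ roots and $n$ can be taken larger than $2k$ in the interesting regime — or, more robustly, since each $f$ in the list has agreement at least $\sqrt{(k-1)n} \gg k$, it has at least $k$ agreement coordinates $\alpha_j$ with $\ell(\alpha_j) \ne 0$ — interpolation from such good coordinates still determines $f$.

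Next I would invoke \cref{thm:intro-split} (the \textsc{Split} algorithm) to obtain, for any primitive $P \in \F[X,Y]$ monic up to the unit $\ell(X)$ and any point $(\alpha_j, \beta_j)$ with $\ell(\alpha_j) \ne 0$, the deterministic decomposition $P = P_{A,j} \cdot P_{B,j} \cdot P_{C,j}$ in $\poly(\deg P, \log|\F|)$ time. The heart of this is: compute $P(\alpha_j, Y) = \gamma (Y - \beta_j)^{m_j} \hat P_j(Y)$ with $\hat P_j(\beta_j) \ne 0$ by repeated division; if $m_j = 0$ then $P = P_{B,j}$; if $\hat P_j$ is a constant then $P = P_{A,j}$; otherwise the coprime pair $\bigl((Y-\beta_j)^{m_j}, \hat P_j(Y)\bigr)$ seeds Hensel lifting to lift modulo $\langle X - \alpha_j\rangle^N$ for $N$ exceeding the degree bound, followed by rational reconstruction / factor recombination to read off $P_{A,j}$ (the part whose specialization is a pure power of $(Y-\beta_j)$) and $P_{B,j}$ (the coprime-to-$(Y-\beta_j)$ part), with $P_{C,j}$ the remaining part that mixes. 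I then run the refinement loop: initialize $S \gets \{Q\}$, and while some $P \in S$ and some $j$ satisfy $P \notin \{P_{A,j}, P_{B,j}\}$, replace $P$ in $S$ by $P_{A,j}, P_{B,j}, P_{C,j}$ (the monic-overview version writes two pieces; the non-monic version genuinely needs the three-way split). Each $f \in L$ has $(Y - f) \mid Q$, and I claim by induction on the refinement steps that $(Y-f)$ continues to divide some member of $S$: indeed $(Y-f) \mid P$ implies $(Y - f(\alpha_j)) \mid P(\alpha_j, Y)$, so $f(\alpha_j) = \beta_j$ forces $(Y-f) \mid P_{A,j}$ (via \cref{obs:intro-list-membership}(1)) and $f(\alpha_j) \ne \beta_j$ forces $(Y - f)\mid P_{B,j}$ (via \cref{obs:intro-list-membership}(3)); in the mixed three-way split one checks $(Y-f)$ lands in whichever of $P_{A,j}, P_{B,j}$ matches the behavior of $f$ at $\alpha_j$, never only in $P_{C,j}$.

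For termination and running time I would use a potential-function argument: the refinement only ever replaces a polynomial by polynomials whose $Y$-degrees sum to its $Y$-degree, and the $C$-part strictly lowers $Y$-degree relative to the parent whenever it is nontrivial, while pure $A/B$ splits with nontrivial pieces also lower $Y$-degree of each piece; bounding the total "work" by something like $\sum_{P \in S}(\deg_Y P)^2$ or a count of distinct irreducible-factor-clusters across coordinates shows only $\poly(n)$ splits occur, each costing $\poly(n, \log|\F|)$ via \textsc{Split} (this is the step I expect to require the most care — one must ensure the set $S$ does not blow up, e.g. by deduplicating and by observing that every element of $S$ divides $Q$ so there are at most $\deg_Y Q$ "independent" branches). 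Once $S$ is stable, for each $P \in S$ I collect the indices $J_P = \{j : P = P_{A,j},\ \ell(\alpha_j) \ne 0\}$; by stability and \cref{obs:intro-list-membership}(2), if $(Y - f) \mid P$ then $f(\alpha_j) = \beta_j$ for all $j \in J_P$, so if $|J_P| \ge k$ I recover the unique candidate $f$ by interpolation through $\{(\alpha_j, \beta_j)\}_{j \in J_P}$ and verify it has agreement $\ge \sqrt{(k-1)n}$ and degree $\le k-1$; I output the union over $P$. The remaining correctness obligation — that every $f \in L$ actually ends up with some $P \in S$ satisfying $(Y-f) \mid P$ and $|J_P| \ge k$ — follows because after stabilization the $P \in S$ containing $(Y - f)$ satisfies $P = P_{A,j}$ or $P = P_{B,j}$ at every $j$, but $(Y - f) \mid P$ combined with $(Y - f(\alpha_j)) \mid P(\alpha_j,Y)$ rules out $P = P_{B,j}$ at any agreement coordinate of $f$ with $\ell(\alpha_j) \ne 0$, of which there are at least $\sqrt{(k-1)n} - \deg \ell \ge k$ for the chosen parameters (or, if $\deg\ell$ is too large, one first removes those bad coordinates from $w$ and notes $Q$ still satisfies the hypotheses on the trimmed word). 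This completes the proof of \cref{thm:det-gs}, and combined with \cref{constant-y-degree} and \cref{lem:close-enough-polynomials-are-roots-GS} it yields \cref{thm:intro-main}.
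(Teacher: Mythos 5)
Your overall plan, which consists of content stripping, a deterministic local \textsc{Split} seeded by dividing out $(Y-\beta_j)$ and Hensel lifting, refinement of a set $S$ until stability, and interpolation from the stable pieces, is exactly the paper's three-component strategy, and your use of \cref{obs:intro-list-membership} to track which child of the split contains $(Y-f)$ is also correct. However, the two places where you flag "careful handling" of the non-monic case are precisely where the proposal goes wrong, and both are filled differently in the paper.

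First, the restriction to coordinates $\alpha_j$ with $\ell(\alpha_j)\ne 0$ does not work. After stripping the $X$-content, the leading $Y$-coefficient $\ell(X)$ can still have degree as large as $\deg_X Q$, which here is $\poly(n)$ and can comfortably exceed $n$, so $\ell$ may vanish at every one of the $\alpha_j$ (for example $P(X,Y)=\bigl(\prod_j (X-\alpha_j)\bigr)Y^2+Y$ is primitive, has $(Y-0)$ as a factor, yet $\ell(\alpha_j)=0$ for all $j$). Your fallback "remove the bad coordinates and trim $w$" has the same failure mode: there may be no good coordinates left, and the promise on $Q$ was stated relative to the original $w$. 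The paper does not restrict the coordinate set. Instead it introduces a fourth class $D(\alpha,\beta)$ of irreducible factors whose specialization at $\alpha$ is a nonzero constant (exactly the factors responsible for $Y$-degree drop), proves that $(Y-f(X))$ can never lie in $P_D$ (\cref{obs:uselss} and \cref{def:useless}), and designs \textsc{Split} so that its guarantee is merely $P_1=P_A R_1$, $P_2=P_B R_2$ with $R_1R_2 \mid P_D$; the ambiguity about $P_D$ is harmless, and the refinement and extraction steps are phrased so that no coordinate is ever discarded.

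Second, keeping $P_{C,j}$ in $S$ (the "three-way split") breaks termination. If the current $P$ happens to equal $P_{C,j}$ at the chosen point (i.e., $P_{A,j}=P_{B,j}=1$), your loop condition $P\notin\{P_{A,j},P_{B,j}\}$ fires, you replace $P$ by $\{1,1,P\}$, deduplicate, and you are back where you started, so the refinement spins forever and no potential function of the kind you sketch ($\sum_P(\deg_Y P)^2$, etc.) decreases. The paper's \textsc{Split} returns a two-way output that simply discards $P_C$, which is safe because $(Y-f)$ is never confined to $P_C$ alone, and the paper's \cref{cor:split-on-unstable-inputs} shows that whenever $P$ is genuinely unstable both output pieces have $Y$-degree strictly less than $\deg_Y P$ and their degrees sum to at most $\deg_Y P$; this makes the clean potential $\Phi(S)=\sum_{P\in S}(\deg_Y P-1)$ drop by at least one per iteration, giving the polynomial termination bound you were reaching for. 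Your claim that "the non-monic version genuinely needs the three-way split" is therefore the opposite of what the paper does, and is where the proposal would fail to compile into a correct argument.
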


The rest of this section proves this theorem, completing the proof of \cref{thm:intro-main}. 

Let $L$ denote the set of polynomials $f(X)$ of degree less than $k$ and agreement more than $\sqrt{(k-1)n}$ with $w$.  
Factor $Q(X,Y)=\prod_{i=1}^s Q_i(X,Y)$, where each $Q_i$ is  (with multiplicity) and has $\deg_Y(Q_i)\ge1$.  
We may assume without loss of generality that $Q$ has no factors depending only on $X$, since dividing out the GCD of its pure-$X$ components does not affect the property that each $f\in L$ is a $Y$-root of $Q$.

In the rest of this section we prove this theorem, and this would complete the proof of \cref{thm:intro-main}. We start by recalling some notation already discussed in the proof overview. Let $L$ be the set of polynomials $f(X)$ of degree less than $k$ and agreement more than $\sqrt{(k-1)n}$ with the received word $w$. 

\subsection{Local Splitting}

Let $P(X,Y)\in\mathbb{F}[X,Y]$ be a bivariate polynomial with no pure-$X$ factors and a point $(\alpha,\beta)\in\mathbb{F}^2$. Given the factorization $P=\prod_{i=1}^s P_i$ into irreducibles (with multiplicity), define:
\begin{align*}
  A(\alpha,\beta) &:= \{i \in [s] \mid P_i(\alpha,Y)=\gamma\,(Y-\beta)^{m_i} \text{ for some } m_i\ge 1,\ \gamma\in\F \setminus \{0\}\},\\
  B(\alpha,\beta) &:= \{i \in [s]\mid  P_i(\alpha,\beta)\neq 0, \deg P_i(\alpha,Y)\ge 1\},\\
  C(\alpha,\beta) &:= \{i \in [s] \mid P_i(\alpha,Y)=(Y-\beta)^{m_i}\cdot \hat P_i(Y) \text{ for some } m_i\ge 1, \deg_Y \widehat P_i\ge 1, \widehat{P}_i(\beta)\neq 0\},\\
D(\alpha,\beta) &:= \{i \in [s]\mid  P_i(\alpha,\beta)\neq 0, \deg_Y P_i(\alpha, Y) = 0\}.
\end{align*}
Let $P_{A(\alpha,\beta)}=\prod_{i\in A(\alpha,\beta)}P_i$, and define $P_{B(\alpha,\beta)}$, $P_{C(\alpha,\beta)}$, $P_{D(\alpha,\beta)}$ analogously.  
For each $(\alpha_j,\beta_j)$ in $w$, we write $A_j,B_j,C_j, D_j$ and $P_{A,j},P_{B,j},P_{C,j}, P_{D,j}$ for brevity instead of $A(\alpha_j,\beta_j), B(\alpha_j,\beta_j), C(\alpha_j,\beta_j)$, $ D(\alpha_j,\beta_j)$ and $P_{A(\alpha_j,\beta_j)}, P_{B(\alpha_j,\beta_j)}, P_{C(\alpha_j,\beta_j)}, P_{D(\alpha_j,\beta_j)}$ for $(\alpha_j,\beta_j)$.

We recall that in \autoref{sec:overview}, we focused on bivariates that are monic in $Y$. For such polynomials, all the irreducible factors can be assumed to be monic in $Y$ without loss of generality. As a consequence, the $Y$-degree of any of its irreducible factors cannot drop when $X$ is set to $\alpha \in \F$ and hence the set $D$ defined above is empty. However, the Guruswami-Sudan interpolation polynomial $Q$ need not be monic in $Y$ and we need to exercise some additional care to handle such polynomials. 

We now apply this definition to Guruswami--Sudan interpolation polynomial $Q$ (or more precisely $Q$ with its pure $X$ factors shaved off). 
If $f\in L$, then $Q(X,f(X))\equiv 0$ or equivalently $Y-f(X)\mid Q$. Now for any $j\in[n]$ if $f(\alpha_j)=\beta_j$ then $Y-f(X)\equiv Y-\beta_j\bmod{\langle X-\alpha_j\rangle}$. Hence, $Y-f(X)\mid Q_{A,j}(X,Y)$. And if $f(\alpha_j)\neq \beta_j$ then $(Y-f(X))(\alpha_j,\beta_j)=\beta_j-f(\alpha_j)\neq 0$. Therefore, $Y-f(X)\mid Q_{B,j}(X,Y)$. Therefore, we have the following observation:
\begin{observation}\label{obs:list-membership-category}
For every $f\in L$:
\begin{enumerate}
  \item If $f(\alpha_j)=\beta_j$ for some $j$, then $(Y-f(X))\mid Q_{A,j}(X,Y)$.
  \item If $f(\alpha_j)\ne\beta_j$ for some $j$, then $(Y-f(X))\mid Q_{B,j}(X,Y)$.
\end{enumerate}
\end{observation}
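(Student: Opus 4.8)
The plan is to use unique factorization in $\F[X,Y]$ to pin down which irreducible factor of $Q$ the polynomial $(Y-f(X))$ is, and then to read off membership in $A_j$ or $B_j$ directly from the values $f(\alpha_j)$ and $\beta_j$. First I would record that $(Y-f(X))$ is prime in $\F[X,Y]$: it is monic of degree one in $Y$, so in any factorization $(Y-f(X))=g\cdot h$ one of $g,h$ has $Y$-degree zero, and comparing the coefficient of $Y$ forces that factor to be a unit in $\F\setminus\{0\}$; hence $(Y-f(X))$ is irreducible, and since $\F[X,Y]$ is a UFD it is prime. Since $f\in L$, \cref{lem:close-enough-polynomials-are-roots-GS} gives $(Y-f(X))\mid Q=\prod_{i=1}^{s}Q_i$, and since $(Y-f(X))$ is not a pure-$X$ polynomial it survives the shaving-off of $Q$'s pure-$X$ part. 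By primality, $(Y-f(X))\mid Q_{i_0}$ for some $i_0\in[s]$, and since $Q_{i_0}$ is irreducible this forces $Q_{i_0}=c\,(Y-f(X))$ for some $c\in\F\setminus\{0\}$; in particular $Q_{i_0}(\alpha_j,Y)=c\,(Y-f(\alpha_j))$ for every $j\in[n]$.

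Next I would run the two-case analysis. If $f(\alpha_j)=\beta_j$, then $Q_{i_0}(\alpha_j,Y)=c\,(Y-\beta_j)=c\,(Y-\beta_j)^{1}$, which matches the defining shape of $A_j$ (with $\gamma=c$ and multiplicity $1$); hence $i_0\in A_j$, and therefore $(Y-f(X))\mid Q_{i_0}\mid \prod_{i\in A_j}Q_i=Q_{A,j}$, giving the first assertion. If instead $f(\alpha_j)\neq\beta_j$, then $Q_{i_0}(\alpha_j,\beta_j)=c\,(\beta_j-f(\alpha_j))\neq 0$ while $\deg_Y Q_{i_0}=1\ge 1$, so by definition $i_0\in B_j$, and hence $(Y-f(X))\mid Q_{i_0}\mid Q_{B,j}$, giving the second assertion.

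There is no real obstacle here; the only points needing a little care are that $Q$ need not be monic in $Y$, so one should work up to a unit $c\in\F\setminus\{0\}$ rather than asserting $Q_{i_0}=Y-f(X)$ on the nose — this is harmless, since at $X=\alpha_j$ the scalar $c$ affects neither the root-multiplicity pattern defining $A_j$ nor the non-vanishing at $(\alpha_j,\beta_j)$ defining $B_j$ — and that $(Y-f(X))$ genuinely occurs (up to a unit) among the $Q_i$, which is exactly unique factorization together with the fact that removing the pure-$X$ content of $Q$ cannot delete the factor $(Y-f(X))$. I would also note in passing, though it is not needed for the statement, that $A_j,B_j,C_j,D_j$ partition $[s]$, so $i_0$ in fact lies in precisely the claimed one of these sets.
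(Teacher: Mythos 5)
Your proof is correct and takes essentially the same route as the paper, which argues that since $(Y-f(X))$ divides $Q$ and reduces at $X=\alpha_j$ to $(Y-\beta_j)$ or to a polynomial not vanishing at $\beta_j$, it must be (as an irreducible factor) in $A_j$ or $B_j$ respectively. You just spell out the implicit steps the paper leaves terse: that $(Y-f(X))$ is irreducible, that by unique factorization it is (up to a unit) one of the $Q_i$'s, and that removing pure-$X$ content cannot delete it.
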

We also rely on the following simple observation in the description of our algorithm. 

\begin{observation}\label{obs:uselss}
Let $g(X,Y) \in \F[X,Y]$ be a bivariate polynomial such that there is an $\alpha \in \F$ for which $g(\alpha, Y)$ is a non-zero field constant. Then, $g(X,Y)$ is not divisible any bivariate polynomial that is monic in $Y$. 

In particular, $g(X,Y)$ does not have a factor of the form $(Y-f(X))$. 
\end{observation}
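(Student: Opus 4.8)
The plan is to argue by contradiction, the whole content being a single degree count after the substitution $X = \alpha$. Suppose, for contradiction, that $g(X,Y) = h(X,Y)\cdot u(X,Y)$ for some $h,u \in \F[X,Y]$ with $h$ monic in $Y$; we may assume $d := \deg_Y h \ge 1$ (the only meaningful case, since otherwise $h$ is the constant $1$), and write $h(X,Y) = Y^d + \sum_{i<d} h_i(X)\,Y^i$.

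First I would substitute $X = \alpha$ into the factorization to get $g(\alpha,Y) = h(\alpha,Y)\cdot u(\alpha,Y)$ as an identity in $\F[Y]$. The key point is that $h(\alpha,Y) = Y^d + \sum_{i<d} h_i(\alpha)\,Y^i$ still has $Y$-degree exactly $d \ge 1$, because the leading $Y$-coefficient of $h$ is the constant $1$, which does not vanish under the substitution. In particular $h(\alpha,Y)$ is a nonzero polynomial of degree $\ge 1$; since $g(\alpha,Y)$ is assumed to be a \emph{nonzero} element of $\F$, the factor $u(\alpha,Y)$ must also be nonzero, and therefore $\deg_Y g(\alpha,Y) = \deg_Y h(\alpha,Y) + \deg_Y u(\alpha,Y) \ge d \ge 1$. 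This contradicts the hypothesis that $g(\alpha,Y) \in \F$, which has $Y$-degree $0$.

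For the ``in particular'' clause, I would simply observe that a polynomial of the form $(Y - f(X))$ is monic in $Y$ of $Y$-degree $1$, hence is a special case of the divisors ruled out above.

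I do not expect any real obstacle here; the only point worth flagging is the necessity of the hypothesis that $g(\alpha,Y)$ is \emph{nonzero}: without it the claim fails, since e.g. $g = (X-\alpha)\,(Y-f(X))$ satisfies $g(\alpha,Y)\equiv 0$ yet is divisible by the monic-in-$Y$ polynomial $(Y-f(X))$. I would mention this briefly to explain why the nonzero hypothesis appears in the statement.
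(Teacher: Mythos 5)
Your proof is correct and takes essentially the same route as the paper's: substitute $X=\alpha$ into the factorization, note that monicity in $Y$ forces the $Y$-degree of the monic factor to survive the specialization, and use the nonvanishing of $g(\alpha,Y)$ to conclude both factors are nonzero, giving $\deg_Y g(\alpha,Y)\ge 1$. (The paper phrases it as a contrapositive, you as a contradiction; the only other difference is that you explicitly dispatch the trivial case $\deg_Y h = 0$, whereas the paper treats ``monic in $Y$'' as meaning $Y$-degree at least one by convention.)
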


\begin{proof}
    We will prove the contrapositive -  we assume that $g(X,Y)$ has a factor $P(X,Y)$ that is monic in $Y$ and show that $g(\alpha, Y)$ cannot be a non-zero field constant.  

    Let $R(X,Y)$ be a polynomial that is monic in $Y$ (and hence has $Y$-degree at least one) and divides $g(X,Y)$. Thus, there is a bivariate $R'$ such that $g(X,Y) = R(X,Y)R'(X,Y)$. Now, by setting $X$ to $\alpha$ in the above identity, we get $g(\alpha,Y) = R(\alpha,Y)R'(\alpha,Y)$. Since $g(\alpha, Y)$ is non-zero, both $R'(\alpha, Y)$ and $R(\alpha, Y)$ are non-zero polynomials. Moreover,  $R$ was monic in $Y$, so its degree in $Y$ does not reduce when $X$ is set to $\alpha$. But this implies that the $g(\alpha, Y)$ has degree at least as $R(\alpha, Y)$, which is at least one. 
\end{proof}
This observation motivates the following definition. 

\begin{definition}[Useless bivariates]\label{def:useless}
A bivariate $g(X,Y) \in \F[X,Y]$ is said to be useless with respect to a set $E \subseteq \F$ if there exists an $\alpha \in E$ such that $g(\alpha,Y)$ is a non-zero field constant. 
\end{definition}
We now define the notion of stability, that is once again crucially used in our algorithm. 
\begin{definition}[stability of a polynomial]\label{def:stability}
    We call a polynomial $P(X,Y)$ \emph{stable} with respect to the point $(\alpha,\beta)$ if $\deg P(\alpha,Y) \geq 1$ and one of the following holds
    \begin{itemize}
    \item $P(\alpha,Y) = \gamma (Y-\beta)^r$ for some $\gamma \in \F\setminus\{0\}$ and positive integer $r$
    \item $P(\alpha,\beta) \neq 0$
    \end{itemize}
    
    We call a polynomial $P(X,Y)$ \emph{stable} with respect to the received word $w$ if for all $j\in[n]$, $P$ is stable with respect to $(\alpha_j,\beta_j)$. We will usually drop the phrase "with respect to the received word $w$"

    A finite set $S\subseteq\mathbb{F}[X,Y]$ is stable if every $P\in S$ is stable.
\end{definition}

Our algorithm relies on a deterministic subroutine, \textsc{Split}, that given $P(X,Y)$ and $(\alpha,\beta)$ computes $P_{A(\alpha,\beta)}R_1$ and $P_{B(\alpha,\beta)}R_2$ such that $R_1\cdot R_2$ divides $P_{D(\alpha,\beta)}$ in time $\poly(\deg(P),\log|\mathbb{F}|)$.  
We defer its description to \cref{sec:split} and restate its guarantee below.

\begin{restatable}[\Split\ procedure]{theorem}{splitthem}\label{thm:split}
    Let $\F$ be any finite field. There is a deterministic algorithm \textsc{Split} which, given $P\in\F[X,Y]$ and $(\alpha,\beta)\in\F^2$, outputs polynomials $(P_1,P_2)$ such that $P_1= P_{A(\alpha,\beta)}\cdot R_1$ and $P_2= P_{B(\alpha,\beta)}\cdot R_2$ where $R_1 \cdot R_2$ divides $P_{D(\alpha,\beta)}$.

    Moreover, if $P$ is stable with respect to $(\alpha,\beta)$, \Split\ outputs either $(P,1)$ or $(1,P)$ depending on which of the two conditions in \cref{def:stability} is met.
\end{restatable}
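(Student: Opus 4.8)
The plan is to implement \Split\ via a Hensel-lifting argument that is \emph{seeded deterministically} from the univariate factorization of $P(\alpha,Y)$, which we can compute by repeated trial division by $(Y-\beta)$. First I would reduce to the generic case: compute $P(\alpha,Y)$ and write $P(\alpha,Y)=(Y-\beta)^{m_0}\cdot \hat P_0(Y)$ with $\hat P_0(\beta)\neq 0$, found deterministically by dividing out $(Y-\beta)$ as many times as possible. Observe that $m_0$ is the total multiplicity of $\beta$ as a root contributed by the factors in $A(\alpha,\beta)\cup C(\alpha,\beta)$, while $\hat P_0$ collects the contributions of $B(\alpha,\beta)$ together with the non-$(Y-\beta)$ parts coming from $C(\alpha,\beta)$. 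There are three easy sub-cases: if $\deg_Y P(\alpha,Y)=0$ then (by the definition of useless bivariates and Observation~\ref{obs:uselss}) $P=P_{D(\alpha,\beta)}$ up to units and we may return $(1,1)$-type output; if $m_0=0$ then $\beta$ is not a root of $P(\alpha,Y)$, so $A(\alpha,\beta)\cup C(\alpha,\beta)=\emptyset$ and we return $(1,P)$ modulo the $D$-part; if $m_0=\deg_Y P(\alpha,Y)$ then $\hat P_0$ is a constant, so $B(\alpha,\beta)=\emptyset$ and $C(\alpha,\beta)=\emptyset$ and we return $(P,1)$ modulo the $D$-part. These three checks already establish the ``Moreover'' clause of the statement, since stability is exactly the assertion that one of the first two of these cases (after accounting for $D$) occurs.

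In the remaining generic case, $\big((Y-\beta)^{m_0},\hat P_0(Y)\big)$ is a factorization of $P(\alpha,Y)$ into two coprime factors of positive degree. Next I would normalize: after clearing the $X$-content (the hypothesis says $P$ has no pure-$X$ factors, so this only removes units of $\F[X]$) and, if necessary, handling the leading $Y$-coefficient $\ell(X)$ of $P$ — the main technical nuisance, since $\ell(\alpha)$ may vanish, which is precisely where the $D(\alpha,\beta)$ factors enter — I would invoke Hensel lifting (\cref{sec:hensel}) with base point $X=\alpha$ to lift the coprime factorization $(Y-\beta)^{m_0}\cdot\hat P_0(Y)$ of $P(\alpha,Y)$ to a factorization $P \equiv G\cdot H \pmod{\langle X-\alpha\rangle^N}$ for $N$ larger than any relevant $X$-degree bound (governed by the $(1,k-1)$-degree and resultant degree estimates of \cref{lem:res}). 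Here $G$ reduces to $(Y-\beta)^{m_0}$ and $H$ reduces to $\hat P_0$ modulo $\langle X-\alpha\rangle$. Using a $Y$-root / GCD-based reconstruction (computing $\gcd$ over $\F(X)[Y]$ via resultants, cf.\ \cref{lem:res} and \cref{lem:gauss}) I would recover the \emph{true} bivariate factor $P_1$ of $P$ whose reduction mod $\langle X-\alpha\rangle$ is a power of $(Y-\beta)$ — this is exactly $P_{A(\alpha,\beta)}$ times possibly some of the $D$-factors that happened to reduce to a power of $(Y-\beta)$, which is why the statement only promises $P_1=P_{A(\alpha,\beta)}\cdot R_1$ with $R_1\mid P_{D(\alpha,\beta)}$ — and symmetrically $P_2=P_{B(\alpha,\beta)}\cdot R_2$. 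Any factor in $C(\alpha,\beta)$ is split between $G$ and $H$ by construction, so it does not survive as a whole in either $P_1$ or $P_2$; this is the key point making the output well-defined.

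The main obstacle I expect is the non-monic issue: when the leading $Y$-coefficient $\ell(X)$ of $P$ vanishes at $\alpha$, the $Y$-degree of $P(\alpha,Y)$ drops, the naive Hensel setup (which needs a unit leading coefficient) breaks, and the $D(\alpha,\beta)$-factors muddy the correspondence between factors of $P$ and factors of $P(\alpha,Y)$. I would handle this by the standard device of multiplying through by $\ell(X)^{\deg_Y P - 1}$ and substituting $Y\mapsto Y/\ell(X)$ to reduce to a monic instance (as in the classical treatment), carrying out the lift and reconstruction there, and then translating back and re-absorbing powers of $\ell$; the bookkeeping shows that the stray factors picked up all divide $P_{D(\alpha,\beta)}$, giving precisely $R_1\cdot R_2 \mid P_{D(\alpha,\beta)}$. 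All steps — trial division, content computation, Hensel lifting to polynomially many levels, resultant and extended-Euclid based GCD over $\F(X)[Y]$, and exact division — run in $\poly(\deg P,\log|\F|)$ deterministic time by the building blocks recalled in the preliminaries, which gives the claimed running time. The full details are carried out in \cref{sec:split}.
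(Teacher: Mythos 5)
Your high-level plan — seed Hensel lifting deterministically from the trial-division factorization $P(\alpha,Y)=(Y-\beta)^{m_0}\hat P_0(Y)$, dispose of the three degenerate cases up front, then lift and reconstruct — matches the skeleton of the paper's \Split\ procedure (see \cref{lem:base-case-split} and Algorithm~\ref{algo:split}). However, there is a genuine gap in the technical core, and it is precisely where you flag the ``main obstacle.''

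The monicization device $Y\mapsto Y/\ell(X)$ followed by multiplying through by $\ell(X)^{\deg_Y P-1}$ does \emph{not} preserve the local data at $X=\alpha$ when $\ell(\alpha)=0$ — which is exactly the regime that creates the $D(\alpha,\beta)$ factors you are trying to control. Concretely, if $P=\sum_i \ell_i(X)Y^i$ with $\ell=\ell_d$, the monicized polynomial is $\tilde P(X,Z)=Z^d+\sum_{i<d}\ell_i(X)\ell(X)^{d-1-i}Z^i$. Specializing at $X=\alpha$ with $\ell(\alpha)=0$ kills every term with $i<d-1$, giving $\tilde P(\alpha,Z)=Z^{d-1}\bigl(Z+\ell_{d-1}(\alpha)\bigr)$. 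All roots of $P(\alpha,Y)$ — $\beta$ along with the roots of $\hat P_0$ — are mapped to $Z=\ell(\alpha)Y=0$ and become indistinguishable; the coprime seed factorization $\bigl((Y-\beta)^{m_0},\hat P_0\bigr)$ does not translate to a coprime factorization of $\tilde P(\alpha,Z)$, so the Hensel lift cannot be started in the monicized picture. You cannot pick a different $\alpha$ because the algorithm must use the coordinates from the received word. The paper avoids this by never monicizing: it lifts the non-monic $P$ directly (with degree bounds supplied by \cref{lem:hensel-degree-bounds}) and quarantines the stray $D$-factors through the definition of useless bivariates (\cref{def:useless}) and the structure of the output guarantee.

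A second, smaller gap: your ``GCD-based reconstruction'' is asserted rather than executed. The lifts $g_t,h_t$ live only modulo $\langle X-\alpha\rangle^{2^t}$; there is no polynomial GCD of $P$ with a truncated power series. The paper extracts a true bivariate factor by solving the linear system $U\cdot h_t\equiv V\bmod\langle X-\alpha\rangle^{2^t}$ (and symmetrically $E\cdot g_t\equiv F$) over bounded-degree unknowns, then proving via a resultant argument (\cref{clm:nontrivial-gcd}) that $\gcd(P,V)$ has positive $Y$-degree, and recursing through \CombSplits. When neither system is solvable, the separate argument in \cref{clm:linear-system-no-sol} — which uses the uniqueness part of \cref{lem:hensel-degree-bounds} — forces $P=P_C P_D$. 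This reconstruction-by-linear-system step is the deterministic replacement for randomization and is the real content of the proof; it is missing from your proposal. (Also, a minor slip: stability (\cref{def:stability}) requires $\deg P(\alpha,Y)\ge 1$, so the \emph{Moreover} clause corresponds to your second and third subcases, not the first two.)
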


The \emph{moreover} part of the above theorem explains the behavior of the algorithm \Split\ on inputs that are stable. The corollary below summarises its properties on inputs that are not stable. This will be useful for us in the analysis of the algorithm later in this section. 

\begin{corollary}\label{cor:split-on-unstable-inputs}
Let $P(X,Y)$ be a polynomial such that the degree of $P(\alpha, Y)$ is at least one and $P$ is not stable at the point $(\alpha, \beta)$. Let $(P_1, P_2)$ be the output of the algorithm \Split\ on inputs $P, (\alpha, \beta)$. 

Then, $\deg_Y(P_1), \deg_Y(P_2)$ are both strictly smaller than $\deg_Y(P)$. Moreover, 
$\deg_Y(P_1) + \deg_Y(P_2) \leq \deg_Y(P)$. 
\end{corollary}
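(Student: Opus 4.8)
The plan is to unpack the guarantee of \cref{thm:split} on an unstable input and track $Y$-degrees. By \cref{thm:split}, the outputs satisfy $P_1 = P_{A(\alpha,\beta)}\cdot R_1$ and $P_2 = P_{B(\alpha,\beta)}\cdot R_2$, where $R_1\cdot R_2$ divides $P_{D(\alpha,\beta)}$. Since $P$ has no pure-$X$ factors, $P = P_{A(\alpha,\beta)}\cdot P_{B(\alpha,\beta)}\cdot P_{C(\alpha,\beta)}\cdot P_{D(\alpha,\beta)}$ as the sets $A,B,C,D$ partition $[s]$. Taking $Y$-degrees (which are additive over products), we get
\[
\deg_Y(P_1) + \deg_Y(P_2) = \deg_Y(P_{A(\alpha,\beta)}) + \deg_Y(P_{B(\alpha,\beta)}) + \deg_Y(R_1) + \deg_Y(R_2) \le \deg_Y(P_{A(\alpha,\beta)}) + \deg_Y(P_{B(\alpha,\beta)}) + \deg_Y(P_{D(\alpha,\beta)}),
\]
using $\deg_Y(R_1R_2) \le \deg_Y(P_{D(\alpha,\beta)})$. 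Since this is at most $\deg_Y(P) - \deg_Y(P_{C(\alpha,\beta)}) \le \deg_Y(P)$, the additive bound follows.

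For the strict inequalities, the key point is that instability at $(\alpha,\beta)$ (together with $\deg_Y P(\alpha,Y)\ge 1$) forces \emph{both} $P_{A(\alpha,\beta)}$ and $P_{B(\alpha,\beta)}$ to be proper factors of $P$. First I would argue $\deg_Y(P_1) < \deg_Y(P)$: suppose not; combined with the additive bound this forces $\deg_Y(P_2) = 0$, hence $P_{B(\alpha,\beta)} = 1$ (it is monic-free but we just need its $Y$-degree is $0$, and each factor indexed by $B$ has $Y$-degree $\ge 1$, so $B(\alpha,\beta) = \emptyset$), and also $\deg_Y(P_{C(\alpha,\beta)}) = 0 = \deg_Y(P_{D(\alpha,\beta)})$ forcing $C(\alpha,\beta)$ empty (each factor indexed by $C$ has $\deg_Y \ge 1$) and $R_1 R_2$ a pure-$X$ polynomial dividing $P_{D(\alpha,\beta)}$; since $P$ has no pure-$X$ factors, $P_{D(\alpha,\beta)}$ is a product of factors each of which has $Y$-degree $\ge 1$ but whose specialization at $\alpha$ has $Y$-degree $0$. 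So in this situation every irreducible factor of $P$ lies in $A(\alpha,\beta)\cup D(\alpha,\beta)$, and $\deg_Y(P_1) = \deg_Y(P)$ would require $R_1$ to absorb all of $P_{D(\alpha,\beta)}$ — but $R_1$ has $Y$-degree $0$. Hence all factors lie in $A(\alpha,\beta)$, i.e. $P(\alpha,Y) = \prod_{i} \gamma_i (Y-\beta)^{m_i} = \gamma (Y-\beta)^r$, which says $P$ is stable at $(\alpha,\beta)$ — contradiction. The same argument with the roles of $P_1,P_2$ and $A,B$ swapped gives $\deg_Y(P_2) < \deg_Y(P)$, using instead that if all factors were in $B(\alpha,\beta)\cup D(\alpha,\beta)$ with $R_2$ absorbing $P_{D(\alpha,\beta)}$ impossible, then $P(\alpha,\beta)\ne 0$, again contradicting instability.

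I expect the main obstacle to be the bookkeeping around the set $D(\alpha,\beta)$: its factors have positive $Y$-degree but degenerate upon specialization, so they contribute to $\deg_Y(P)$ without being "seen" by the stability dichotomy, and one must check that the slack $R_1 R_2 \mid P_{D(\alpha,\beta)}$ does not let $P_1$ or $P_2$ secretly recover the full $Y$-degree of $P$. The cleanest way to handle this is to note $\deg_Y(R_i) \le \deg_Y(P_{D(\alpha,\beta)}) \le \deg_Y(P) - \deg_Y(P_{A(\alpha,\beta)}) - \deg_Y(P_{B(\alpha,\beta)}) - \deg_Y(P_{C(\alpha,\beta)})$, and that $P$ being unstable at $(\alpha,\beta)$ rules out the extremal configuration in which $A(\alpha,\beta)$ (resp. $B(\alpha,\beta)$) is everything; a short case check on whether $C(\alpha,\beta)\cup D(\alpha,\beta)$ is empty then closes both strict inequalities.
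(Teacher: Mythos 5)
Your proof of the additive bound $\deg_Y(P_1)+\deg_Y(P_2)\le\deg_Y(P)$ is correct and essentially what the paper does. For the strict inequalities you take a genuinely different route: the paper argues \emph{directly} by showing that the product $P_{B(\alpha,\beta)}(\alpha,Y)\cdot P_{C(\alpha,\beta)}(\alpha,Y)$ is divisible by the factor $\widehat{P}$ in the factorization $P(\alpha,Y)=(Y-\beta)^r\widehat{P}(Y)$ (and hence has degree $\ge 1$), then writes $\deg_Y(P_1)\le\deg_Y(P_A)+\deg_Y(P_D)=\deg_Y(P)-\deg_Y(P_B)-\deg_Y(P_C)\le\deg_Y(P)-1$; the analogous chain with $(Y-\beta)^r$ handles $P_2$. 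You instead argue by contradiction, which in principle also works, but your execution has a gap.

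The gap is in how you handle the $D(\alpha,\beta)$ factors. Assuming for contradiction that $\deg_Y(P_1)=\deg_Y(P)$, the additive bound gives $\deg_Y(P_2)=0$ and hence $B(\alpha,\beta)=\emptyset$. But the next assertion, that $\deg_Y(P_{C(\alpha,\beta)})=0=\deg_Y(P_{D(\alpha,\beta)})$ and that $R_1$ is pure-$X$, is not implied: since $R_1\cdot R_2$ divides $P_{D(\alpha,\beta)}$ and $\deg_Y(P_1)=\deg_Y(P_A)+\deg_Y(R_1)$, the equality $\deg_Y(P_1)=\deg_Y(P)$ forces only $\deg_Y(P_B)=\deg_Y(P_C)=0$ and $\deg_Y(R_1)=\deg_Y(P_D)$, which may well be positive. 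Consequently the step ``$R_1$ has $Y$-degree $0$'' and the conclusion ``all factors lie in $A$'' are unjustified. The contradiction nonetheless closes, but for a different reason: once $B=C=\emptyset$ you have $P=P_A\cdot P_D$, and by the definition of $D$ the specialization $P_D(\alpha,Y)$ is a nonzero \emph{constant}; so $P(\alpha,Y)=P_A(\alpha,Y)\cdot P_D(\alpha,Y)$ is a nonzero scalar multiple of $(Y-\beta)^m$, and $m\ge 1$ because $\deg P(\alpha,Y)\ge 1$ by hypothesis. This contradicts instability directly — no claim about $R_1$ or $\deg_Y(P_D)$ is needed. The symmetric case for $P_2$ has the same issue and the same fix (there $A=C=\emptyset$ implies $P(\alpha,\beta)=P_B(\alpha,\beta)\cdot P_D(\alpha,\beta)\neq 0$). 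In short: your contradiction framework is sound, but you should delete the false claim about $P_{D(\alpha,\beta)}$ and instead use that $P_D$ specializes to a nonzero constant at $X=\alpha$.
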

\begin{proof}
    The \emph{moreover} part of the corrollary follows immediately from the structure and guarantees on  $P_1, P_2$ in \autoref{thm:split}. So, we focus on proving that the $Y$-degrees of both $P_1$ and $P_2$ is strictly smaller than the $Y$-degree of $P$. 

    Since $\deg_Y(P(\alpha, Y))$ is at least one, and $P$ is not stable at $(\alpha, \beta)$, we get from \autoref{def:stability} that $P(\alpha, \beta)$ is zero and $P(\alpha, Y)$ is not a non-zero scalar multiple of a power of $(Y-\beta)$. In other words, there is a univariate $\hat{P} \in \F[Y]$ of degree at least one, a positive integer $r$ such that 
    \[
    P(\alpha, Y) = (Y - \beta)^r \cdot \hat{P}(Y) ,
    \]
    and $\hat{P}(\beta)$ is non-zero.  

 From the above and the definitions of $P_{A(\alpha, \beta)}, P_{B(\alpha, \beta)}, P_{C(\alpha, \beta)}, P_{D(\alpha, \beta)}$, we get that the product $P_{B(\alpha, \beta)}(\alpha, Y)\cdot P_{C(\alpha, \beta)}(\alpha, Y)$ must be divisible by $\hat{P}$, and hence has degree at least one. Furthermore, from \autoref{thm:split}, we have that $P_1$ divides $P_{A(\alpha, \beta)}P_{D(\alpha, \beta)}$. Hence, we have the following sequence of inequalities.  
 \begin{align*}
 \deg_Y(P_1) & \leq \deg_Y(P_{A(\alpha, \beta)}(X,Y)) + \deg_Y(P_{D(\alpha, \beta)}(X,Y)) \\
            & = \deg_Y(P) - (\deg_Y(P_{B(\alpha, \beta)}(X,Y)) + \deg_Y(P_{C(\alpha, \beta)}(X,Y))) \\
            & \leq \deg_Y(P) - (\deg(P_{B(\alpha, \beta)}(\alpha,Y)) + \deg(P_{C(\alpha, \beta)}(\alpha,Y))) \\
            & \leq \deg_Y(P) -1 .
 \end{align*}
 Here, we used the fact that the $Y$ degree cannot increase when $X$ is set to $\alpha$ and the fact that $(\deg(P_{B(\alpha, \beta)}(\alpha,Y)) + \deg(P_{C(\alpha, \beta)}(\alpha,Y)))$ is equal to the degree of the product $P_{B(\alpha, \beta)}(\alpha, Y)\cdot P_{C(\alpha, \beta)}(\alpha, Y)$, which as discussed earlier is at least one. 
 
 Similarly, since $r$ is positive, we get that the product $P_{A(\alpha, \beta)}(\alpha, Y)\cdot P_{C(\alpha, \beta)}(\alpha, Y)$ must be divisible by $(Y-\beta)^r$, and hence has degree at least one.  We again have a sequence of inequalities. 
 \begin{align*}
 \deg_Y(P_2) & \leq \deg_Y(P_{B(\alpha, \beta)}(X,Y)) + \deg_Y(P_{D(\alpha, \beta)}(X,Y)) \\
            & = \deg_Y(P) - (\deg_Y(P_{A(\alpha, \beta)}(X,Y)) + \deg_Y(P_{C(\alpha, \beta)}(X,Y))) \\
            & \leq \deg_Y(P) - (\deg(P_{A(\alpha, \beta)}(\alpha,Y)) + \deg(P_{C(\alpha, \beta)}(\alpha,Y))) \\
            & \leq \deg_Y(P) -1 .\qedhere
 \end{align*}
\end{proof}

\subsection{Details of the algorithm}
We are now ready to describe the deterministic variant of the Guruswami--Sudan algorithm.  
Given $Q(X,Y)$, we first remove any pure-$X$ factors to obtain $Q'$.  
The algorithm then proceeds in two stages:

\begin{enumerate}
  \item \textbf{Refinement.}  
  Initialize $S\leftarrow\{Q'\}$.  
  Repeatedly apply \textsc{Split} to every $P\in S$ and each $(\alpha_j,\beta_j)$, replacing $P$ by its components $P_{A,j},P_{B,j}$, until $S$ becomes stable.
  
  \item \textbf{Extraction.}  
  For each $g\in S$, let $U=\{j : g(\alpha_j,\beta_j)=0\}$.  
  If there exists a degree-$(k-1)$ polynomial $f$ that agrees with $w$ on all indices in $U$, output $f$.
\end{enumerate}
The formal description of the algorithm follows.

\begin{algorithm}[H]
\caption{\textsc{Deterministic-Factorization-for-List-Decoding}}\label{algo:det-gs-algo}
\DontPrintSemicolon
\SetAlgoNoEnd

\KwIn{Received word $w=\{(\alpha_j,\beta_j)\}_{j\in[n]}$ and interpolation polynomial $Q(X,Y)\in\F[X,Y]$ such that $Q(\alpha_j,\beta_j)=0$ for all $j\in[n]$.}
\KwOut{All polynomials $f\in\F[X]$ of degree at most $(k-1)$ that agree with $w$ on at least $\sqrt{(k-1)n}$ coordinates.}
\nonl\hrulefill \\

\BlankLine
\nonl \textbf{Step 1: Remove pure-$X$ factors.} \;
Write $Q(X,Y) = \sum_{\ell} Q^{(\ell)}(X) Y^\ell$ and set $\tilde{Q} \gets \gcd(Q^{(1)}, Q^{(2)}, \ldots)$. \;
Set $Q' \gets Q / \tilde{Q}$ and initialize $S \gets \{ Q'(X,Y) \}$. \;
\If{$Q'$ is \emph{useless} with respect to $\{\alpha_1, \dots, \alpha_n\}$ as per \autoref{def:useless}}      {\Return{$\emptyset$} }

\BlankLine
\nonl \textbf{Step 2: Refinement via \textsc{Split}.} \;
\While{there is a $g \in S$ of $Y$-degree at least $2$ that is not stable}{
  Choose a $j \in [n]$ such that $g$ is not stable with respect to $(\alpha_j,\beta_j)$\;
  Update $S \gets S \setminus \{ g \}$. \;
  Set $(g_1,g_2) \gets \Split(g,(\alpha_j,\beta_j))$\;
  \If{$\deg_Y(g_1) \ge 1$}{
    Write $g_1(X,Y) = \sum_{\ell} g_1^{(\ell)}(X) Y^\ell$ and set $\tilde{g}_1 \gets \gcd(g_1^{(1)}, g_1^{(2)}, \ldots)$. \;
    \If{$g_1 / \tilde{g}_1$ is not \emph{useless} with respect to $\{\alpha_1, \dots, \alpha_n\}$ as per \autoref{def:useless}}      {$S \gets S \cup \{ g_1 / \tilde{g}_1 \}$ }
    }
  \If{$\deg_Y(g_2) \ge 1$}{
    Write $g_2(X,Y) = \sum_{\ell} g_2^{(\ell)}(X) Y^\ell$ and set $\tilde{g}_2 \gets \gcd(g_2^{(1)}, g_2^{(2)}, \ldots)$. \;
   \If{$g_2 / \tilde{g}_2$ is not \emph{useless} with respect to $\{\alpha_1, \dots, \alpha_n\}$ as per \autoref{def:useless}}      {$S \gets S \cup \{ g_2 / \tilde{g}_2 \}$ }
    }
}

\BlankLine
\nonl \textbf{Step 3: Extract codewords.} \;
Initialize $L \gets \emptyset$. \;
\For{$g \in S$}{
  \If{$\deg(g) = 1$ and $g$ is of the form $\gamma(Y-f)$ for $\gamma \cdot \in \F\setminus \{0\}$, $deg(f) < k$} {
   \nonl \lIf{$f$ agrees with $w$ on at least $\sqrt{(k-1)n}$ coordinates}{
      \nonl $L \gets L \cup \{ f \}$.
  }}
  \Else{
  Set $U \gets \{ j \in [n] : g(\alpha_j,\beta_j)=0 \}$. \;
  \If{there exists $f \in \F[X]$ with $\deg(f) \le (k-1)$ and $f(\alpha_j)=\beta_j$ for all $j \in U$}{
    \nonl \lIf{$f$ agrees with $w$ on at least $\sqrt{(k-1)n}$ coordinates}{
      \nonl $L \gets L \cup \{ f \}$. 
    }
  }
  }
}
\Return{$L$}
\end{algorithm}

\subsection{Correctness of the algorithm}

We now argue the correctness of \cref{algo:det-gs-algo}.  

If the polynomial $Q'$ is useless with respect to $\{\alpha_1, \ldots, \alpha_n\}$, then, from \autoref{obs:uselss}, we get that $Q'$ has no factors of the form $Y-f$ for an $f \in \F[X]$ and the algorithm is clearly correct and terminates in polynomial time. Thus, for the rest of the this discussion, we assume that this is not the case. 

Suppose the while loop terminates after $T$ iterations, and let $S_t$ denote the set of polynomials at the end of the $t$-th iteration, for $t\in\{0,\dots,T\}$.  
Thus $S_0=\{Q'(X,Y)\}$ and $S_T$ is the final stable set.

To prove correctness, we proceed in two steps:
\begin{enumerate}
  \item For every $f$ in the true list $L$ and every iteration $t$, there exists some $g\in S_t$ such that $(Y-f(X))\mid g(X,Y)$.  
  \item For the final stable set $S_T$, the zeros of such a $g$ correspond exactly to the agreement points of $f$ with the received word $w$.  
\end{enumerate}
Since the number of agreement points exceeds $\sqrt{(k-1)n}>k-1$, the polynomial $f$ can be uniquely recovered by interpolation in Step~3 of \cref{algo:det-gs-algo}.  

We now formalize these claims, but before that state the following simple observation that is immediately evident from the description of the algorithm. 

\begin{observation}\label{obs:useful-and-content}
    For every $t \in \{1, 2, \ldots, T\}$, and for every polynomial $g(X,Y)$ in the set $S_t$, the following are true. 
    \begin{itemize}
        \item $g$ is not useless with respect to $\{\alpha_1, \ldots, \alpha_n\}$. 
        \item If $g = \sum_{\ell} g_i(X) Y^{\ell}$, then $\gcd(g_0(X), g_1(X), \ldots )$ equals $1$. In particular, $g$ has no pure $X$ factors. 
    \end{itemize}
\end{observation}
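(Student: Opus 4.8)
The plan is a direct induction on $t$, reading the two invariants off the bookkeeping in \cref{algo:det-gs-algo}. The point is that an element enters $S$ inside the while loop only after the algorithm has (i) divided out its $X$-content and (ii) explicitly checked that the resulting polynomial is not useless with respect to $\{\alpha_1,\dots,\alpha_n\}$ in the sense of \autoref{def:useless}; thus both asserted properties are enforced at the moment of insertion, and neither can be lost afterwards, since elements are only ever deleted from $S$, never edited.

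In more detail: if $T=0$ the statement is vacuous, so assume $T\ge 1$. As $S_0=\{Q'\}$ has a single element, the first pass through the loop must pick $g=Q'$, delete it, and (possibly) insert some of $g_1/\tilde g_1,\ g_2/\tilde g_2$; so for $t\ge 1$ either $Q'\notin S_t$, or --- if $Q'$ were ever reinserted --- $Q'$ itself already satisfies both bullets: it is not useless by the standing assumption under which Step~2 is reached (otherwise Step~1 returns $\emptyset$), and it has trivial $X$-content by construction in Step~1. Hence it suffices to verify the two properties for an arbitrary polynomial inserted inside the loop body, say $h=g_1/\tilde g_1$ with $\tilde g_1$ the $X$-content (gcd of all $\F[X]$-coefficients) of $g_1$; the case of $g_2/\tilde g_2$ is identical, and note that dividing a bivariate by a nonzero polynomial in $X$ alone changes neither its $Y$-degree nor the guard $\deg_Y(g_1)\ge 1$ under which the insertion is performed. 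That insertion sits inside the branch guarded by ``$g_1/\tilde g_1$ is not useless with respect to $\{\alpha_1,\dots,\alpha_n\}$'', which is exactly the first bullet. For the second, $h$ equals $g_1$ divided by the gcd of its $\F[X]$-coefficients, so the $\F[X]$-coefficients of $h$ have gcd $1$; and since a polynomial $p(X)\in\F[X]$ of positive degree divides $h\in\F[X,Y]$ precisely when $p$ divides every coefficient of $h$, this also gives that $h$ has no pure-$X$ factor. Putting the pieces together: the elements of $S_t$ inherited from $S_{t-1}$ are covered by the induction hypothesis, the freshly inserted ones by the argument just given, and $Q'$ (should it be present) by the remark above.

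I do not expect a genuine obstacle; the only thing requiring care is the bookkeeping --- keeping in mind that $S_t$ accumulates elements over all iterations $t'\le t$ (so the induction hypothesis, rather than a new argument, handles the inherited ones), and that the content-removal step of \cref{algo:det-gs-algo} is the gcd over \emph{all} $\F[X]$-coefficients, which is what makes the quotient a genuine polynomial with trivial $X$-content. No property of \Split\ beyond the fact that it returns a pair of bivariate polynomials is needed for this observation.
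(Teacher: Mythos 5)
The proposal is correct. The paper states this observation without proof, calling it ``immediately evident from the description of the algorithm,'' and your induction on $t$ simply makes that evidence explicit: every element of $S_t$ is either inherited from $S_{t-1}$ (handled by the induction hypothesis) or freshly inserted during iteration $t$, and the algorithm inserts a polynomial only after dividing out its $X$-content and explicitly checking non-uselessness, so both bullets hold at insertion and are never disturbed thereafter since elements of $S$ are only removed, not modified. The aside about $Q'$ possibly being ``reinserted'' is harmless but unnecessary --- any reinserted copy is itself a freshly inserted element and thus passes the same checks (and in fact by \cref{cor:split-on-unstable-inputs} the $Y$-degree strictly drops after a split, so $Q'$ can never reappear) --- but it does not affect the validity of the argument.
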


We now state and prove the first technical lemma. 
\begin{lemma}\label{lem:list-survive}
For every $f \in L$ and every $t \in \{0,\dots,T\}$, there exists a polynomial $g\in S_t$ such that $(Y-f(X))\mid g(X,Y)$.
\end{lemma}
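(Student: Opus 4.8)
The plan is to induct on $t$. The base case $t=0$ is immediate: $S_0 = \{Q'(X,Y)\}$, and since each $f \in L$ satisfies $Q(X,f(X)) \equiv 0$ (by \cref{lem:close-enough-polynomials-are-roots-GS}), and removing pure-$X$ factors from $Q$ to get $Q'$ does not destroy any factor of the form $(Y-f(X))$ (such a factor is monic in $Y$, hence coprime to any pure-$X$ polynomial), we get $(Y-f(X)) \mid Q'(X,Y)$. So the single element $g = Q'$ works.

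For the inductive step, suppose the claim holds at the end of iteration $t-1$: fix $f \in L$ and let $g \in S_{t-1}$ with $(Y-f(X)) \mid g(X,Y)$. If $g$ is not the polynomial chosen for splitting in iteration $t$, then $g$ survives into $S_t$ (the only polynomial removed is the chosen one), and we are done. So assume $g$ is the polynomial chosen, split at some point $(\alpha_j,\beta_j)$ where $g$ is not stable, producing $(g_1,g_2) = \Split(g,(\alpha_j,\beta_j))$. Now I would case on whether $f(\alpha_j) = \beta_j$. By the reasoning behind \cref{obs:list-membership-category} (applied to $g$ in place of $Q$, which is legitimate since $g$ has no pure-$X$ factors by \cref{obs:useful-and-content} and each $f\in L$ with $(Y-f)\mid g$ relates to the $A/B$ partition of $g$'s irreducible factors exactly as in that observation): if $f(\alpha_j) = \beta_j$ then $(Y-f(X)) \mid g_{A(\alpha_j,\beta_j)}$, and since \cref{thm:split} guarantees $g_1 = g_{A(\alpha_j,\beta_j)} \cdot R_1$, we get $(Y-f(X)) \mid g_1$; symmetrically, if $f(\alpha_j) \ne \beta_j$ then $(Y-f(X)) \mid g_{B(\alpha_j,\beta_j)} \mid g_2$.

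It remains to check that the relevant $g_i$ actually lands in $S_t$ rather than being discarded. Since $(Y-f(X)) \mid g_i$, the polynomial $g_i$ has $Y$-degree at least $1$, so the algorithm enters the corresponding \textbf{if} block and passes to $g_i / \tilde{g}_i$ where $\tilde{g}_i$ is the $X$-content; again $(Y-f(X))$ is monic in $Y$ and coprime to $\tilde{g}_i \in \F[X]$, so $(Y-f(X)) \mid g_i/\tilde{g}_i$. Finally, $g_i/\tilde{g}_i$ cannot be useless with respect to $\{\alpha_1,\dots,\alpha_n\}$: by \cref{obs:uselss}, a useless polynomial has no factor of the form $(Y-\text{anything})$, contradicting $(Y-f(X)) \mid g_i/\tilde{g}_i$. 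Hence $g_i/\tilde{g}_i$ is added to $S_t$, and it is the desired witness.

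The main obstacle I anticipate is the bookkeeping around \cref{obs:list-membership-category} in the non-monic setting: that observation was stated for the interpolation polynomial $Q$, and I need to be careful that applying the analogous statement to an arbitrary $g \in S_{t-1}$ (which is non-monic but pure-$X$-factor-free) is valid. The key point making this go through is that a factor $(Y-f(X))$ is monic in $Y$, so by Gauss's lemma (\cref{lem:gauss}) it appears as an honest irreducible factor of $g$ in $\F[X,Y]$, and when $X \mapsto \alpha_j$ its image $(Y - f(\alpha_j))$ forces membership in the $A$-part (if $f(\alpha_j)=\beta_j$) or the $B$-part (if $f(\alpha_j)\neq\beta_j$) of the local splitting — exactly as argued for $Q$. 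Everything else is routine divisibility tracking, using repeatedly that $(Y-f(X))$ is coprime to every pure-$X$ polynomial.
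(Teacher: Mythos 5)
Your proof is correct and follows the same inductive strategy as the paper's. In fact, it is somewhat more careful than the paper's version in a few places: you explicitly handle the (routine but necessary) case where $g$ is not the polynomial chosen at iteration $t$; you spell out why the $X$-content removal steps cannot kill the factor $(Y-f(X))$ (coprimality of a $Y$-monic polynomial with pure-$X$ polynomials); and you flag, correctly, that \cref{obs:list-membership-category} is stated for $Q$ but its argument applies verbatim to any pure-$X$-factor-free $g$ with $(Y-f)\mid g$ — the paper simply invokes the observation for $g$ without comment. These are useful clarifications rather than deviations; the approach and the key facts used (\cref{lem:close-enough-polynomials-are-roots-GS}, \cref{obs:list-membership-category}, \cref{thm:split}, \cref{obs:uselss}) match the paper exactly.
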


\begin{proof}
We proceed by induction on $t$.  
For the base case $t=0$, the claim follows directly from \cref{lem:close-enough-polynomials-are-roots-GS}, since $(Y-f(X))\mid Q(X,Y)$ (and hence, $(Y-f)\mid Q'$) for every $f\in L$.

Assume the claim holds for $S_{t-1}$, and let $g\in S_{t-1}$ be such that $(Y-f(X))\mid g(X,Y)$.  
Consider the $t$-th iteration of the while loop, when $g$ is selected for splitting at some point $(\alpha_j,\beta_j)$. For this to happen the $Y$-degree of $g$ is at least two.  

If $g$ is already stable, it remains in $S_t$, and the inductive claim holds trivially.  
Otherwise, $(g_1,g_2)\gets\textsc{Split}(g,(\alpha_j,\beta_j))$.  
By \cref{obs:list-membership-category}, if $f(\alpha_j)=\beta_j$, then $(Y-f(X))$ divides $g_{A,j}$ ; otherwise it divides $g_{B,j}$. From \autoref{thm:split}, we get that in the first case, $(Y-f(X))$ divides $g_1$ and in the latter case, it divides $g_2$. 

If $Y-f$ divides $g_1$, then $g_1$ clearly has $Y$-degree at least one, and from \autoref{obs:uselss}, it cannot be useless with respect to $\{\alpha_1, \ldots, \alpha_n\}$, and it is added to the set $S_t$ (after removing some pure $X$ factors). Similarly, if $Y-f$ divides $g_2$, then $g_2$ clearly has $Y$-degree at least one, and from \autoref{obs:uselss}, it cannot be useless with respect to $\{\alpha_1, \ldots, \alpha_n\}$, and it is added to the set $S_t$ (again after removing some pure $X$ factors). 

Thus $S_t$ continues to contain a polynomial divisible by $(Y-f(X))$, completing the induction.
\end{proof}

\begin{lemma}\label{lem:zeros-match-agreements}
For every $f\in L$, there exists $g\in S_T$ such that, for all $j\in[n]$,
\[
  g(\alpha_j,\beta_j)=0 \iff f(\alpha_j)=\beta_j.
\]
\end{lemma}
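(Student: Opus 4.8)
I want to show that for every $f \in L$, some $g$ in the final stable set $S_T$ has its zero-set (among the received points) exactly equal to the agreement set of $f$ with $w$. By \cref{lem:list-survive}, pick $g \in S_T$ with $(Y - f(X)) \mid g(X,Y)$. Since $S_T$ is stable, $g$ is stable with respect to every $(\alpha_j, \beta_j)$, meaning for each $j$ either (i) $g(\alpha_j, Y) = \gamma_j (Y - \beta_j)^{r_j}$ for some $\gamma_j \neq 0$ and $r_j \geq 1$, or (ii) $g(\alpha_j, \beta_j) \neq 0$. The goal is to check the biconditional $g(\alpha_j,\beta_j) = 0 \iff f(\alpha_j) = \beta_j$ in both cases.

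\textbf{Key steps.} First, the direction $f(\alpha_j) = \beta_j \implies g(\alpha_j, \beta_j) = 0$ is immediate: since $(Y - f(X)) \mid g(X,Y)$, setting $X = \alpha_j$ gives $(Y - f(\alpha_j)) \mid g(\alpha_j, Y)$, so $g(\alpha_j, f(\alpha_j)) = 0$; if $f(\alpha_j) = \beta_j$ this says $g(\alpha_j, \beta_j) = 0$. (I should note $g(\alpha_j, Y)$ is not identically zero, since by \cref{obs:useful-and-content} $g$ has no pure-$X$ factors and in particular is not divisible by $(X - \alpha_j)$; also $\deg_Y g(\alpha_j,Y) \geq 1$ by stability, so this is a genuine divisibility of nonzero univariates.) For the converse, suppose $g(\alpha_j, \beta_j) = 0$. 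Then case (ii) of stability is ruled out, so we are in case (i): $g(\alpha_j, Y) = \gamma_j (Y - \beta_j)^{r_j}$. But we also know $(Y - f(\alpha_j)) \mid g(\alpha_j, Y) = \gamma_j(Y - \beta_j)^{r_j}$. A linear factor of $\gamma_j (Y-\beta_j)^{r_j}$ must be a scalar multiple of $(Y - \beta_j)$, forcing $f(\alpha_j) = \beta_j$. This closes the biconditional for every $j$.

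\textbf{The main obstacle.} The argument above is essentially bookkeeping once stability is in hand; the only subtlety is making sure $g(\alpha_j, Y)$ is a nonzero polynomial with $\deg_Y \geq 1$ so that the divisibility statements are meaningful and linear factors behave as expected — this is exactly what \cref{obs:useful-and-content} (no pure-$X$ factors, not useless) together with \cref{def:stability} ($\deg_Y g(\alpha_j, Y) \geq 1$) guarantee. There is no real obstacle here beyond carefully invoking these two facts; the conceptual work was already done in proving that the refinement process terminates with a stable set (the potential-function argument) and in \cref{lem:list-survive}. So the proof of \cref{lem:zeros-match-agreements} is a short direct argument combining \cref{lem:list-survive}, the definition of stability, and \cref{obs:useful-and-content}.
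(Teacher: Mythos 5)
Your proof is correct and uses essentially the same ingredients as the paper: pick $g\in S_T$ with $(Y-f(X))\mid g$ via \cref{lem:list-survive}, and then exploit the stability of $g$ at every $(\alpha_j,\beta_j)$ together with the fact (from \cref{obs:useful-and-content}) that $g(\alpha_j,Y)$ is a nonzero univariate of positive degree. The only cosmetic difference is presentational: the paper argues by contradiction, showing that a false positive $g(\alpha_j,\beta_j)=0$, $f(\alpha_j)\neq\beta_j$ would force a second irreducible factor of $g$ vanishing at $(\alpha_j,\beta_j)$ and hence $\deg g(\alpha_j,Y)\geq 2$ with two distinct linear factors, violating stability; you instead apply the dichotomy in \cref{def:stability} directly and note that case (i) pins $f(\alpha_j)$ to $\beta_j$ because $(Y-f(\alpha_j))$ must divide $\gamma_j(Y-\beta_j)^{r_j}$. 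Both arguments are the same mathematics, and yours is, if anything, a touch cleaner.
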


\begin{proof}
From \cref{lem:list-survive}, there exists $g\in S_T$ with $(Y-f(X))\mid g(X,Y)$.  We work with this $g$ for the rest of the proof. 

Clearly, $g(\alpha_j,\beta_j)=0$ whenever $f(\alpha_j)=\beta_j$. Suppose for contradiction that there exists $j$ such that $g(\alpha_j,\beta_j)=0$ but $f(\alpha_j)\ne\beta_j$. Then $g$ must have some irreducible factor $\hat g(X,Y)$ distinct from $(Y-f(X))$ with $\hat g(\alpha_j,\beta_j)=0$ and $Y$-degree at least 1 (the latter is true because we have shaved off all pure $X$ factors). Furthermore, $\hat{g}(\alpha_j, Y)$ also has $Y$-degree at least one, since if this is not the case, the only way $\hat{g}(\alpha_j, \beta_j)$ can be zero is if $\hat{g}(\alpha_j, Y)$ is zero. But this would mean that $g$ has pure $X$-factors, hence contradicting \autoref{obs:useful-and-content}.  

Thus, the univariate $g(\alpha_j, Y)$ has $Y$-degree at least two, and is divisible by $(Y-f(\alpha_j))$ and $\hat{g}(\alpha_j, Y)$, both of which have $Y$ degree at least one. From \autoref{def:stability}, we get that $g$ is not stable at $(\alpha_j, \beta_j)$, contradicting the stability of $S_T$.  

Therefore, the zeros of $g(X,Y)$ coincide exactly with the agreement points of $f$ and $w$.
\end{proof}

By \cref{lem:zeros-match-agreements}, for every $f\in L$ there exists a polynomial $g\in S_T$ whose zero set identifies precisely the coordinates where $f$ and $w$ agree.  
Since the number of agreement points is greater than $\sqrt{(k-1)n}$ and thus exceeds $k-1$, $f$ can be uniquely recovered by interpolation in Step~3 of \cref{algo:det-gs-algo}.  

This completes the proof of correctness of the deterministic list-decoding algorithm.

\subsection{Running-time analysis}

The algorithm is clearly deterministic (assuming \cref{thm:split} for the \textsc{Split} subroutine).  
We now show that it runs in time $\poly(n, \log |\F|)$.  
The extraction step (Step~3) of \cref{algo:det-gs-algo} runs in $\poly(n, \log |\F|, |S|)$ time, where $S$ is the set of polynomials obtained after refinement (Step~2).  
Hence, it suffices to show that the refinement step (Step~2) terminates in polynomial time and that $|S|$ is polynomially bounded.

\begin{lemma}
The refinement step (Step~2) of \cref{algo:det-gs-algo} terminates in $\poly(\deg(Q), \log |\F|) = \poly(n, \log |\F|)$ time.
\end{lemma}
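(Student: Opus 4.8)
The plan is to set up a potential function $\Phi(S) = \sum_{g \in S} \deg_Y(g)$ and to argue two things: (i) every time the while loop splits a polynomial $g$ into the two pieces $g_1, g_2$ (after shaving pure-$X$ content), the potential does not increase, and in fact the total $Y$-degree is preserved or drops; and (ii) each individual split strictly decreases the $Y$-degree of each surviving piece (this is exactly \autoref{cor:split-on-unstable-inputs}), so no single polynomial can be split indefinitely. Combining these, I would bound the number of iterations of the while loop. The initial potential is $\deg_Y(Q') \le \deg_Y(Q) = \poly(n)$.

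First I would record the key monotonicity facts. When the loop picks an unstable $g$ with $\deg_Y(g) \ge 2$ and calls $\textsc{Split}(g, (\alpha_j,\beta_j))$ to get $(g_1, g_2)$, \autoref{cor:split-on-unstable-inputs} gives $\deg_Y(g_1) + \deg_Y(g_2) \le \deg_Y(g)$ and moreover $\deg_Y(g_i) < \deg_Y(g)$ for $i = 1,2$. Shaving pure-$X$ content and discarding useless pieces only lowers $Y$-degrees further (or removes a term entirely), so after the update step $\Phi(S_t) \le \Phi(S_{t-1})$; the potential is non-increasing throughout. This already shows $|S| = \poly(n)$: since each $g$ added to $S$ always has $\deg_Y(g) \ge 1$ (we only add pieces with $\deg_Y \ge 1$ after shaving), we have $|S_t| \le \Phi(S_t) \le \Phi(S_0) = \deg_Y(Q) = \poly(n)$ at every step.

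The remaining — and main — obstacle is bounding the \emph{number of iterations}, since a non-increasing potential alone does not forbid an infinite sequence of splits that keep $\Phi$ constant (e.g. repeatedly splitting a degree-$2$ polynomial into two degree-$1$ pieces, each of which is discarded or re-absorbed). To handle this I would refine the potential to something like $\Psi(S) = \sum_{g \in S} \binom{\deg_Y(g)+1}{2}$ (or $\sum_g \deg_Y(g)^2$), which is \emph{strictly} decreasing at every split: by the strict-inequality part of \autoref{cor:split-on-unstable-inputs}, writing $d = \deg_Y(g)$, $d_1 = \deg_Y(g_1)$, $d_2 = \deg_Y(g_2)$ with $d_1, d_2 < d$ and $d_1 + d_2 \le d$, one checks $\binom{d_1+1}{2} + \binom{d_2+1}{2} < \binom{d+1}{2}$ (the function $x \mapsto \binom{x+1}{2}$ is strictly convex, so splitting a positive integer into two strictly smaller nonnegative integers whose sum is at most the original strictly decreases it; shaving content and discarding useless pieces only decreases $\Psi$ further). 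Since $\Psi$ is a nonnegative integer that strictly decreases each iteration and starts at $\Psi(S_0) = \binom{\deg_Y(Q')+1}{2} = \poly(n)$, the while loop runs for at most $\poly(n)$ iterations.

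Finally I would assemble the running-time bound: there are $\poly(n)$ iterations; each iteration tests stability of the chosen $g$ against all $n$ points (each test is a polynomial division and GCD computation in $\F[Y]$, hence $\poly(\deg Q, \log|\F|)$ time by the deterministic building blocks recalled in the preliminaries), calls \textsc{Split} once (cost $\poly(\deg g, \log|\F|) = \poly(n, \log|\F|)$ by \autoref{thm:split}), and performs two content computations (GCD of $\poly(n)$ univariates, $\poly(n,\log|\F|)$ time) and two uselessness checks (evaluating $g(\alpha_i, Y)$ for each $i$, $\poly(n,\log|\F|)$ time). Multiplying the per-iteration cost by the iteration count and noting $\deg_Y(Q) = \poly(n)$ and $\deg(Q) = \poly(n)$ gives total time $\poly(n, \log|\F|)$, as claimed. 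The one point requiring care is that $Y$-degrees of intermediate polynomials never blow up — but this is exactly what the potential argument guarantees, since $\deg_Y(g) \le \Phi(S) \le \deg_Y(Q)$ for every $g$ ever in $S$, and the total degree in $X$ is controlled because \textsc{Split} and content removal only produce divisors of $P$ (up to the factors $R_1, R_2$ dividing $P_{D}$), so all polynomials arising divide $Q$ and have degree $\le \deg Q$.
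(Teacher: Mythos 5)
Your proof is correct and follows essentially the same strategy as the paper: a potential-function argument built on \autoref{cor:split-on-unstable-inputs}, with the key facts $d_1,d_2 < d$ and $d_1+d_2\le d$ driving a strict decrease. The only difference is the choice of potential. You try the naive linear sum $\sum_g \deg_Y(g)$, correctly observe it is merely non-increasing (it stalls when a degree-$d$ piece splits as $d_1+d_2=d$ with both parts surviving), and repair this with the quadratic $\Psi(S)=\sum_g \binom{\deg_Y(g)+1}{2}$, whose strict convexity plus $f(0)=0$ forces a drop of at least $d_1 d_2 \ge 1$ in the both-survive case (and a larger drop otherwise). The paper instead subtracts $1$ per set element, using $\Phi(S)=\sum_g(\deg_Y(g)-1)$; here splitting one element into two incurs an extra $-1$ which exactly compensates for the case $d_1+d_2=d$, and the case analysis is slightly more explicit. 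The paper's linear potential yields a tighter bound of $O(\deg_Y Q)$ iterations versus your $O(\deg_Y(Q)^2)$, but both are $\poly(n)$ and both give the claimed $\poly(n,\log|\F|)$ running time once combined, as you do, with the per-iteration cost bounds and the observation that all polynomials ever placed in $S$ divide $Q'$ and hence have bounded total degree.
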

\begin{proof}
Let $S$ denote the set of polynomials at any stage of refinement step (Step~2).  
We now show, via a potential-function argument, that the refinement process terminates after only polynomially many updates.

Partition $S$ by $Y$-degree: let $S^{(i)} = \{g \in S \mid \deg_Y(g)=i\}$ for $i\in[\deg_Y(Q)]$.  
Define the potential
\[
  \Phi(S) = \sum_{i=1}^{\deg_Y(Q)} (i-1)\,|S^{(i)}|.
\]
Initially, $S=\{Q(X,Y)\}$ and $\Phi(S)=\deg_Y(Q)-1$.  
We will show that if $S$ is not already stable, then $\Phi(S)$ decreases by at least~$1$ in the next iteration, implying that the refinement loop executes at most $\deg_Y(Q)$ times.

Consider a refinement step where the set $S$ is modified to the set $S'$ by splitting $g\in S$ with respect to the point $(\alpha_j,\beta_j)$. For $g$ to be split, the $Y$-degree of $g$ must be at least two and there must be a point $(\alpha_j, \beta_j)$ such that $g$ is not stable at $(\alpha_j, \beta_j)$. Let $(g_1, g_2)$ be the output of \Split\ on input $g, (\alpha_j, \beta_j)$. 

Note that if neither of the polynomials $g_1' = g_1/\tilde{g_1}$ and  $g_2' = g_2/\tilde{g_2}$ is added to the set $S$ (to obtain $S'$) in this iteration, then the potential for $S'$ is clearly strictly less than that of $S$, since the polynomial $g$ with $Y$-degree at least two is removed from $S$ in the iteration. Thus, at least one of $g_1', g_2'$ is in $S'$. We now argue that the potential of $S'$ is strictly smaller than the potential of $S$ using some case analysis. 

Let $d=\deg_Y(g)$, $d_1 = \deg_Y(g_1')$ and $d_2 = \deg_Y(g_2')$. 
We now consider some cases. 

\begin{itemize}
    \item \textbf{$d_1 \geq 1, d_2 \geq 1$: }In this case, $\Phi(S') - \Phi(S) \leq -(d-1) + (d_1 -1 ) + (d_2-1)$, which is clearly at most $-1$ since $d_1 + d_2 \leq d$ from \autoref{cor:split-on-unstable-inputs}. 
    \item \textbf{$d_1 = d_2 = 0$: } In this case, $\Phi(S') - \Phi(S) = -(d-1)$ which is at most $-1$ since $d$ is at least $2$ (else $g$ could not have been chosen  for splitting).
    \item \textbf{$d_1 \geq 1, d_2 = 0$: } In this case, $\Phi(S') - \Phi(S) \leq -(d-1) + (d_1-1) = d_1 - d$, but this is again at most $-1$ since $d_1 < d$ from \autoref{cor:split-on-unstable-inputs}. 
    \item \textbf{$d_2 \geq 1, d_1 = 0$: } This is the same as the previous case. 
\end{itemize}

Since the initial potential is at most $\deg_Y(Q)-1$, the refinement loop can execute at most $\deg_Y(Q)$ iterations, each of which runs in $\poly(\deg(Q), \log|\F|)$ time.  
Hence, Step~2 terminates in $\poly(n,\log|\F|)$ time.
\end{proof}
  
\section{The \textsc{Split} algorithm}\label{sec:split}

In this section, we prove \cref{thm:split}, completing the proof of \cref{thm:det-gs} and hence of \cref{thm:intro-main}.  
For completeness, we restate the theorem below.

\splitthem*

\subsection{The Hensel lifting paradigm}\label{sec:hensel}

Our main ingredient is the classical technique of \emph{Hensel lifting}, which allows one to lift a local factorization of a polynomial modulo a power of $(X-\alpha)$ to a higher power.  
This may be viewed as the analog of Newton iteration for factoring, as opposed to root lifting.  
We recall the version we use below and refer to~\cite{SinhababuT2021,KoppartySS2015} for a proof.

\begin{lemma}[Hensel Lifting~{\cite{SinhababuT2021}}]\label{lem:HL}
Let $\F$ be a finite field, and let $P,g,h,a,b\in\F[X,Y]$, $\alpha\in\F$, and $m\ge1$ satisfy
\[
P \equiv g h \bmod{\langle X-\alpha\rangle^m}
\quad\text{and}\quad
a g + b h \equiv 1 \bmod{\langle X-\alpha\rangle^m}.
\]
Then the following hold.
\begin{enumerate}
  \item \text{(Existence)}  
  There exist $g',h'\in\F[X,Y]$ such that
  \begin{enumerate}
    \item $P\equiv g'h'\bmod{\langle X-\alpha \rangle^{2m}}$
    \item $g'\equiv g\bmod{\langle X-\alpha \rangle^m}$ and $h'\equiv h\bmod{\langle X-\alpha \rangle^m}$
    \item there exist $a',b'\in\F[X,Y]$ with $a'g' + b'h' \equiv 1 \bmod{\langle X-\alpha\rangle^{2m}}$. 
    \end{enumerate}
  We call $g',h'$ the \emph{lifts} of $g,h$ respectively and $a',b'$ the corresponding \emph{Bezout pairs}.
  \item \text{(Uniqueness)}  
  For any other lifts $g^*,h^*$ of $g,h$, there exists $u\in\langle X-\alpha\rangle^m$ such that
  \[
    g^* \equiv g'(1+u) \bmod{\langle X-\alpha\rangle^{2m}},
    \qquad
    h^* \equiv h'(1-u) \bmod{\langle X-\alpha\rangle^{2m}}.
  \]
  \item \text{(Efficiency)}  
  There is a deterministic algorithm that, given $P,g,h,a,b,m,\alpha$, outputs $g',h',a',b'$ in time $\poly(\deg(P),\deg(a),\deg(b),\log|\F|,m)$.
\end{enumerate}
\end{lemma}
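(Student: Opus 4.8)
The plan is to establish all three parts by the classical additive-correction (Newton-type) step for factoring, carrying out every computation in the quotient ring $\F[X,Y]/\langle X-\alpha\rangle^{2m}$, in which each element has a well-defined $(X-\alpha)$-adic truncation and each ring operation is a single deterministic polynomial addition or multiplication. The only structural fact I will repeatedly use is that a product of two elements of $\langle X-\alpha\rangle^m$ lies in $\langle X-\alpha\rangle^{2m}$, so quadratic error terms can be discarded modulo $\langle X-\alpha\rangle^{2m}$.

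For \textbf{existence}, set $e := P - gh$, which lies in $\langle X-\alpha\rangle^m$ by hypothesis, and define $g' := g + be$ and $h' := h + ae$. Since $be, ae \in \langle X-\alpha\rangle^m$, condition 1(b) is immediate. For 1(a), expand $g'h' = gh + e(ag+bh) + abe^2$; the term $abe^2$ lies in $\langle X-\alpha\rangle^{2m}$ because $e^2$ does, and writing $ag+bh = 1 + c$ with $c \in \langle X-\alpha\rangle^m$ gives $e(ag+bh) = e + ec \equiv e \bmod \langle X-\alpha\rangle^{2m}$; hence $g'h' \equiv gh + e = P \bmod \langle X-\alpha\rangle^{2m}$. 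For 1(c), observe that $ag' + bh' \equiv ag + bh \equiv 1 \bmod \langle X-\alpha\rangle^m$, so $\delta := 1 - (ag'+bh') \in \langle X-\alpha\rangle^m$; taking $a' := a(1+\delta)$ and $b' := b(1+\delta)$ yields $a'g' + b'h' = (1+\delta)(ag'+bh') = (1+\delta)(1-\delta) = 1 - \delta^2 \equiv 1 \bmod \langle X-\alpha\rangle^{2m}$.

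For \textbf{uniqueness}, let $g^*, h^*$ be another lift and put $\Delta g := g^* - g'$ and $\Delta h := h^* - h'$, both in $\langle X-\alpha\rangle^m$. From $g'h' \equiv g^*h^* \bmod \langle X-\alpha\rangle^{2m}$, expanding and discarding the quadratic term $\Delta g\,\Delta h$ gives the key relation $g'\Delta h + h'\Delta g \equiv 0 \bmod \langle X-\alpha\rangle^{2m}$, call it $(\star)$. Define $u := a'\Delta g - b'\Delta h \in \langle X-\alpha\rangle^m$. Using $a'g' + b'h' \equiv 1$ together with $(\star)$, one computes $g'u = (a'g')\Delta g - (b'g')\Delta h \equiv (1 - b'h')\Delta g - (b'g')\Delta h = \Delta g - b'(h'\Delta g + g'\Delta h) \equiv \Delta g \bmod \langle X-\alpha\rangle^{2m}$, so $g^* = g' + \Delta g \equiv g'(1+u)$; symmetrically $-h'u \equiv \Delta h$, hence $h^* \equiv h'(1-u)$, both modulo $\langle X-\alpha\rangle^{2m}$, which is the asserted form.

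For \textbf{efficiency}, the algorithm computes $e$, then $ae$ and $be$, then $g'$ and $h'$, then $\delta$ and $\delta a$, $\delta b$, then $a'$ and $b'$ — a constant number of polynomial additions and multiplications, each truncated modulo $\langle X-\alpha\rangle^{2m}$ (equivalently, after the substitution $X \mapsto X + \alpha$, truncating $X$-degree below $2m$). Each such operation runs in $\poly(\deg P, \deg a, \deg b, m, \log|\F|)$ time, giving the claimed bound. There is no genuine difficulty here beyond bookkeeping: the points requiring care are to observe that all produced objects are honest polynomials (guaranteed since $e = P - gh$ is a polynomial divisible by $(X-\alpha)^m$ and all later steps are polynomial operations) rather than formal power series, and to perform the truncation modulo $\langle X-\alpha\rangle^{2m}$ consistently so that degrees, and hence the running time, remain controlled through the iteration.
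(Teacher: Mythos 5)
Your construction is exactly the standard one the paper relies on (and which its proof of \cref{lem:hensel-degree-bounds} reproduces from \cite{SinhababuT2021}): $g'=g+be$, $h'=h+ae$ with $e=P-gh$, and $a'=a(1+\delta)$, $b'=b(1+\delta)$ with $\delta=1-(ag'+bh')$, which is precisely the recurrence $a_{i+1}=a_i-a_iq_{i+1}$, $b_{i+1}=b_i-b_iq_{i+1}$ cited there; the verification of (a)--(c), the uniqueness argument via $u=a'\Delta g-b'\Delta h$, and the efficiency bookkeeping are all correct. The proposal is correct and takes essentially the same route as the proof the paper points to.
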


\paragraph*{Degree bounds under iterative Hensel lifting.}
In standard applications of Hensel lifting for polynomial factorization, the polynomial $P$ is typically assumed to be monic, in which case degree growth during lifting is straightforward to bound.  
In our setting, we must handle the non-monic case as well.  
The following lemma provides quantitative degree bounds for the iterative lifting process; these follow from the analysis in~\cite[Lecture~10]{Sudan2012-AC}.

\begin{lemma}[Degree bounds for iterative Hensel lifting]\label{lem:hensel-degree-bounds}
Let $\F$ be any field, and let $P(X,Y)$ be a polynomial of total degree $d$.  
Fix $\alpha\in\F$.
\begin{enumerate}
  \item \textbf{(Degree bound)}  
  Let $g_0,h_0$ be relatively prime polynomials of positive $Y$-degree such that
  \[
    P \equiv g_0 h_0 \bmod{\langle X-\alpha\rangle}.
  \]
  Let $(g_i,h_i)$ and $(a_i,b_i)$ be the sequences of lifts and Bézout pairs obtained by successive applications of \cref{lem:HL}, satisfying
  \[
  \begin{aligned}
  &P \equiv g_i h_i \bmod{\langle X-\alpha\rangle^{2^{i}}}, \\
  &a_i g_i + b_i h_i \equiv 1 \bmod{\langle X-\alpha\rangle^{2^{i}}},\\
  &g_i \equiv g_{i-1} \bmod{\langle X-\alpha\rangle^{2^{i-1}}}, \quad
   h_i \equiv h_{i-1} \bmod{\langle X-\alpha\rangle^{2^{i-1}}}.
  \end{aligned}
  \]
  Then, for every $i$, the total degrees of $a_i,b_i,g_i,h_i$ are at most $d\cdot 5^i$.
  \item \textbf{(Uniqueness)}  
  Let $(g_i',h_i')$ be another sequence of pairs of total degree at most $D$ satisfying the same congruences as above and with $g_1'\equiv g_0$, $h_1'\equiv h_0$ modulo $\langle X-\alpha\rangle$.  
  Then there exist polynomials $u_0,u_1,\dots$ such that $u_i \in \langle X-\alpha\rangle^{2^i}$ and has total degree at most $(d+D)10^i$ satisfying
  \[
  g_i' \equiv g_i\cdot \left(\prod_{j<i}(1+u_j)\right)
  \bmod{\langle X-\alpha\rangle^{2^i}},
  \quad
  h_i' \equiv h_i\cdot \left(\prod_{j<i}(1+u_j)\right)^{-1}
  \bmod{\langle X-\alpha\rangle^{2^i}}.
  \]
\end{enumerate}
\end{lemma}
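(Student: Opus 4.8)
The plan is to obtain both parts by iterating the single-step Hensel lifting lemma (\cref{lem:HL}) while carefully tracking how total degrees grow, along the lines of the analysis in~\cite{Sudan2012-AC}. For the degree bound I would induct on $i$ using the explicit lifting formulas behind \cref{lem:HL}: with $E_i := P - g_ih_i \in \langle X-\alpha\rangle^{2^i}$ the current error, the lift is $g_{i+1} = g_i + b_iE_i$ and $h_{i+1} = h_i + a_iE_i$ reduced modulo $\langle X-\alpha\rangle^{2^{i+1}}$, and the refreshed Bézout pair is produced the same way from the residual $F_i := 1 - a_ig_{i+1} - b_ih_{i+1} \in \langle X-\alpha\rangle^{2^i}$, namely $a_{i+1}=a_i(1+F_i)$, $b_{i+1}=b_i(1+F_i)$. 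If $D_i$ bounds the total degrees of $g_i,h_i,a_i,b_i$, then $\deg E_i \le \max(d,2D_i)$, so $\deg g_{i+1},\deg h_{i+1} \le D_i + \max(d,2D_i)\le 3D_i$ once $D_i\ge d$, whence $\deg F_i\le 4D_i$ and $\deg a_{i+1},\deg b_{i+1}\le 5D_i$, i.e.\ $D_{i+1}\le 5D_i$. Since we may take $g_0,h_0\in\F[Y]$ (only their residues modulo $\langle X-\alpha\rangle$ enter) of total degree at most $\deg_Y P\le d$, with Bézout coefficients $a_0,b_0\in\F[Y]$ of degree below $d$, we have $D_0\le d$, so the induction gives $D_i\le d\cdot 5^i$. (Reducing the lifts modulo $\langle X-\alpha\rangle^{2^{i+1}}$, as one does for efficiency, does not affect this; this is exactly where the non-monic setting bites — unlike the monic case one cannot pin the $Y$-degree at $\deg_Y P$, so it grows geometrically.)

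For the uniqueness part I would induct on $i$, maintaining that $g_i'\equiv g_iV_i$ and $h_i'\equiv h_iV_i^{-1}$ modulo $\langle X-\alpha\rangle^{2^i}$, where $V_i := \prod_{j<i}(1+u_j)$. The key point is that $V_{i-1}$ is a unit in the ring $R_i := \F[X,Y]/\langle X-\alpha\rangle^{2^i}$ (its image modulo $\langle X-\alpha\rangle$ is $1$), so one may form $G := g_i'V_{i-1}^{-1}$ and $H := h_i'V_{i-1}$ in $R_i$: then $GH \equiv g_i'h_i' \equiv P \equiv g_ih_i$ modulo $\langle X-\alpha\rangle^{2^i}$, and combining the refinement congruences $g_i\equiv g_{i-1}$, $g_i'\equiv g_{i-1}'$ (modulo $\langle X-\alpha\rangle^{2^{i-1}}$) with the induction hypothesis yields $G\equiv g_i$ and $H\equiv h_i$ modulo $\langle X-\alpha\rangle^{2^{i-1}}$. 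Thus $(g_i,h_i)$ and $(G,H)$ are two factorizations of $P$ in $R_i$ agreeing modulo $\langle X-\alpha\rangle^{2^{i-1}}$; rerunning the standard uniqueness argument for Hensel lifting — which needs only the Bézout pair $(a_i,b_i)$ of the first factorization — produces $u_{i-1}\in\langle X-\alpha\rangle^{2^{i-1}}R_i$ with $G\equiv g_i(1+u_{i-1})$ and $H\equiv h_i(1-u_{i-1})$, together with the explicit expression $u_{i-1} = a_i(G-g_i) - b_i(H-h_i)$. Multiplying back by $V_{i-1}$ and $V_{i-1}^{-1}$, and using $(1-u_{i-1})\equiv(1+u_{i-1})^{-1}$ modulo $\langle X-\alpha\rangle^{2^i}$, recovers the induction hypothesis at level $i$ with $V_i = V_{i-1}(1+u_{i-1})$.

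It then remains to bound $\deg u_{i-1}$. From the explicit formula, $\deg u_{i-1}\le\max\bigl(\deg a_i+\deg(G-g_i),\ \deg b_i+\deg(H-h_i)\bigr)$, where $\deg a_i,\deg b_i\le d\cdot 5^i$ by the first part, and $\deg(G-g_i),\deg(H-h_i)$ are controlled by $\deg g_i',\deg h_i'\le D$ and by the total degrees of $V_{i-1}$ and of the truncation $V_{i-1}^{-1}\bmod\langle X-\alpha\rangle^{2^i}$. The decisive estimate — and where a bound of the shape $10^i$, rather than something much larger, comes from — is that each $u_j$ lies in $\langle X-\alpha\rangle^{2^j}$, so $(1+u_j)^{-1}\bmod\langle X-\alpha\rangle^{2^i} = \sum_{k<2^{i-j}}(-u_j)^k$ has total degree at most $(2^{i-j}-1)\deg u_j$, whence by the induction hypothesis
\[
  \deg\bigl(V_{i-1}^{-1}\bmod\langle X-\alpha\rangle^{2^i}\bigr)\ \le\ \sum_{j<i-1}2^{i-j}(d+D)10^{j}\ =\ (d+D)\,2^{i}\!\!\sum_{j<i-1}\!5^{j}\ \le\ \frac{(d+D)\,10^{i-1}}{2}.
\]
Feeding this in and carrying out the (routine but constant-sensitive) arithmetic as in~\cite{Sudan2012-AC} closes the induction with the claimed bound. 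I expect the main difficulty to be precisely this accounting: since $P$ need not be monic in $Y$, no reduction keeps degrees bounded, so everything grows, and one must verify that the inverses of the accumulated units $V_{i-1}$ do not outpace $10^i$ — which works only because the corrections $u_j$ sit in ever higher powers of $\langle X-\alpha\rangle$, so their inverses truncate very quickly.
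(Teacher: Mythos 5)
Your approach matches the paper's: both parts are established by induction, iterating \cref{lem:HL} and tracking total degrees through the explicit update formulas ($m_i = P - g_ih_i$, $g_{i+1}=g_i+b_im_i$, etc.), and the degree-bound half of your argument is essentially identical to the paper's. On the uniqueness half, however, you are actually \emph{more} careful than the paper on the key point. You correctly observe that the uniqueness clause of \cref{lem:HL} must be applied to the pair $(G,H)=(g_i'V_{i-1}^{-1},\,h_i'V_{i-1})$ rather than to $(g_i',h_i')$ directly, so that the formula reads $u_{i-1}=a_i(G-g_i)-b_i(H-h_i)$, and you then bound $\deg\bigl(V_{i-1}^{-1}\bmod\langle X-\alpha\rangle^{2^i}\bigr)$ via the geometric-series truncation, getting a bound of order $(d+D)10^{i-1}$. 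This is exactly the subtlety the non-monic setting introduces, and you identify it correctly. The paper's own proof writes $u_i=a_{i+1}(g_{i+1}'-g_{i+1})-b_{i+1}(h_{i+1}'-h_{i+1})$, which drops the $\prod(1+u_j)$ factors entirely, and its stated bound does not account for them; it also proves $\deg u_0\le 10(d+D)$ and $\deg u_i\le (d+D)10^{i+1}$ while claiming $(d+D)10^i$, an off-by-one. Your proposal inherits a variant of the same loose end: if you feed your estimate $\deg V_{i-1}^{-1}\le\frac{(d+D)10^{i-1}}{2}$ into $\deg u_{i-1}\le \deg a_i + \deg g_i' + \deg V_{i-1}^{-1}+\cdots$, the total is roughly $d\cdot 5^i + D + \tfrac12(d+D)10^{i-1}$, which does not fit under $(d+D)10^{i-1}$ but does fit comfortably under, say, $(d+D)10^{i+1}$, matching what the paper's own arithmetic yields. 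So the ``routine but constant-sensitive'' arithmetic you defer does \emph{not} in fact close the induction with the literal constant stated in the lemma; one should either absorb the slack into the exponent (replace $10^i$ by $10^{i+1}$ or $20^i$) or observe, as the paper implicitly does, that the algorithm's parameter $\widehat{D}=2\deg(P)\cdot 10^{3\log\deg(P)}$ is generous enough that this off-by-one constant is immaterial. Making that observation explicit would complete your argument.
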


\begin{proof}
We closely follow the proof of \cref{lem:HL} from~\cite{SinhababuT2021}.
\paragraph*{Degree Bound:} From the proof of \cref{lem:HL} for any $i$ we have the following relations:
\begin{enumerate}
    \item $m_i = f - g_i \cdot h_i$;
    \item $g_{i+1} = g + b_i \cdot m_i$, \quad $h_{i+1} = g_i + a_i \cdot m_i$;
    \item $q_{i+1} = a_i \cdot g_{i+1} + b_i \cdot h_{i+1} - 1$;
    \item $a_{i+1} = a_i - a_i \cdot q_{i+1}$, \quad $b_{i+1} = b_i - b_i \cdot q_{i+1}$.
\end{enumerate}
For $i = 0$, we have $\deg(g_0), \deg(h_0) \le d$, since $\deg(P(\alpha,Y)) \le \deg(P) \le d$.  
As $g_0, h_0 \in \F[Y]$, the extended GCD algorithm ensures that $\deg(a_0) < \deg(h_0) \le d$ and $\deg(b_0) < \deg(g_0) \le d$. Hence, the base case holds.

Assume inductively that the total degrees of $(g_i, h_i, a_i, b_i)$ are at most $d \cdot 5^i$. Then $\deg(g_i\cdot h_i)\leq 2d\cdot 5^i$ and hence $\deg(m_i)\leq 2d\cdot 5^i$. Consequently $\deg(g_{i+1})\leq \deg(b_i\cdot m_i)\leq 3d\cdot 5^i$ and similarly, $\deg(h_{i+1})\leq 3d\cdot 5^i$. So, $\deg(a_i\cdot g_{i+1})\leq 4d\cdot 5^i$ and similarly, we have $\deg(b_i\cdot h_{i+1})\leq 4d\cdot 5^i$. Hence, $\deg(q_{i+1})\leq 4d\cdot 5^i$. Thus,  $\deg(a_{i+1})\leq \deg(a_i\cdot q_{i+1})\leq d\cdot 5^i+4d\cdot 5^i=d\cdot 5^{i+1}$ and similalry, $\deg(b_{i+1})\leq d\cdot 5^{i+1}$. So we got the degree bounds of $(g_{i+1},h_{i+1},a_{i+1},b_{i+1})$ are at most $d\cdot 5^{i+1}$. By induction, we conclude that for every $i \ge 0$, the total degrees of $g_i, h_i, a_i, b_i$ are bounded by $d \cdot 5^i$.

\paragraph*{Uniqueness:} We now establish uniqueness by induction on $i$. For the base case $i=1$, we have $g_1\equiv g_1'\equiv g_0\bmod{\langle X-\alpha\rangle}$ and $h_1\equiv h_1'\equiv h_0\bmod{\langle X-\alpha\rangle}$. Following the proof of \cref{lem:HL}  define $u_0=a_1(g_1'-g_1)-b_1(h_1'-h_1)$, so that  $u_0\in\bmod{\langle X-\alpha\rangle}$ and 
\[
g_1'\equiv g_1(1+u_0)\bmod{\langle X-\alpha\rangle^2}, \quad h_1'\equiv h_1(1-u_0)\equiv h_1(1+u_0)^{-1}\bmod{\langle X\rangle^2}
\]
From the degree bounds above, the total degrees of $(g_1,h_1,a_1,b_1)$ are at most $5d$, implying $\deg(a_1(g_1'-g_1)), \deg(b_1(g_1'-g_1))\leq (d+D)10$ and hence $\deg(u_0)\leq (d+D)10$. Hence, the base case is satisfied. 

Now suppose that this is true for $i$. Since $u_j \in \langle X - \alpha \rangle^{2^j}$ for each $j < i$, it follows that $(1 + u_j)$ is invertible modulo $\langle X - \alpha \rangle^{2^i}$. Now consider the elements $g_{i+1}'\left(\prod_{j<i}(1+u_j)\right)^{-1}$ and $h_{i+1}'\left(\prod_{j<i}(1+u_j)\right)$. We have $g_{i+1}'\left(\prod_{j<i}(1+u_j)\right)^{-1}\equiv g_i'\left(\prod_{j<i}(1+u_j)\right)^{-1}\equiv g_i\bmod{\langle X-\alpha\rangle^{2^i}}$ and $h_{i+1}'\left(\prod_{j<i}(1+u_j)\right)\equiv h_i'\left(\prod_{j<i}(1+u_j)\right)\equiv h_i\bmod{\langle X-\alpha\rangle^{2^i}}$. We also have 
\[
g_{i+1}'\left(\prod_{j<i}(1+u_j)\right)^{-1}\cdot h_{i+1}'\left(\prod_{j<i}(1+u_j)\right)\equiv g_{i+1}'h_{i+1}'\equiv f\bmod{\langle X-\alpha\rangle^{2^{i+1}}}
\]
Thus, both $g_{i+1},h_{i+1}$ and $g_{i+1}'\left(\prod_{j<i}(1+u_j)\right)^{-1},h_{i+1}'\left(\prod_{j<i}(1+u_j)\right)$ are valid lifts of $(g_i,h_i)$. By the Uniqueness property of \cref{lem:HL}, there exists $u_i\in \langle X-\alpha\rangle^{2^i}$ such that 
\[
g_{i+1}'\left(\prod_{j<i}(1+u_j)\right)^{-1}\equiv g_{i+1}(1+u_i)\bmod{\langle X-\alpha\rangle^{2^{i+1}}}
\]
and therefore $ g_{i+1}'\equiv g_{i+1}\prod_{j<i+1}(1+u_j)\bmod{\langle X-\alpha\rangle^{2^{i+1}}}$. Similarly, 
\[
h_{i+1}'\left(\prod_{j<i}(1+u_j)\right)\equiv h_{i+1}(1-u_i)\bmod{\langle X-\alpha\rangle^{2^{i+1}}}
\]
which implies $ h_{i+1}'\equiv h_{i+1}\left(\prod_{j<i+1}(1+u_j)\right)^{-1}\bmod{\langle X-\alpha\rangle^{2^{i+1}}} $. Since $(1+u_i)(1-u_i)\equiv 1\bmod{\langle X-\alpha\rangle^{2^{i+1}}}$. From the construction in  \cref{lem:HL}, we have $u_i=a_{i+1}(g_{i+1}'-g_{i+1})-b_{i+1}(h'_{i+1}-h_{i+1})$ and, by the degree bound in the first part, we have the total degrees of $(g_{i+1},h_{i+1}, a_{i+1},b_{i+1})$ are at most $d\cdot 5^{i+1}$. Therefore, $\deg(u_i)\leq d\cdot 5^{i+1}+\max\{d\cdot 5^{i+1},D\}\leq (d+D)10^{i+1}$. Hence, by induction, for all $i$, $g_{i}' \equiv g_{i} \left(\prod_{j < i}(1+u_j)\right) \bmod{\langle X-\alpha \rangle}^{2^{i}}$ and $h_{i}' \equiv h_{i} \cdot \left(\prod_{j < i}(1-u_j)\right)^{-1} \bmod{\langle X-\alpha \rangle}^{2^{i}}$ with $\deg(u_i) \le (d + D) \cdot 10^i$, as desired.
 \end{proof}

\subsection{Details of the \textsc{Split} algorithm}
We now describe the algorithm promised in \cref{thm:split}. For clarity, from now on we write $P_A,P_B,P_C,P_D$ for $P_{A(\alpha,\beta)},P_{B(\alpha,\beta)},P_{C(\alpha,\beta)}, P_{D(\alpha,\beta)}$, respectively.

Before proceeding, we record a basic but useful subroutine that we will invoke repeatedly. Suppose the polynomial $P \in \F[X,Y]$ has a nontrivial factor $g \in \F[X,Y]$. If we already know the splits of $g$ and of $P/g$ with respect to a point $(\alpha,\beta)$, then we can obtain the split of $P$ itself by simply combining the corresponding $A$ and $B$ parts. The following routine formalizes this observation.

\begin{algorithm}\label{algo:Combine-Splits}
\caption{\textsc{Combine-Splits}}
\DontPrintSemicolon
\SetAlgoNoEnd
\KwIn{Polynomials $P,g \in \F[X,Y]$ and $(\alpha,\beta)\in\F^2$ with the guarantee that $g$ divides $P$}
\KwOut{Polynomials $P_{A(\alpha,\beta)}\cdot R_1$ and $ P_{B(\alpha,\beta)}\cdot R_2$ for some polynomials $R_1,R_2$ such that $R_1\cdot R_2$ divide $P_{D(\alpha,\beta)}$}
\nonl\hrulefill\\
\BlankLine
Set $g' \gets P/g$\;
Compute $(g_1,g_2) \gets \textsc{Split}(g,(\alpha,\beta))$ and $(g'_1,g'_2) \gets \textsc{Split}(g',(\alpha,\beta))$\;
\Return $(g_1\cdot g'_1,\; g_2\cdot g'_2)$\;
\end{algorithm}

We now develop a series of claims that naturally lead to the \Split\ algorithm. 
We begin with a deterministic base case that seeds the Hensel lifting.

\begin{lemma}\label{lem:base-case-split}
Let $\F$ be any finite field. There is a deterministic algorithm that, given a bivariate polynomial $P(X,Y)\in\F[X,Y]$ and a point $(\alpha,\beta)\in\F^2$, runs in time $\poly(\deg(P),\log|\F|)$ and outputs a factorization
\[
  P(\alpha,Y) = (Y-\beta)^m \cdot \widehat{P}(Y),
\]
where $\widehat{P}(\beta)\ne 0$.
\end{lemma}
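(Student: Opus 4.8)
The plan is to compute $P(\alpha, Y) \in \F[Y]$ explicitly and then strip off the factor $(Y-\beta)$ repeatedly until it no longer divides what remains. First I would evaluate $P$ at $X = \alpha$: writing $P(X,Y) = \sum_\ell P^{(\ell)}(X) Y^\ell$, each coefficient $P^{(\ell)}(\alpha)$ is obtained by a single univariate polynomial evaluation, so $P(\alpha, Y) = \sum_\ell P^{(\ell)}(\alpha) Y^\ell \in \F[Y]$ is computed in $\poly(\deg P, \log|\F|)$ time. If $P(\alpha, Y)$ is the zero polynomial, then $P$ has a pure-$X$ factor of the form $(X - \alpha)$, which we can note separately (or, depending on the conventions in force, this case is excluded by hypothesis); in the generic case $P(\alpha, Y)$ is a nonzero univariate of degree at most $\deg P$.

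Next I would extract the multiplicity of $\beta$ as a root. Set $\widehat{P}_0(Y) := P(\alpha, Y)$ and $m := 0$; then, while $(Y - \beta) \mid \widehat{P}_m(Y)$, replace $\widehat{P}_m$ by the exact quotient $\widehat{P}_{m+1}(Y) := \widehat{P}_m(Y) / (Y - \beta)$ and increment $m$. Divisibility of a univariate by the linear polynomial $(Y-\beta)$ is just the test $\widehat{P}_m(\beta) = 0$, and each exact division by $(Y-\beta)$ is synthetic division costing $O(\deg P)$ field operations; since $m \le \deg P(\alpha, Y) \le \deg P$, the loop runs at most $\deg P$ times. When the loop halts we have $P(\alpha, Y) = (Y-\beta)^m \cdot \widehat{P}(Y)$ with $\widehat{P}(\beta) \neq 0$, as required, and the total running time is $\poly(\deg P, \log |\F|)$. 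Correctness is immediate from the fact that the multiplicity of a root of a univariate polynomial is exactly the number of times its linear factor can be divided out, and that each step above is an exact computation over $\F$.

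There is essentially no hard part here: this is a routine warm-up lemma whose only content is that evaluation at $X=\alpha$, linear-factor divisibility testing, and exact division are all deterministic polynomial-time operations over any finite field, which are among the standard building blocks recalled in the preliminaries. The one point requiring a sentence of care is the degenerate case $P(\alpha, Y) \equiv 0$; I would handle it by observing that it does not arise for the polynomials to which \textsc{Split} is applied (they have their pure-$X$ factors removed, so no $(X-\alpha)$ divides them), or else simply return $m = 0$ and $\widehat P = 0$ and let the caller treat it as a useless instance per \autoref{def:useless}.
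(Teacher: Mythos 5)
Your proposal is correct and follows essentially the same approach as the paper: evaluate at $X=\alpha$ and then iteratively divide out the linear factor $(Y-\beta)$ until it no longer divides, with each step a standard deterministic univariate operation. You additionally address the degenerate case $P(\alpha,Y)\equiv 0$, which the paper leaves implicit, but the substance is identical.
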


\begin{proof}
Iteratively perform univariate division to find the largest $m\in\mathbb{N}$ such that $(Y-\beta)^m$ divides $P(\alpha,Y)$; then set $\widehat{P}(Y):=P(\alpha,Y)/(Y-\beta)^m$. Both steps are deterministic and run in time $\poly(\deg(P),\log|\F|)$ by standard univariate algorithms.
\end{proof}

An immediate consequence is that if $\deg_Y(P(\alpha,Y))\geq 1$, we can recognize when $P=P_{A}P_{D}$ or $P=P_{B}P_{D}$: if $m=\deg_Y(P)$, then $P=P_{A}P_{D}$; if $m=0$, then $P=P_{B}P_{D}$. In either case, \cref{thm:split} follows directly. If $\deg_Y(P(\alpha,Y))=0$ and $P(\alpha,\beta)\neq 0$, $P=P_D$. In that case we just throw it away.

The interesting case is when $0<m<\deg_Y(P)$. Then \cref{lem:base-case-split} yields
\[
  P(\alpha,Y) = (Y-\beta)^m \cdot \widehat{P}(Y),
\]
where $(Y-\beta)^m$ and $\widehat{P}(Y)$ are coprime and both have positive $Y$-degree. Equivalently,
\[
  P(X,Y) \equiv (Y-\beta)^m \cdot \widehat{P}(Y) \pmod{\langle X-\alpha\rangle}.
\]
This is precisely the hypothesis needed to invoke \cref{lem:HL}: we lift this factorization to sufficiently high precision in $(X-\alpha)$, recover a nontrivial factor of $P$, and then recurse, using \CombSplits\ to combine results.

More precisely, set $g_0=(Y-\beta)^m$ and $h_0=\widehat{P}(Y)$. Apply iterative Hensel lifting (\cref{lem:HL}) for
$t :=2\left\lceil \log \deg_Y(P) \right\rceil + 1$
rounds to obtain $g_t,h_t$ with
\[
  P(X,Y) \equiv g_t\cdot h_t \bmod{\langle X-\alpha\rangle^{2^t}},\qquad
  g_t \equiv g_0 \bmod{\langle X-\alpha\rangle},\quad h_t \equiv h_0 \bmod{\langle X-\alpha\rangle}.
\]

Given these lifts $g_t,h_t$, we can write the following linear system
\begin{align*}
U\cdot h_t &\equiv V \bmod{\langle X-\alpha\rangle^{2^t}} ,\\
E\cdot g_t &\equiv F \bmod{\langle X-\alpha\rangle^{2^t}} ,
\end{align*}
where the variables are the coefficients of the polynomials $U,V,E,F$. The following claim is standard in Hensel-based factorization and we include it for completeness.

\begin{claim}\label{clm:nontrivial-gcd}
If the system $U\cdot h_t \equiv V \bmod{\langle X-\alpha\rangle^{2^t}}$ has a solution, then $\gcd(V,P)$ has $Y$-degree at least one. Similarly, if the system $E\cdot g_t \equiv F \bmod{\langle X-\alpha\rangle^{2^t}}$ has a solution, then $\gcd(F,P)$ has $Y$-degree at least one.
\end{claim}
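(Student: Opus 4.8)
The plan is to show that any solution $(U,V)$ to the congruence $U\cdot h_t \equiv V \bmod{\langle X-\alpha\rangle^{2^t}}$ (subject to suitable degree bounds that make the system meaningful, i.e.\ $\deg_Y V < \deg_Y P$ and $\deg_Y U < \deg_Y g_t$) forces $V$ to share a nontrivial factor with $P$. The key point is that modulo $\langle X-\alpha\rangle^{2^t}$ we have the factorization $P \equiv g_t h_t$ with $g_t,h_t$ coprime in the appropriate sense (their resultant, or a Bézout combination, is a unit modulo $\langle X-\alpha\rangle^{2^t}$, which is what Hensel lifting maintains). I would first argue that $g_t \mid V$ modulo $\langle X-\alpha\rangle^{2^t}$: from $U h_t \equiv V$ and the Bézout identity $a_t g_t + b_t h_t \equiv 1$, multiply through to get $V \equiv V(a_t g_t + b_t h_t) \equiv V a_t g_t + b_t (U h_t) h_t^{-1}\cdot h_t$... more cleanly, since $g_t \mid P \equiv g_t h_t$ and $U h_t \equiv V$, reducing modulo $g_t$ (which is invertible-complemented by $h_t$) shows $V \equiv 0 \bmod \gcd$-considerations. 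The cleanest route: $V = U h_t - (\text{multiple of }\langle X-\alpha\rangle^{2^t})$, and since $P = g_t h_t - (\text{multiple of }\langle X-\alpha\rangle^{2^t})$, we get $g_t V \equiv g_t U h_t \equiv U P \bmod{\langle X-\alpha\rangle^{2^t}}$, i.e.\ $g_t V \equiv UP \bmod{\langle X-\alpha\rangle^{2^t}}$.

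Next I would pass from this congruence to an exact statement using the degree bounds. The polynomials $g_t V - U P$ has controlled $(1,\cdot)$-degree (bounded using \cref{lem:hensel-degree-bounds} for $g_t$, together with the a priori bounds on $U,V$), while it is divisible by $\langle X-\alpha\rangle^{2^t}$, and $2^t > 2^{2\lceil \log \deg_Y P\rceil + 1} \ge (\deg_Y P)^2$ is chosen large enough (this is the role of $t := 2\lceil \log\deg_Y P\rceil + 1$) that the only polynomial of that degree divisible by such a high power of $(X-\alpha)$ is zero. Hence $g_t V = U P$ \emph{exactly} in $\F[X,Y]$. Now take an irreducible factor $\pi$ of $g_t$ with $\deg_Y \pi \ge 1$ (such a factor exists because $g_t \equiv (Y-\beta)^m \bmod \langle X-\alpha\rangle$ has positive $Y$-degree, so $g_t$ itself has positive $Y$-degree; shaving any pure-$X$ part does not change this). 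Then $\pi \mid UP$. If $\pi \mid U$ we could cancel and repeat, so after removing the common part we may assume $\pi \nmid U$; then $\pi \mid P$, and since $\pi$ also divides $V$ would follow from... — actually the cleaner conclusion: from $g_t V = UP$ and $\gcd(g_t, h_t)=1$ together with $P = g_t h_t + (\text{junk divisible by }\langle X-\alpha\rangle^{2^t})$, one shows $h_t \mid V$ exactly by the symmetric argument, and $h_t$ has positive $Y$-degree since $h_t \equiv \widehat P \bmod\langle X-\alpha\rangle$ with $\deg_Y \widehat P \ge 1$; as $h_t \mid P$ (again exactly, by the degree/precision argument) and $h_t \mid V$, we get $\gcd(V,P)$ has $Y$-degree at least $\deg_Y h_t \ge 1$. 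The second half of the claim, with $E, g_t, F$, is identical with the roles of $g_t$ and $h_t$ swapped, using $\deg_Y g_t = m \ge 1$.

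The main obstacle I anticipate is not the algebra of the coprime factorization but the bookkeeping of \textbf{degree bounds versus precision}: I must verify that $2^t$ genuinely exceeds the $X$-degree of every polynomial identity I want to upgrade from "$\equiv \bmod \langle X-\alpha\rangle^{2^t}$" to "$=$", and these degrees are inflated by the $5^i$ and $10^i$ factors from \cref{lem:hensel-degree-bounds}. Since $t$ is only $\Theta(\log \deg_Y P)$, the lifts $g_t, h_t$ have degree at most $\deg(P)\cdot 5^t = \poly(\deg P)$, and the relevant combinations $g_t V$, $U P$ have degree $\poly(\deg P)$ as well, whereas $2^t \ge (\deg_Y P)^2$ — so I need to double-check that the chosen constant in front of $\log \deg_Y P$ (here the factor $2$ and the $+1$) is large enough to dominate, or else increase $t$ to $c\log \deg P$ for a suitable constant $c$; this is a routine but essential sanity check, and it is the one place where an off-by-a-constant error would break the argument.
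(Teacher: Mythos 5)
Your approach is genuinely different from the paper's, and it contains a gap that cannot be closed by tuning $t$. You derive the congruence $g_t V \equiv UP \bmod{\langle X-\alpha\rangle^{2^t}}$ and then propose to upgrade it to an \emph{exact} identity $g_tV = UP$ by a degree-versus-precision comparison, after which a coprimality argument finishes. The problem is the step you yourself flag as "routine but essential": it is not routine, it fails. By \cref{lem:hensel-degree-bounds}, the lift $g_t$ has total degree up to $\deg(P)\cdot 5^t$, and the allowed $U$ has $\deg_X U \le 2^t-1$, so $\deg_X(g_tV - UP)$ is on the order of $5^t$ (or at least $2^t - 1 + \deg_X P$ even after reducing $g_t$ modulo $\langle X-\alpha\rangle^{2^t}$). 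Meanwhile the precision is only $2^t$. Since $5^t > 2^t$ for every $t\ge 1$, increasing $t$ only widens the gap; no choice of the constant in $t = c\log\deg_Y P$ makes $2^t$ dominate $\deg_X(g_tV-UP)$. More fundamentally, the exact identity $g_tV = UP$ should not be expected to hold: $g_t$ is only an \emph{approximate} factor of $P$ to finite precision, not a true divisor, so there is no reason the congruence lifts to an equality in $\F[X,Y]$.

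The paper's proof sidesteps this degree blowup by introducing the resultant $\Gamma(X):=\Res_Y(P,V)$, a \emph{univariate} polynomial whose degree is bounded by $2\deg_X(P)\deg_X(V)$, completely independently of the lifted polynomials $g_t,h_t$ and hence of the $5^t$ inflation. Using the Bézout-type identity $aP + bV = \Gamma$ and reducing modulo $\langle X-\alpha\rangle^{2^t}$ gives $(ag_t + bU)\,h_t \equiv \Gamma$. The key observation is then structural rather than degree-based: writing everything as power series in $(X-\alpha)$ with coefficients in $\F[Y]$, the constant term of $h_t$ is $h_0(Y)$, which has positive $Y$-degree, while $\Gamma$ has no $Y$-dependence at all; this forces $(ag_t+bU)$ to be divisible by $(X-\alpha)^{2^t}$, hence $\Gamma$ is too, and \emph{only now} does a degree comparison enter — but it is a comparison of $2^t$ against $\deg\Gamma$, which is small. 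Your framework (multiply through, try to kill the error term) would work if you first collapsed the two-variable identity down to a one-variable object of controlled degree; the resultant is precisely the tool that does this.
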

Finally, we show that if both systems have no solution, then $P$ must be entirely in the $C$ $D$-parts.

\begin{claim}\label{clm:linear-system-no-sol}
Let $P(X,Y)$ and $(\alpha,\beta)$ be inputs with $P \ne P_{A}P_{D}$ and $P \ne P_{B}P_{D}$. If neither
\[
  U\cdot h_t \equiv V \bmod{\langle X-\alpha\rangle^{2^t}}
  \quad\text{nor}\quad
  E\cdot g_t \equiv F \bmod{\langle X-\alpha\rangle^{2^t}}
\]
has a solution, then $P = P_{C}P_{D}$.
\end{claim}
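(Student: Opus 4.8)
I would prove \cref{clm:linear-system-no-sol} by contraposition: assuming $P\neq P_C P_D$, I will produce a solution to one of the two linear systems. First, some cleanup. Since $P$ has no pure-$X$ factors, every irreducible factor of $P$ has positive $Y$-degree and is, at $X=\alpha$, a nonzero polynomial in $Y$; comparing its reduction modulo $\langle X-\alpha\rangle$ with the defining conditions of $A,B,C,D$ puts it in exactly one of the four classes, so $P=P_A P_B P_C P_D$ up to a nonzero scalar, and $P\neq P_C P_D$ forces $P_A$ or $P_B$ to be non-constant. I may also assume $0<m<\deg_Y P(\alpha,Y)$: if $m=\deg_Y P(\alpha,Y)$ then every factor lies in $A\cup D$, so $P=P_A P_D$, which is excluded by the hypothesis of the claim; if $m=0$ then every factor lies in $B\cup D$, so $P=P_B P_D$, again excluded; and if $\deg_Y P(\alpha,Y)=0$ then $P=P_D=P_C P_D$ and there is nothing to prove. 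In the surviving case $g_0=(Y-\beta)^m$ and $h_0=\widehat P(Y)$ are coprime of positive $Y$-degree, and $g_t,h_t$ are their iterated Hensel lifts to precision $\langle X-\alpha\rangle^{2^t}$.

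The crux is to show that $g_t$---which the algorithm builds only from the single univariate split $P(\alpha,Y)=(Y-\beta)^m\cdot\widehat P(Y)$---actually encodes the genuine bivariate factor $P_A$, and symmetrically that $h_t$ encodes $P_B$. Suppose $P_A$ is non-constant and write $P=P_A\cdot P'$ with $P'=P_B P_C P_D$; let $m_A'=\deg_Y P_A(\alpha,Y)\ge 1$ and pick $c_A\in\F\setminus\{0\}$ with $P_A(\alpha,Y)=c_A(Y-\beta)^{m_A'}$, so that $P'(\alpha,Y)=(Y-\beta)^{m-m_A'}\widehat{P'}(Y)$ with $\widehat{P'}(\beta)\neq 0$ and $c_A\widehat{P'}(Y)=h_0$. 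Running \cref{lem:base-case-split} and then iterated Hensel lifting (\cref{lem:HL}) on $P'$ at $(\alpha,\beta)$ gives $g'_i,h'_i$ with $P'\equiv g'_i h'_i\bmod\langle X-\alpha\rangle^{2^i}$, reductions $g'_i\equiv(Y-\beta)^{m-m_A'}$ and $h'_i\equiv\widehat{P'}(Y)$ modulo $\langle X-\alpha\rangle$, and all total degrees $\poly(\deg P)$ by \cref{lem:hensel-degree-bounds}. A direct check shows $\bigl(c_A^{-1}P_A\,g'_i,\ c_A\,h'_i\bigr)_i$ is itself a valid iterated-Hensel-lift sequence for $P$ seeded at $(g_0,h_0)$: its product is $\equiv P_A g'_i h'_i=P$, its reductions modulo $\langle X-\alpha\rangle$ are $g_0$ and $h_0$, its two factors have coprime reductions (as $\widehat{P'}(\beta)\neq 0$) so the B\'ezout identities persist, and its total degrees remain $\poly(\deg P)$. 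The uniqueness part of \cref{lem:hensel-degree-bounds} then yields a unit $w=\prod_{j<t}(1+u_j)$, with $u_j\in\langle X-\alpha\rangle^{2^j}$ of degree $\poly(\deg P)$, such that $c_A^{-1}P_A g'_t\equiv g_t\cdot w\bmod\langle X-\alpha\rangle^{2^t}$; equivalently $(c_A w)\cdot g_t\equiv P_A g'_t\bmod\langle X-\alpha\rangle^{2^t}$, with a nonzero right-hand side (its value at $X=\alpha$ is $c_A(Y-\beta)^m$). Hence, up to the trimming discussed below, $(E,F)=(c_A w,\ P_A g'_t)$ solves $E g_t\equiv F$. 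The case ``$P_B$ non-constant'' is symmetric: with $P=P_B\cdot P''$ and $P''(\alpha,Y)=(Y-\beta)^m\widehat{P''}(Y)$, the pair $(g''_t,\ P_B h''_t)$ is a valid lift of $(g_0,h_0)$, so uniqueness gives $P_B h''_t\equiv h_t\cdot w^{-1}\bmod\langle X-\alpha\rangle^{2^t}$ and $(U,V)=(w^{-1},\ P_B h''_t)$ solves $U h_t\equiv V$. Either way one of the two systems is solvable, which is exactly the contrapositive.

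The step I expect to be the real obstacle is making the witness obey the degree bounds that the linear system imposes on $U,V,E,F$---in particular keeping the $Y$-degree of $F$ (resp.\ $V$) strictly below $\deg_Y P$, and staying nonzero---while preserving the congruence with $g_t$ (resp.\ $h_t$). In the monic case this is immediate, since $\deg_Y(P_A g'_t)=\deg_Y P_A+(m-m_A')<\deg_Y P$ because $\widehat{P'}$ has positive degree; but for non-monic $P$ a Hensel lift may have larger $Y$-degree than its seed, so instead one reduces $P_A g'_t$ modulo $P$ over $\F(X)$ and clears denominators, absorbing the correction into $E$ by subtracting a suitable multiple of $h_t$ (using $P\equiv g_t h_t$)---exactly the non-monic bookkeeping that the explicit degree control of \cref{lem:hensel-degree-bounds} is built to support. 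Once the witness lies within the system's degree window, \cref{clm:nontrivial-gcd} applies to it, and nothing more is needed for \cref{clm:linear-system-no-sol} beyond the bare existence of a solution.
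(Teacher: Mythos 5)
Your high-level strategy matches the paper's: prove the contrapositive, assume $P\ne P_CP_D$, conclude one of $P_A,P_B$ has positive $Y$-degree, Hensel-lift the factorization of the complementary factor $P'=P_BP_CP_D$ at $(\alpha,\beta)$, and invoke the uniqueness clause of \cref{lem:hensel-degree-bounds} to relate that lift to the algorithm's $g_t,h_t$. The comparison of $(c_A^{-1}P_A g_i', c_A h_i')$ to $(g_i,h_i)$ is exactly the mechanism the paper uses, including the (easy to miss) scalar normalization by $c_A$ so that the seeds agree modulo $\langle X-\alpha\rangle$.

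However, there is a genuine gap in the choice of which linear system to solve. In your ``$P_A$ non-constant'' case you target $E\cdot g_t\equiv F$ with $F=P_A g_t'$ (and symmetrically $V=P_B h_t''$ in the other case). The quantity $g_t'$ is a full iterated Hensel lift, and by \cref{lem:hensel-degree-bounds} its total degree is $\Theta(\deg P\cdot 5^t)=\poly(\deg P)$, which is far larger than $\deg_Y P-1$. So $F$ violates the constraint $\deg_Y F\le \deg_Y P-1$ that the linear system imposes. You flag this yourself, but the proposed repair --- reduce $P_A g_t'$ modulo $P$ over $\F(X)$ and clear denominators, pushing the quotient into $E$ --- does not go through: the common denominator is a power of the leading $Y$-coefficient of $P$, and after clearing it both $\deg_X F$ and $\deg_X E$ exceed the required bounds $\deg_X P$ and $2^t-1$, which are what make the resultant degree bound in \cref{clm:nontrivial-gcd} (namely $\deg\Gamma\le 2\deg_X(P)^2<2^t$) work. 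Also, \cref{lem:hensel-degree-bounds} controls the degrees of the lifts and of the units $u_j$, not of polynomial remainders modulo $P$, so it does not supply the bookkeeping you are counting on.

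The fix is to target the \emph{other} linear system so that the degree-constrained slot is occupied by a genuine factor of $P$ rather than a lift. In the ``$P_A$ non-constant'' case, take $V=P'=P_BP_CP_D$: since $\deg_Y P_A\ge 1$ this automatically has $\deg_Y V\le\deg_Y P-1$ and $\deg_X V\le\deg_X P$, and the uniqueness relation $c_Ah_t'\equiv h_t\cdot w^{-1}$ together with $P'\equiv g_t'h_t'$ yields $U:=c_A^{-1}g_t'w^{-1}$ with $U\cdot h_t\equiv V\bmod\langle X-\alpha\rangle^{2^t}$. All the Hensel-lift blow-up then sits in $U$, where the constraint is the generous $\deg_Y U\le\widehat D$. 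This is precisely the paper's Case~I (with Case~II symmetric, producing a solution to the $E\cdot g_t\equiv F$ system with $F=P_AP_CP_D$). So your proof is not wrong in spirit, but as written the witness you produce does not lie in the search space of the linear system, and the patch you sketch does not close that gap.
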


Given all the above observations and claims,  we now have the following deterministic \textsc{Split} algorithm.

\begin{algorithm}[H]
\caption{\textsc{Split}}\label{algo:split}
\DontPrintSemicolon
\SetAlgoNoEnd
\KwIn{$P(X,Y)\in\F[X,Y]$ and a point $(\alpha,\beta)\in\F^2$}
\KwOut{Polynomials $P_{A(\alpha,\beta)}\cdot R_1$ and $P_{B(\alpha,\beta)}\cdot R_2$, for some polynomials $R_1,R_2$ such that $R_1\cdot R_2$ divide $P_{D(\alpha,\beta)}$.}
\nonl\hrulefill
\BlankLine

\nonl \textbf{Step 0: Remove pure-$X$ factors.}\;
Write $P(X,Y)=\sum_{\ell} P^{(\ell)}(X)\,Y^\ell$ and set $P \gets P / \gcd(P^{(1)},P^{(2)},\ldots)$.\;
\BlankLine

\nonl \textbf{Step 1: Seed Hensel lifting and handle simple cases.}\;
Factor $P(\alpha,Y)=(Y-\beta)^m \cdot \widehat{P}(Y)$ such that $\widehat{P}(\beta) \neq 0$ \Comment*[r]{Use \cref{lem:base-case-split}}
Set $g_0:=(Y-\beta)^m$, $h_0:=\widehat{P}(Y)$.\;
\lIf(\tcc*[f]{$P=P_D$}){$\deg(P(\alpha,Y))=0$}{\Return{$(1,1)$}}
\lIf(\tcc*[f]{{$P=P_AP_D$}}){$\deg(g_0)=\deg P(\alpha,Y)$}{\Return{$(P,1)$}}
\lIf(\tcc*[f]{{$P=P_BP_D$}}){$\deg(h_0)=\deg P(\alpha,Y)$}{\Return{$(1,P)$}}
\BlankLine

\nonl \textbf{Step 2: Iterative Hensel lifting.}\;
Set $t \gets 2\lceil\log(\deg_Y P)\rceil+1$ and $\widehat{D} \gets 2\deg(P)\cdot 10^{3\log \deg(P)}$.\;
Apply iterative Hensel lifting (\cref{lem:HL}) for $t$ rounds to compute lifts $g_t,h_t$ satisfying\;
\nonl \Indp $P \equiv g_t \cdot h_t \bmod{\langle X-\alpha\rangle^{2^t}}, \qquad g_t \equiv g_0 \bmod{\langle X-\alpha\rangle},
\qquad h_t \equiv h_0 \bmod{\langle X-\alpha\rangle}$.\;
\BlankLine

\nonl \Indm \textbf{Step 3: Solve linear systems.}\;
\If{there exist $U,V\in\F[X,Y]$ satisfying $U \cdot h_t \equiv V \bmod{\langle X-\alpha\rangle^{2^t}}$ with the degree constraints $\deg_Y U \le \widehat{D}$, $\deg_Y V \le \deg_Y P - 1$, $\deg_X U \le 2^t-1$, $\deg_X V \le \deg_X P$}{
  Set $\rho \gets \gcd(P,V)$ \Comment*[r]{$\deg_Y \rho \ge 1$ by \cref{clm:nontrivial-gcd}}
  \Return{\CombSplits$(P,(\alpha,\beta),\rho)$}
}
\If{there exist $E,F\in\F[X,Y]$ satisfying $E\cdot g_t \equiv F \bmod{\langle X-\alpha\rangle^{2^t}}$ with the degree constraints $\deg_Y E \le \widehat{D}$, $\deg_Y F \le \deg_Y P - 1$, $\deg_X E \le 2^t-1$, $\deg_X F \le \deg_X P$ such that }{
  Set $\sigma \gets \gcd(P,F)$ \Comment*[r]{$\deg_Y \sigma \ge 1$ by \cref{clm:nontrivial-gcd}}
  \Return{\CombSplits$(P,(\alpha,\beta),\sigma)$}
}
\BlankLine

\nonl \textbf{Step 4: All tests fail.} \Comment*[r]{Use \cref{clm:linear-system-no-sol}}
\Return{$(1,1)$}.
\end{algorithm}

\subsection{Correctness of the \Split\ algorithm}
We now establish the correctness of the \Split\ algorithm. We first prove the main correctness lemma assuming \cref{clm:nontrivial-gcd,clm:linear-system-no-sol}, and subsequently provide proofs of these claims.

\begin{lemma}[Correctness of \Split\ algorithm]
For every bivariate polynomial $P(X,Y)$ and point $(\alpha,\beta)\in\F^2$, \cref{algo:split} terminates and correctly outputs the decomposition $P_{A}R_1$ and $P_{B}R_2$ where $R_1\cdot R_2$ divides $P_{D}$.
\end{lemma}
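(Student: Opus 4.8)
The plan is to prove termination and output-correctness \emph{together}, by strong induction on $\deg_Y(P)$. Step~0 reduces to the case where $P$ has no pure-$X$ factor (this does not change $P_A,P_B,P_C,P_D$, which are built only from irreducible factors of positive $Y$-degree), so write $P=\prod_{i=1}^s P_i$ with each $P_i$ irreducible of $Y$-degree at least $1$; then $A(\alpha,\beta),B(\alpha,\beta),C(\alpha,\beta),D(\alpha,\beta)$ partition $[s]$. The invariant that drives the induction is that \emph{all four} maps $P\mapsto P_A,P_B,P_C,P_D$ are multiplicative in $P$: if $P=g\cdot g'$ then $P_A=g_A\cdot g'_A$ and similarly for $B,C,D$, since membership of an irreducible factor in $A(\alpha,\beta)$ (etc.) depends only on that factor and on $(\alpha,\beta)$, while the factor multiset of $P$ is the union of those of $g$ and $g'$. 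This yields the correctness of \CombSplits\ at once: if $g\mid P$, $g'=P/g$, and \Split\ is correct on $g$ and on $g'$ — returning $(g_A S_1,\,g_B S_2)$ with $S_1S_2\mid g_D$ and $(g'_A S'_1,\,g'_B S'_2)$ with $S'_1S'_2\mid g'_D$ — then \CombSplits\ returns $(P_A R_1,\,P_B R_2)$ with $R_1:=S_1S'_1$, $R_2:=S_2S'_2$, and $R_1R_2=(S_1S_2)(S'_1S'_2)\mid g_Dg'_D=P_D$.

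Granting this, the induction is a case analysis on which branch of \cref{algo:split} fires. In \textbf{Step~1}, \cref{lem:base-case-split} produces $P(\alpha,Y)=(Y-\beta)^m\widehat P(Y)$ with $\widehat P(\beta)\neq 0$ deterministically. If $\deg_Y P(\alpha,Y)=0$, every $P_i(\alpha,Y)$ is a nonzero constant, so $[s]=D$, $P=P_D$, and $(1,1)$ is correct. If $m=\deg_Y P(\alpha,Y)$, then $P(\alpha,Y)=\gamma(Y-\beta)^m$, so each $P_i(\alpha,Y)$ divides it and is therefore either $\gamma_i(Y-\beta)^{m_i}$ with $m_i\ge 1$ (so $i\in A$) or a nonzero constant (so $i\in D$); thus $P=P_AP_D$ and $(P,1)$ is correct with $R_1=P_D$, $R_2=1$. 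Symmetrically, $m=0$ forces $P(\alpha,\beta)\neq 0$, hence each $P_i(\alpha,\beta)\neq 0$, so $P=P_BP_D$ and $(1,P)$ is correct. Otherwise $0<m<\deg_Y P(\alpha,Y)$, so $g_0=(Y-\beta)^m$ and $h_0=\widehat P$ are coprime of positive $Y$-degree with $P\equiv g_0h_0\bmod\langle X-\alpha\rangle$, and \cref{lem:HL} (with the degree control of \cref{lem:hensel-degree-bounds}) produces the lifts $g_t,h_t$. Now in \textbf{Step~3}, if a linear system is solvable, \cref{clm:nontrivial-gcd} makes $\rho:=\gcd(P,V)$ (resp.\ $\sigma:=\gcd(P,F)$) a $Y$-factor of $P$ of $Y$-degree at least $1$, while the imposed constraint $\deg_Y V\le\deg_Y P-1$ forces $\deg_Y\rho\le\deg_Y P-1$; hence $\rho$ and $P/\rho$ both have $Y$-degree in $[1,\deg_Y P-1]$, the induction hypothesis applies to them, and correctness follows from the \CombSplits\ analysis above. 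Finally, in \textbf{Step~4} both systems fail; but reaching Step~2 already rules out $P=P_AP_D$ and $P=P_BP_D$ (these would force $m=\deg_Y P(\alpha,Y)$, resp.\ $m=0$, both handled inside Step~1), so \cref{clm:linear-system-no-sol} gives $P=P_CP_D$, whence $P_A=P_B=1$ and $(1,1)$ is correct.

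Termination is immediate from the same analysis: each \Split\ call either returns inside Step~1 or Step~4, or (Step~3) recurses through \CombSplits\ on two polynomials of strictly smaller $Y$-degree whose $Y$-degrees sum to $\deg_Y P$. Hence the recursion forms a finite binary tree of depth at most $\deg_Y(P)$, so the procedure halts; each node does only $\poly(\deg P,\log|\F|)$ work — Hensel lifting via \cref{lem:HL}, solving one linear system of bounded size, and a constant number of gcd and exact-division steps — which, anticipating the running-time subsection, keeps the whole procedure polynomial.

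I expect the main obstacle to be the bookkeeping forced by non-monicity, i.e.\ the presence of the $D$-part. One must check that Step~1's three fast exits capture \emph{exactly} the configurations $P\in\{P_D,\ P_AP_D,\ P_BP_D\}$ with nothing "in between" slipping through, that the leftover factors $R_1,R_2$ accumulated by repeated \CombSplits\ calls always remain divisors of $P_D$, and that the gcd of $P$ with a solution of the lifted system is a \emph{proper} $Y$-factor of $P$, so that the recursion is well-founded. Carrying the multiplicativity of all four parts $A,B,C,D$ (not merely $A$ and $B$) through the induction is precisely what makes these checks go through; the remaining ingredients — the Hensel degree bounds feeding the choices of $\widehat D$ and $2^t$, together with \cref{clm:nontrivial-gcd,clm:linear-system-no-sol} — are proved separately below and are used here as black boxes.
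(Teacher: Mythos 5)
Your proposal is correct and follows essentially the same strategy as the paper's proof: induction on $\deg_Y P$, a case analysis on which branch of the algorithm fires, and use of \cref{clm:nontrivial-gcd} and \cref{clm:linear-system-no-sol} as black boxes. You are somewhat more explicit than the paper in two helpful places — spelling out the multiplicativity of the $A/B/C/D$ decomposition to justify \CombSplits, and using the degree constraint on $V$ to ensure that $\rho=\gcd(P,V)$ is a \emph{proper} factor so the recursion is well-founded — but the underlying argument is the same. (The paper's version of this lemma also separately verifies the \emph{moreover} clause of \cref{thm:split}, but that clause is not part of the statement you were given.)
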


\begin{proof}

We proceed by induction on $\deg_Y P$.

\textit{Base case.} If $\deg_Y P = 1$, then $P$ must equal one of $P_A,P_B, P_D$, and the algorithm’s initial checks clearly return the correct classification.

\textit{Inductive step.} Assume the algorithm is correct for all polynomials of $Y$-degree at most $d-1$. Let $P$ have $\deg_Y P = d>1$.

If $P=P_AP_D$ or $P=P_BP_D$, correctness is immediate. Otherwise, both $g_0$ and $h_0$ from Step~1 have positive $Y$-degree, and the algorithm proceeds to the Hensel lifting step to obtain $g_t,h_t$ and then further proceeds to solving the linear systems. 

If either linear system admits a solution, then by \cref{clm:nontrivial-gcd}, the resulting GCD has positive $Y$-degree, producing a smaller subproblem on which the inductive hypothesis applies.  If neither system has a solution, then by \cref{clm:linear-system-no-sol}, $P=P_CP_D$, and the algorithm correctly returns $(1,1)$.

We now prove the correctness of the algorithm if $P$ is \emph{stable}. Since $P$ is stable, $\deg(P(\alpha,Y))\geq 1$. Suppose after invoking \cref{lem:base-case-split} we obtain the factorization $P(\alpha,Y)=(Y-\beta)^m\cdot \hat{P}(Y)$ such that $\hat{P}(\beta)\neq 0$. Now if $P(\alpha,Y)=\gamma (Y-\beta)^r$ for some $\gamma\in\F\setminus\{0\}$ and positive integer $r$, then, $r=m$ and $\hat{P}(Y)=\gamma$. Therefore, $\deg(P(\alpha,Y))=m$ hence the algorithm returns $(P,1)$ in Step~1. If $P(\alpha,\beta)\neq 0$, $m=0$. Therefore $\deg(P(\alpha,Y))=\deg(\hat{P})$. Therefore, the algorithm returns $(1,P)$ in Step~1.

This completes the inductive argument, and hence the proof of correctness.
\end{proof}

We now prove \cref{clm:nontrivial-gcd,clm:linear-system-no-sol}.

\begin{proof}[Proof of \cref{clm:nontrivial-gcd}]
Consider the resultant $\Gamma(X):=\Res_Y(P,V)$. We have $\deg \Gamma \le 2\deg_X(P)^2$. Since $t=2\lceil\log(\deg_Y P)\rceil+1$, we have $2^t > \deg \Gamma$. From the property of resultants, there exist $a,b\in\F[X,Y]$ with
\[
  a\cdot P + b\cdot V = \Gamma(X).
\]
Reducing modulo $\langle X-\alpha\rangle^{2^t}$ and using $P \equiv g_t \cdot h_t \bmod \langle X-\alpha\rangle^{2^t}$ and $V \equiv U \cdot h_t \bmod \langle X-\alpha\rangle^{2^t}$ yields
\[
  \left(a \cdot g_t + b \cdot U\right) \cdot h_t \equiv \Gamma \bmod{\langle X-\alpha\rangle^{2^t}}.
\]
Viewing both sides as polynomials in $(X-\alpha)$ with coefficients in $\F[Y]$, the constant term of $h_t$ equals $h_0(Y)$, which has $\deg_Y\ge 1$. However, the right hand side, namely $\Gamma = \Res_Y(P,V)$ has no dependence on $Y$. Hence, it must be the case that $a g_t + b U$ is divisible by $(X-\alpha)^{2^t}$, forcing $(X-\alpha)^{2^t}\mid \Gamma$. But $\deg \Gamma < 2^t$, so this implies $\Gamma\equiv 0$, i.e., $P$ and $V$ have a nontrivial common factor in $\F(X)[Y]$ (By \cref{lem:res}). By Gauss’s lemma (see \cref{lem:gauss}), they have a nontrivial GCD in $\F[X,Y]$ of positive $Y$-degree. The case of $E,g_t,F$ is analogous.
\end{proof}

\begin{proof}[Proof of \cref{clm:linear-system-no-sol}]
For brevity, assume wlog $\alpha=0$ (so the ideal is $\langle X\rangle$). By assumption, $P\ne P_AP_D$ and $P\ne P_BP_D$. We prove the contrapositive: assume $P\ne P_CP_D$ and show that at least one of the two linear systems has a solution.

Since $P=P_A P_B P_CP_D$ and $P\ne P_CP_D$, at least two of $P_A,P_B,P_C$ have positive $Y$-degree.

\emph{Case I:} $\deg_Y P_A \ge 1$ and $\deg_Y(P_B P_C)\ge 1$. Reducing modulo $\langle X\rangle$,
\[
  P_B(X,Y) P_C(X,Y)P_D(X,Y) \equiv (Y-\beta)^m \hat h_0(Y) \pmod{\langle X\rangle}
\]
with $\hat h_0(\beta)\ne 0$ and $m\ge 1$. Here $\hat h_0 = h_0$ (since $P_A$ contributes only powers of $(Y-\beta)$). Let $\hat g_0=(Y-\beta)^m$. Hensel lifting gives lifts $\hat g_t,\hat h_t$ with
\[
  P_B P_CP_D \equiv \hat g_t \hat h_t \pmod{\langle X\rangle^{2^t}},\qquad
  \hat g_t \equiv \hat g_0,\ \hat h_t \equiv \hat h_0 \pmod{\langle X\rangle}.
\]
Hence
\[
  P \equiv \big(P_A \hat g_t\big)\,\hat h_t \pmod{\langle X\rangle^{2^t}}.
\]
Both this factorization and $P \equiv g_t h_t \pmod{\langle X\rangle^{2^t}}$ lift $P(0,Y)=g_0(Y)h_0(Y)$. Since $\hat h_t \equiv h_0 \pmod{\langle X\rangle}$, comparison modulo $\langle X\rangle$ yields $P_A(0,Y)\,\hat g_0(Y) = g_0(Y)$. Using \cref{lem:hensel-degree-bounds} (degree and uniqueness parts), there exist $u_1,\dots,u_t$ with controlled total degrees and vanishing modulo the appropriate powers of $X$ such that
\[
  \hat h_t \equiv h_t \prod_{j<t} (1+u_j) \pmod{\langle X\rangle^{2^t}}.
\]
Set
\[
  U := \hat g_t \left(\prod_{j<t} (1+u_j)\right) \bmod \langle X\rangle^{2^t},\qquad
  V := P_B P_C.
\]
Then $U\cdot h_t \equiv V \pmod{\langle X\rangle^{2^t}}$, with $\deg_X U \le 2^t-1$, $\deg_X V \le \deg_X P$, and (by the chosen parameter bounds) $\deg_Y U \le \widehat D$, $\deg_Y V < \deg_Y P$. Thus the first linear system has a valid solution.

\emph{Case II:} $\deg_Y P_B \ge 1$ and $\deg_Y(P_A P_C)\ge 1$. An entirely analogous argument (now producing a solution to $E\cdot g_t \equiv F \pmod{\langle X\rangle^{2^t}}$) yields a solution to the second system.

In either case, at least one system has a solution. Therefore, if neither system has a solution, necessarily $P=P_CP_D$.
\end{proof}

\subsection{Running time analysis}
The algorithm described above is clearly deterministic. We now analyze its running time.

The computations  in Step~0 and Step~1 clearly take at most $\poly(\deg(P),\log|\F|)$ time.  
From \cref{lem:HL}, each Hensel lifting step, that is, computing the lift from $(g_{i-1},h_{i-1})$ to $(g_i,h_i)$, requires $\poly(\deg(P),\log|\F|)$ time and this is performed for at most $t = \Theta(\log \deg_Y P)$ rounds.  
In Step~3, the systems of linear equations each have size polynomial in $\deg(P)$, and we solve $O(\deg(P))$ such systems.  
Hence this step also runs in $\poly(\deg(P),\log|\F|)$ time overall.

We now focus on the recursive structure of the algorithm.  
The key observation is that at most two recursive calls are made before termination.  
Each call is made with $P$ replaced by polynomials of the form $g$ and $P/g$, where $g$ is a nontrivial factor of $P$ whose $Y$-degree is at least one and strictly smaller than $\deg_Y(P)$.  
Assuming this, the total running time satisfies the recurrence
\[
  T(d) \;\le\; \max_{r\in\{1,\dots,d-1\}}\big\{\,T(r)+T(d-r)+\poly(d,\log|\F|)\,\big\}.
\]
This immediately implies $T(d)=\poly(d,\log|\F|)$.

To justify the degree constraint, observe first that it holds for the pairs $(\rho,P/\rho)$ and $(\sigma,P/\sigma)$ arising from Step~3, since $\deg_Y(\rho)\le\deg_Y(V)$ and $\deg_Y(V)<\deg_Y(P)$ by \cref{clm:nontrivial-gcd}, and likewise for $\sigma$.

Therefore, every recursive call reduces the $Y$-degree strictly, and each level of recursion incurs only polynomial work.  
This proves that the total running time of the algorithm is $\poly(\deg(P),\log|\F|)$.

  

\subsection*{Acknowledgements}
We thank Daniel Augot and Romain Lebreton for pointing out the reference \cite{AugotP2000}, which proves a derandomization of Sudan's algorithm for Reed-Solomon code (and their algebraic geometric variants). We were unaware of this reference at the time of the initial posting of our work.

Mrinal thanks Swastik Kopparty for many helpful discussions on the questions and themes studied in this paper and much encouragement.

\addcontentsline{toc}{section}{References}
{\small
  \bibliographystyle{prahladhurl}
  \bibliography{deterministicGS-bib}
}

\end{document}